\documentclass[11pt]{article}
\usepackage[utf8]{inputenc}
\usepackage{amsmath}
\usepackage{physics}
\usepackage{graphicx}
\usepackage{mathtools}
\usepackage{amssymb}
\usepackage{multicol}
\usepackage{empheq}
\usepackage{bm}

\usepackage{amsthm}
\usepackage{float}
\usepackage{hyperref}

\DeclareMathOperator{\poly}{\mathsf{poly}}
\DeclareMathOperator{\gen}{\mathsf{gen}}
\DeclareMathOperator{\itm}{\mathsf{item}}
\newcommand{\nlp}{\mathsf{NLP}}
\newcommand{\weight}{\mathsf{weight}}
\newcommand{\Left}{\mathsf{left}}
\newcommand{\Right}{\mathsf{right}}
\usepackage{tikz}
\textheight 9in
\setlength\textwidth{6.5in}
\topmargin -0.4in
\setlength\oddsidemargin{-0.2in}
\setlength\evensidemargin{0in}
\setlength\parskip{0.0in} 

\newcommand{\opt}{\mathsf{OPT}}
\newcommand{\eps}{\epsilon}
\newcommand{\hold}{\mathsf{hold}}
\newcommand{\ord}{\mathsf{ord}}
\newcommand{\low}{\mathsf{low}}

\newcommand{\prev}{\mathsf{prev}}
\newcommand{\semitight}{\mathsf{semitight}}
\newcommand{\fracc}{\mathsf{frac}}

\DeclareMathOperator{\init}{\mathsf{init}}

\usepackage{algorithm}
\usepackage{algpseudocode}

\newcommand{\sol}{\mathsf{sol}}
\newcommand{\tight}{\mathsf{tight}}

\newcommand{\rjrpd}{\mathsf{rJRP}\text{-}\mathsf{D}}
\newcommand{\rjrp}{\mathsf{rJRP}}
\newcommand{\cjrpd}{\mathsf{CJRP}\text{-}\mathsf{D}}
\newcommand{\cjrp}{\mathsf{CJRP}}
\newcommand{\jrp}{\mathsf{JRP}}
\newcommand{\jrpd}{\mathsf{JRP}\text{-}\mathsf{D}}
\newcommand{\JRP}{\mathsf{JRP}}

\newtheorem{theorem}{Theorem}

\newtheorem{definition}[theorem]{Definition}
\newtheorem{lemma}[theorem]{Lemma}

\newtheorem{corollary}[theorem]{Corollary}

\newcommand{\lp}{\mathsf{LP}}
\newcommand{\LP}{\mathsf{LP}}

\tikzset{every picture/.style={line width=0.75pt}} 

\author{
  Varun Suriyanarayana\thanks{Cornell University}\\
  \texttt{vs478@cornell.edu}
  \and
  Varun Sivashankar\thanks{Microsoft Research, India}\\
  \texttt{varunsiva@ucla.edu}
  \and
  Siddharth Gollapudi\footnotemark[2]\\
  \texttt{sgollapu@berkeley.edu}
  \and
  David Shmoys\footnotemark[1]\\
  \texttt{david.shmoys@cornell.edu}
}
\title{Improved Approximation Algorithms\\ for the Joint Replenishment Problem with Outliers,\\ and with Fairness Constraints}

\begin{document}

\maketitle
\begin{abstract}
The {\it joint replenishment problem} ($\jrp$) is a classical inventory management problem. 
We consider a natural generalization \textit{with outliers}, 
where we are allowed to reject (that is, not service) a subset of demand points. In this paper, we are motivated by issues of {\it fairness} - if we do not serve all of the demands, we wish to ``spread out the pain'' in a balanced way among customers, communities, or any specified market segmentation.
One approach is to constrain the rejections allowed, and to have separate bounds for each given customer. In our most general setting, we consider a set of $C$ features, where each demand point has an associated rejection cost for each feature, and we have a given bound on the allowed rejection cost incurred in total for each feature. This generalizes a model of fairness introduced in earlier work on the Colorful $k-$Center problem in which (analogously) each demand point has a given color, and we bound the number of rejections of each color class. In the $\jrp$, we seek to balance the cost incurred by a fixed ordering overhead with the cost of maintaining on-hand inventory over a longer period in advance of when it is needed. More precisely, there a given set of item types, for which there is specified demand over a finite, discrete-time horizon, and placing any order at a given time incurs a general ordering cost and item-specific ordering costs (independent of the total demand serviced); in addition, for each unit of demand held in inventory for an interval of time, there is a corresponding item-specific holding cost incurred; the aim is to minimize the total cost.

We give the first constant approximation algorithms for the fairness-constrained $\jrp$ with a constant number of features; specifically, we give a $2.86$-approximation algorithm in this case. Even for the special case in which we bound the total (weighted) number of outliers, this performance guarantee improves upon bounds previously known for this case. Our approach is an LP-based algorithm that splits the instance into two subinstances. One is solved by a novel iterative rounding approach and the other by pipage-based rounding. The standard LP relaxation has an unbounded integrality gap, and hence another key element of our algorithm is to strengthen the relaxation by correctly guessing key attributes of the optimal solution, which are sufficiently concise, so that we can enumerate over all possible guesses in polynomial time - albeit exponential in $C$, the number of features.

\end{abstract}
\newpage
\section{Introduction}
In the well-studied \textit{Joint Replenishment Problem} ($\jrp$), we are given a discrete time horizon $[T] = \{1,\ldots,T\}$, a set of item types $[N] = \{1,\ldots,N\}$, and a set of demand points $D$. Each demand point $d = (i,t)$ has an item type $i \in [N]$ and a deadline $t \in [T]$. The goal is to generate a set of {\it replenishment orders} that serve these demands and minimize total cost. The cost of each replenishment order placed at a certain timestep is $K_0+\sum_{i \in S} K_i$, where $S$ is the set of item types included in the replenishment order. Note that the problem is uncapacitated: a single replenishment order incurs a cost of $K_i$ to serve any number of demand points of item type $i$. Additionally, each demand point has a holding cost $H^{it}_s$ that is a non-decreasing function of the lag between the time $s$ at which the corresponding order is placed and the demand point's deadline $t$. An important special case is \textit{Joint Replenishment Problem with Deadlines}, denoted $\jrpd$, in which $H^{it}_s \in \{0,\infty \}$ for all $(i,t) \in D, s \in [t]$. 

In practice, not all demands can be met by the supplier, and so a choice must be made as to which demands are served. Consequently, we wish to introduce notions of {\it fairness} to ensure that no group of customers is disproportionately affected by these unserved, rejected demands.
We propose a model in which there is a set of $C$ features; for each feature, each demand point has a feature-specific rejection cost and there is a given bound on the total allowed rejection cost that may be incurred. 
This framework is extremely powerful in capturing a range of settings: for example, this weight might be the cost of obtaining the demand by special order from another vendor. Our model of fairness generalizes one introduced by 
Bandyapadhyay et al.\cite{Bandyapadhyay0P19}
for the $k-$center problem in which (analogously) each demand point has a given color, and the number of rejections of each color class is bounded. In fact, for notational simplicity, we will present our results for (a weighted version of) this case; our techniques directly extend to the more general setting. 
More formally,
each demand $d$ has a given color $c \in [C]$; if it is not served, there is an associated non-negative weight $w_d^c$. The goal is to serve a subset of requests to minimize the classical $\jrp$ objective of total service cost, while ensuring that the total weight of demands not serviced is at most $R_c$ for each $c \in [C]$. We can also incorporate into the objective function a penalty $p_d$ for each demand $d$ rejected. We shall refer to this problem as $\cjrp$.
When there is one color and rejection penalties are all 0, this is called the {\it $\JRP$ with outliers}, denoted $\rjrp$. 

Addressing these fairness constraints require new approximation algorithm techniques that are sufficiently flexible to adapt to these requirements; we give the first constant approximation algorithms for $\cjrp$. Our algorithms build upon previous rounding techniques to first partition the input; one part is then addressed by a sophisticated use of pipage rounding \cite{AgeevS04}, whereas the other relies on iterative LP-rounding, leveraging structure gained by assuming an extreme point for each iteration.
Specifically, we give a deterministic $(2.86+\epsilon)$-approximation algorithm for $\cjrp$ with runtime
$\poly((NT)^{\frac{C^4}{\eps}})$; for $\rjrp$, earlier work of Chekuri et al.\cite{ChekuriQZ2019} implies a bound of 4.42, so even in this special case, our work yields a substantial improvement.
Although our results are LP-based, the natural relaxation has an unbounded integrality gap; we show that a constant amount of side information about the optimal solution yields  this improvement. 
This exponential dependence on $C$ is necessary, $\cjrp$ is provably a generalization of set cover with $C$ elements and therefore, if $P \neq NP$, we cannot obtain an approximation factor better than $O(\log C)$ in $\poly(NTC)$ time.


\begin{center}
\begin{tabular}{ |c|c|c|c|c|c|c|c| } 
 \hline
 Setting & $\jrpd$ & $\jrp$ & $\jrp$ with penalties & $\rjrpd$ & $\rjrp$ & $\cjrpd$ & $\cjrp$\\  \hline
 Prior work & 1.574 & 1.791 & 2.54 & 4.07 & 4.42 & $O(\log{C})$ * & $O(\log{C})$ *\\ \hline
 Our results & -- & -- & -- & $2.8+\epsilon$ & $2.86+\epsilon$ & $2.8+\epsilon$ & $2.86+\epsilon$\\ 
 \hline
\end{tabular}
\end{center}

Much is known about the classical $\jrp$. 
{Nonner and Souza}\cite{NonnerS09} prove that even $\jrpd$ is APX-hard and {Bienkowski et al.}\cite{BienkowskiBCDNSSY15} prove that
remains true when demand points have identical holding cost functions.
They also prove that the canonical LP relaxation for $\jrpd$ has an integrality gap of 1.245, but no such lower bound on the best approximation factor is known. From an approximation algorithm perspective, {Nonner and Souza}\cite{NonnerS09} give a pair of random shift LP rounding algorithms that combine to give a $5/3$-approximation for $\jrpd$. {Bienkowski et al.}\cite{BienkowskiBCDNSSY15} propose a better probability distribution for one of these, giving a $1.57$-approximation. With arbitrary holding costs, {Levi et al.}\cite{LeviRS06} provide a $1.8$-approximation which combines two random shift methods. {Bienkowski et al.}\cite{BienkowskiBCDNSSY15} improve this to $1.791$ by combining these methods with a third approach in which they randomly scale up the LP solution and then convert the instance into a $\jrpd$ instance which they solve using the aforementioned $1.57$-approximation.

Although $\rjrp$ never been studied explicitly, Charikar et al.\cite{CharikarKMN01}gave constant approximation algorithms for several outlier selection problems.
Krishnawamy et al.\cite{KrishnaswamyLS18} give a constant approximation for $k$-median with outliers and recently, $2$-approximation algorithms for $k$-center with outliers have been found (Chakraborty et al.\cite{ChakrabartyGK20} and Harris et al.\cite{HarrisPST19}). For any class of set cover problems that admit a $\beta$-approximation algorithm, Inamdar and Varadrajan\cite{InamdarV18} prove a $2\beta+2$-approximation for the outliers version and Chekuri et al.\cite{ChekuriQZ2019} improve this to $\frac{\beta + 1}{1-\frac{1}{e}}$, which would correspond to a $4.42$-approximation for $\rjrp$ and arbitrary holding costs and $4.07$-approximation for $\rjrpd$.

Fairness in algorithms has become an increasingly active area of research; for the colorful $k$-center problem,
Anegg et al.\cite{AneggAKZ20} showed that
 no constant approximation was possible under the exponential time hypothesis. Recently, Jia et al.\cite{JiaSS20} provided a 3-approximation with runtime exponential in $C^2$ and Anegg et al.\cite{AneggAKZ20} gave a 4-approximation with runtime exponential in $C$. Chekuri et al.\cite{ChekuriQZ2019} prove that for $C$ coverage constraints, they can provide a $O( \beta \log C)$-approximation but in $\poly(NTC)$.

As in all of the aforementioned problems with outliers, the canonical LP with extra rejection variables and color-wise rejection limits has an unbounded integrality gap. To overcome this, we will strengthen the LP relaxation by providing a limited amount of side information about the optimal solution. Our algorithm and its analysis will depend on a constant parameter $\epsilon$ that determines the amount of side information used in the formulation. This side information is chosen in a way that makes it feasible to enumerate over all possible values. In this case, we shall say that we ``guess'' the correct value. Specifically, we guess a large $\epsilon-$dependent constant number of timesteps with replenishment orders, the most expensive item-type orders and the most expensive holding costs associated with demand points.

This strengthened LP is then used to decompose the input into two subproblems (which we call Instance 1 and Instance 2). The key element is that the fractional solution can be used to extract a collection of order points at which we commit to placing an order (and use the LP to bound their cost), but these serve only a subset of the demands points (even fractionally); this is Instance 1. The remainder of the demand points constitute Instance 2. The resulting structure of these two instances is different, and we exploit this in designing the separate approximation algorithms for them; the final solution combines the two solutions obtained. This approach also obtains the same approximation factor if the fixed ordering costs depend on time provided we guess some of the most expensive fixed orders the optimal solution will place. 

\section{Problem Formulation}
In this section, we first formally describe the joint replenishment problem and its robust and colorful variants. We then present an intuition-building sample instance. 

\paragraph{JRP:} In the Joint Replenishment Problem, we are given a set of item types $[N] = 1,\ldots,N$, timesteps $[T] = 1,\ldots,T$, a general ordering cost $K_0 \geq 0$ and item ordering costs $K_i \geq 0$ for each item type $i$. We are also given a set of demand points $D$. Each demand point has an item type and a deadline. We will often refer to demand points as $d \in D$ or $(i,t) \in D$ where $i$ is the item type and $t$ is the deadline. Every demand point must be ``serviced" by its deadline. However, each demand point $(i,t)$ also has a holding cost function $H^{it}_s$ which is the holding cost associated with servicing $(i,t)$ at time $s \leq t$. The holding cost function is monotone and decreases as $s$ increases (i.e., the holding cost becomes smaller the closer to deadline the demand point is serviced. Additionally, \textbf{Joint Replenishment Problem with Deadlines}, denoted $\jrpd$ is the term used to describe the special case in which $H^{it}_s \in \{0,\infty \}$ for all $(i,t) \in D, s \in [t]$

Demand points are serviced by ``replenishment orders". Every replenishment order $O$ is placed at some timestep $s$. The holding costs incurred by any demand point $(i,t)$ served by $O$ is $H^{it}_s$ The replenishment order $O$ incurs two types of cost: a fixed order opening cost $K_0$ and extra (item ordering) costs $K_i$ for every item type $i$ such that a demand point of type $i$ is serviced by order $O$. Note that replenishment orders are uncapacitated in the sense that the cost incurred is the same whether only a single demand of item type $i$ is served or many demands points of item type $i$ are served by the same order. 

Observe that there is always at most one replenishment order in any given timestep (if there were multiple orders at a timestep, they could be merged into a single order, reducing the total fixed order opening cost incurred but not increasing the item ordering costs or the holding costs.)

Since every demand point must be serviced, if there are multiple demands of the same item type with the same deadline, they can be treated as a single demand point and the holding cost of this merged demand point would be the sum of the holding costs of its components.

\paragraph{Robust JRP:} One of the problems we consider is $\rjrp$, where each demand point $d$ will have a weight $w_d$ and we will have a rejection weight limit $R$. In this variant, we are no longer required to serve every demand point (we will think of these demand points as being ``rejected"). However, the total weight of rejected demand points is at most $R$.

The idea of merging demand points breaks down when we are allowed to reject some demand points (i.e. we are allowed to not service some demand points at all). Nevertheless, if there is a timestep $t$ with $k$ demand points of item type $i$ having deadline $t$, we can simply split the timestep $t$ into $k$ timesteps $t_1,t_2,\ldots,t_k$, each being the deadline of one of the $k$ demand points of item type $i$. All demands of other types with deadline $t$ will have deadline $t_k$. The holding cost associated with serving any demand point at $t_j$ will be the same as the holding cost of serving it at $t$. This increases the length of our time horizon by at most a factor of $|D|$, which is clearly polynomial in the size of the input instance. This concept of splitting timesteps will reincarnate itself in a different manner when setting up one of our rounding algorithms. 

\paragraph{Assumption:} The above discussion on splitting timesteps allows us to, without loss of generality, simplify notation (and make notation consistent with past work on Joint Replenishment), assume that for every $(i,t) \in [N] \times [T]$, there is at most one demand point of item type $i$ with deadline $t$. Note that several $(i,t)$ pairs may not correspond to any demand point. 

\paragraph{Colorful JRP:} We will also generalize our ideas to what we call $\cjrp$, where instead of having a single rejection limit, we have multiple colors $[C] = 1,\ldots,C$. In this variant, each demand point will also have a color. We do not assume any correlation between the item type and the color of a demand point. Instead of a single rejection weight limit as in $\rjrp$, we now have rejection weight limits $R_c$ for each color $c$. We will think of $w^c_{d}$ as being $0$ if $d \in D$ is not of color $c$. If $d$ is of color $c$ then $w^c_{d}$ will be the weight of the demand point. It will turn out that the assumption that each demand point has weight in only a single color is unnecessary. Furthermore, in $\cjrp$, we will allow each demand point $d$ to also have a rejection penalty $p_d$. In addition to the ordering and holding costs, we will also have a rejection penalty cost which is the sum of the penalties corresponding to rejected demand points.

\subsubsection*{ILP Formulation}

We think of a replenishment order as having 2 parts: a general order and several item orders. In the formulation below, $y_s \in \{0,1\}$ indicates whether a replenishment order at time $s$ is placed (the \textit{general order}), $y^i_s \in \{0,1\}$ indicate whether the order at time $s$ includes item $i$ (an \textit{item order} of item $i$) and $x^{it}_s \in \{0,1\}$ indicate whether demand $(i,t)$ is served at time $s$ and let $r_{it} \in \{0,1\}$ indicate whether the demand $(i,t)$ is rejected. In the LP relaxation, the constraints that variables must be $\{0,1\}$ are replaced with non-negativity constraints. None of the variables need ever exceed $1$ in an optimal LP solution.

\begin{empheq}{align*}
\text{minimize}\quad & \sum_{s=1}^T y_s K_0 + \sum_{i=1}^N \sum_{s = 1}^T y^i_s K_i + \sum_{(i,t) \in D} \sum_{s=1}^T H_s^{it} x_s^{it} + \sum_{(i,t) \in D} p_{it}r_{it}\\
\text{subject to } & r_{it} + \sum_{s = 1}^t x_s^{it} \geq 1 \quad \text{for each } (i,t) \in D; &&\hspace{-0.8in}\text{{(Reject or service every demand)}}\\
& y^i_s \leq y_s \quad \text{for each } i \in [N], s \in [T]; &&\hspace{-0.8in} \text{{(Item order requires a general order)}}\\
& x_s^{it} \leq y_s^i \quad \text{for each } (i,t) \in D, s \in [T]; &&\hspace{-0.8in}\text{{(Servicing requires an item order)}}\\
& \sum_{(i,t)\in D} w^c_{it} r_{it} \leq R_c \quad \text{for each } c \in [C]; &&\hspace{-0.8in} \text{{(Rejection bound for each color)}}\\
&y_s,y_s^i,x^{it}_s,r_{it} \in \{0,1\} \quad \text{for each } s \in [T], i \in [N], (i,t) \in D 
\end{empheq}

In the LP relaxation, the constraints that variables must be assigned values in $\{0,1\}$ needs to be replaced with constraints ensuring that the variables are non-negative. None of the variables will ever exceed $1$ in an optimal solution. 

In a slight abuse of notation, we will often use $LP^{sol}$ to refer to the solution and its cost. Which meaning is applicable will be clear from context. Furthermore, we will use $LP^{sol}_{\gen}$ to refer to the general ordering cost of the solution $(\sum_{s=1}^T y_sK_0)$ and $LP^{sol}_{\itm}$ to refer to its item ordering cost $(\sum_{s=1}^T\sum_{i=1}^N y^i_s K_i)$ 

\subsubsection*{Intuition}
We now discuss the solution to a small instance of $\cjrpd$ to build some intuition for the problem.
\begin{figure}[h]
\includegraphics[width=12cm]{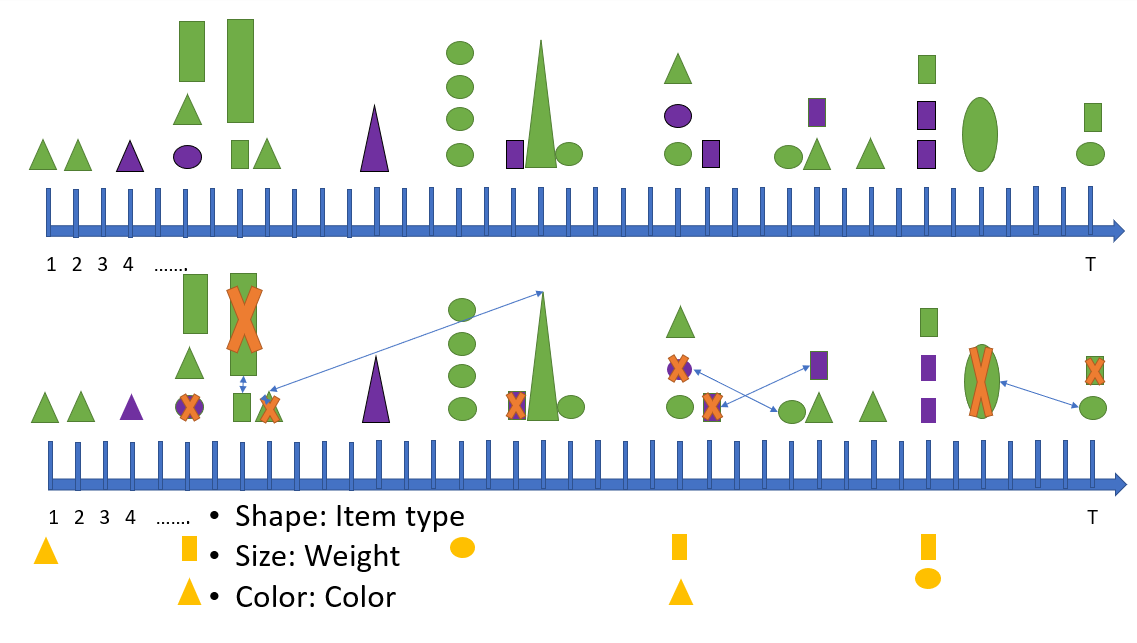}
\centering
\end{figure}

The figure above shows an example instance with 2 colors and 3 item types. The top half of the figure is the input instance, the bottom half is the returned solution, with rejections corresponding to red $X$ and the golden shapes corresponding to items included in a replenishment order. For example, at time 1 there is an order serving only triangles and the next order is at time 6 serving demands of both rectangle and triangle types. Every demand point that isn't rejected is serviced at the last possible replenishment order including that items' type.

At a superficial glance, there are some decisions that the optimal solutions makes (the pairs of demand points with thin blue lines connecting them) which may not seem intuitive. Possible explanations are:

\begin{enumerate}
    \item Why is the smaller triangle at timestep 9 rejected but the larger triangle with a much later deadline serviced by an order at timestep 6 serviced?

    The larger triangle has a much higher weight, so rejecting it uses up too much of the total weight of green rejections allowed.
    
    \item At timestep 8, why is the larger rectangle rejected but the smaller one served?

    The smaller rectangle has a far higher rejection penalty than the larger one, making it more expensive to reject.
    
    \item Why is the second last green circle rejected but the last green circle (which has a later deadline and a smaller weight) served?

    Holding costs are entirely demand point dependent. In this case, the second last green circles' holding cost was much higher than the little green demand points' holding cost at the time the last replenishment order is placed.

    This is the same reason the purple rectangle at around $2T/3$ is rejected but the purple rectangle after that is accepted.

    \item Why is the purple circle with deadline at the same time as the second last replenishment order rejected but the green one with a later deadline serviced?

    The green rejection budget has been fully used by other demand points that were rejected.
\end{enumerate}

\paragraph{A simple 2-approximation for $\jrpd$}: Solve a simplified LP in which we replace $x^{it}_s$ with $y^i_s$ if $H^{it}_s =0$, and with $0$ if $H^{it}_s=\infty$. Let $Z_t = \sum_{s \leq t} y_s$. We place a general order at the earliest timestep such that $Z_t$ is at least  $k$ for each $k \in \mathbb{N}$. For each item type, we also define $Z^i_t=\sum_{s \leq t} y^i_s$ and place tentative item orders at the earliest timestep such that $Z^i_t$ is at least $k$ for each  $k \in \mathbb{N}$. For each tentative item order, place an item order to coincide with the previous and the next general order. It is not hard to see that this algorithm yields a feasible solution of cost at most $\LP^{\sol}_{\gen}+2\LP^{\sol}_{\itm}$. Thus, given these general orders, we have $N$ decoupled single-item lot-sizing inputs, which can also be solved by dynamic programming\cite{WagnerW58}. The best known algorithms for $\jrpd$ extend this idea much further by randomizing the placement of general orders and making their spacing non-uniform. 

This basic idea is one component of our algorithm but is insufficient for $\cjrpd$; the approach can be modified to ensure that any single color's rejection limit is met in expectation, but this does not imply that we can find a solution of low cost that simultaneously satisfies each color's rejection limit. To achieve our goal, we will use the initial general orders from above to split the input instance into two smaller instances, each with special structure. We can solve each of these instances and combine the solutions to obtain $(3+\epsilon)$-approximation for $\cjrpd$. Some careful refinement allows us to improve this to $2.8+\epsilon$. The ideas that drive these improvements help us extend our algorithms to more general holding cost functions and achieve an almost as good bound of $2.86+\epsilon$.

The structure of $\jrp$ and the variants we study suggests that dynamic programming might play an important role here. Indeed, when either $K_0=0$ or there is only a single item type and $w^c_d \in \{0,1\}$ then $\rjrp$ can be solved exactly by dynamic programming. However, the utility of dynamic programming without these assumptions is not clear; the best approximation algorithms do not use dynamic programming at all and instead attempt to round an LP solution.

The two smaller instances our algorithm will construct will be solvable by dynamic programming, like the placement of item orders in the aforementioned $2-$approximation. However, because these instances were constructed depending on $\LP^{\sol}$ and the optimal solution to the original problem might not be splittable in the same manner, we need algorithms with guarantees in terms of $\LP^{\sol}$ for these instances, which the dynamic program doesn't provide. To overcome this, we devise two rounding algorithms, one for each instance. However, as we will see below, the rounding algorithms will only give the desired guarantees if the original LP solution incorporates some additional side information.

For the same reasons as in facility location with outliers\cite{CharikarKMN01}and other clustering problems with outliers (\cite{Bandyapadhyay0P19}, \cite{Chen08}), this LP relaxation has an unbounded integrality gap even if there is only a single item type and holding costs are $\{0,\infty\}$.

\begin{lemma}
The integrality gap of the formulation above is unbounded.
\end{lemma}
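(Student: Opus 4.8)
The plan is to exhibit, for each integer $n\ge 2$, a single-item instance with $\{0,\infty\}$ holding costs in which the LP optimum is $1/n$ while the optimal integral solution costs $1$; letting $n\to\infty$ then gives the claim. The construction is the $\jrp$ analogue of the standard integrality-gap instance for facility location (and other clustering problems) with outliers \cite{CharikarKMN01}: the relaxation may reject every demand by a tiny amount, which makes the coverage requirement essentially free, whereas any integral solution can reject only a bounded number of demands outright.

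Concretely, I would take $N=1$, $K_0=1$, $K_i=0$, horizon $T=n$, demand points $D=\{(1,1),\dots,(1,n)\}$ with $H^{1t}_s=0$ for all $s\le t$ (a legitimate monotone $\{0,\infty\}$-valued holding cost), rejection weights $w_{(1,t)}=1$, penalties $p_{(1,t)}=0$, a single color, and rejection bound $R=n-1$. For the integral side: since each demand has unit rejection weight and $R=n-1$, at least one demand must be served, forcing at least one order and hence cost at least $1$; conversely, rejecting $(1,1),\dots,(1,n-1)$ and serving $(1,n)$ by a single order at time $1$ is feasible and costs exactly $1$, so $\opt=1$. For the fractional side: set $x^{1t}_1=y^1_1=y_1=\tfrac1n$ for every $t$, $r_{(1,t)}=1-\tfrac1n$, and every other variable to $0$; a direct check of the coverage, precedence, and rejection constraints (the last of which is tight, $\sum_t w_{(1,t)}r_{(1,t)}=n(1-\tfrac1n)=R$) shows feasibility, with objective value $y_1K_0=\tfrac1n$. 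Hence the integrality gap is at least $\opt/\tfrac1n=n$.

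I do not expect a genuine obstacle, but the one point to handle with care is the choice of holding costs: the gap survives only because the fractional service of all $n$ demands can be \emph{stacked} at one common order time, so that the total general-ordering mass $\sum_s y_s$ is $\tfrac1n$ rather than $n\cdot\tfrac1n=1$; if instead each demand could only be served at its own deadline, the fractional orders could not overlap and the ratio would collapse to $1$. The instance is therefore built so that time $1$ is a feasible service time for every demand. Finally, since this instance uses a single color and zero penalties, it is simultaneously a valid instance of $\rjrpd$, $\cjrpd$, $\rjrp$, and $\cjrp$, so the unbounded integrality gap applies to every formulation considered here.
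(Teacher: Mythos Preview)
Your construction is correct and essentially identical to the paper's: a single item type with $K_0=1$, $K_1=0$, zero holding costs, $n$ demands (one per timestep), and rejection budget $n-1$, with the fractional solution placing a $1/n$ order at time $1$. The only differences are cosmetic (you write $n$ where the paper writes $T$, and you include a bit more verification and commentary).
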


\begin{proof}
Consider an instance with a single item type, $K_0=1, K_1=0, H^{it}_s=0$ for all $ s \leq t$. For each $t \in [T]$, let there be a demand point with deadline $t$. Let $R=T-1$. The optimal solution must open an item order and serve at least 1 demand point, so the optimal solution has cost 1. However, the solution $r_{it}=\frac{T-1}{T},y^1_1=y_1=\frac{1}{T}$ is clearly feasible and costs $\frac{1}{T}$. This implies that the integrality gap is $T$, which can be arbitrarily large.
\end{proof}

However, just like in those settings, there is a simple remedy; guess a $C-$dependent constant amount of side information concerning the most expensive components of the optimal solution and augment the LP relaxation with this information. The following theorem which is proven in the appendix suggests that the amount of side information must depend on $C$ if we hope to obtain a constant approximation. The following lemma is proven using the same idea as the aforementioned 2-approximation for $\jrpd$ with a touch of randomization.




\begin{lemma}
Suppose the optimal LP solution to the above formulation with one item type has $\sum_{s} y^i_s = Q$ (and consequently costs $Q(K_0+K_1)$, then there exists an integral solution with cost $\lceil Q \rceil$ (and costs $\lceil Q\rceil (K_0+K_1)$
\label{random_shift_lemma}
\end{lemma}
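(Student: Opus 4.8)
The plan is to start from an optimal LP solution with a single item type that uses $\sum_s y^1_s = Q$ units of ordering (so $y_s = y^1_s$ by complementary slackness, since having $y_s > y^1_s$ is wasteful when $N=1$), and to produce a random integral solution that opens exactly $\lceil Q \rceil$ orders, is always feasible for the rejection constraint, and has expected cost at most $\lceil Q \rceil (K_0 + K_1)$; a fixed realization then gives the claimed deterministic bound. Since holding costs may be general, I first argue we may assume $H^{it}_s \in \{0,\infty\}$: because there is one item, once we fix the set of order times, each demand is either servable at zero holding cost (if some order lies in its zero-cost window) or must be rejected; replacing $H$ by its $\{0,\infty\}$-truncation at the threshold determined by the LP only helps, since the LP pays the true holding costs. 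Actually the cleaner route is to note the LP cost is exactly $Q(K_0+K_1)$ with no holding-cost term, which forces $x^{1t}_s > 0$ only where $H^{1t}_s = 0$; so effectively we are in the deadline case.

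Next, I would use the random-shift rounding in the style of the ``simple $2$-approximation for $\jrpd$'' described earlier, but calibrated to open $\lceil Q \rceil$ rather than $2Q$ orders. Define the prefix sums $Z_t = \sum_{s \le t} y_s \in [0, Q]$. Draw a single uniform offset $\theta \in [0,1)$ and place an order at the earliest time $t$ at which $Z_t \ge \theta + k$, for $k = 0, 1, \dots, \lceil Q \rceil - 1$; this opens at most $\lceil Q \rceil$ orders. For each demand $(1,t)$, the LP sets $r_{1t} + \sum_{s \le t} x^{1t}_s \ge 1$ with $x^{1t}_s$ supported on the zero-holding-cost window $W_t = \{s \le t : H^{1t}_s = 0\}$; serve $(1,t)$ from the latest opened order lying in $W_t$ if one exists, otherwise reject it. The key probabilistic claim is that $\Pr[\text{$(1,t)$ is rejected}] \le r_{1t}$: the mass $\sum_{s \in W_t} x^{1t}_s \ge 1 - r_{1t}$ lies under $y$ restricted to $W_t$, and the shifted-grid construction hits every interval of $Z$-measure $1$, so the probability that no grid point falls in the window carrying that mass is at most $r_{1t}$. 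Consequently $\EE[\sum_t w^c_{1t} \mathbf{1}[\text{reject }(1,t)]] \le \sum_t w^c_{1t} r_{1t} \le R_c$ for every color — but we need the constraint to hold \emph{surely}, not in expectation. For the single rejection bound $R$ (which is what Lemma~\ref{random_shift_lemma} really needs, as it is stated for the one-item LP and used as a building block), I would instead observe that the rounding can be \emph{derandomized over the finitely many distinct offsets} $\theta$ — the behavior of the grid changes only at the at most $Q$ breakpoints where a grid point coincides with some $Z_t$ — and among these $O(Q)$ candidate solutions, pick one whose total rejection weight is at most the expectation $\le R$ and whose ordering cost is at most $\lceil Q\rceil(K_0+K_1)$. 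Since the ordering cost is identically $\lceil Q \rceil(K_0+K_1)$ for every offset (we always open $\lceil Q\rceil$ orders of one item), we only need to beat the rejection-weight expectation, which some offset must.

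The main obstacle I anticipate is the feasibility-in-expectation versus feasibility-surely gap: a naive random-shift argument gives $\EE[\text{rejected weight}] \le R$, not a guarantee for a single outcome, and one cannot simply condition without care because the rejection events across demands are positively correlated in a way that could make \emph{every} offset slightly violate $R$. The fix sketched above — that the grid-placement rule is a piecewise-constant function of $\theta$ with only $O(Q)$ pieces, so the expectation is an average over finitely many solutions and some solution meets the bound — is the crux and is exactly the ``touch of randomization'' the authors allude to; I would make sure the breakpoint-counting is tight and that ties (a grid point landing exactly on a $Z_t$) are broken consistently so each piece is genuinely a valid integral solution. The holding-cost reduction and the $\lceil \cdot \rceil$ bookkeeping are routine by comparison.
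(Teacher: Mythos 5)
Your proof is correct and takes essentially the same route as the paper's: both use a random shift $\lambda\in(0,1)$ on the prefix sums $Z_t$ of the single item's $y$-values, show that the expected rejected weight is at most $R$ while the ordering cost is always at most $\lceil Q\rceil(K_0+K_1)$, and conclude by a probabilistic-existence argument (which you make explicit as derandomization over the finitely many breakpoints of $\theta$). Your extra step reducing to the $\{0,\infty\}$ deadline case via the hypothesis that the LP cost is exactly $Q(K_0+K_1)$ is a small cleanup the paper elides, and your parenthetical that the ordering cost is \emph{identically} $\lceil Q\rceil(K_0+K_1)$ should read ``at most'' (some offsets open only $\lfloor Q\rfloor$ orders), but this does not affect the argument.
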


\begin{proof}
Let $Z_t=\sum_{s=1}^t y^i_s=\sum_{s=1}^t y_s$. (By our constraints, $y_s \geq y^i_s$. If $y_s > y^i_s$ for any timestep, then we can decrease $y_s$ to $y^i_s$ and still preserve feasibility.)
Uniformly at random, choose $\lambda \in (0,1)$. For all $k \in 0,\ldots,\lfloor Z_t \rfloor$, we open orders of item $i$ at the timestep in which $Z_t$ first exceeds $k+\lambda$. Depending on our choice of $\lambda$, there are either $\lfloor Q \rfloor$ or $\lceil Q \rceil$ many orders opened this way.

It can be argued that the probability a demand point with 0-holding cost interval $I$ is satisfied is $\min(\sum_{s \in I} y^i_s,1)$ (details in full version of the paper). Therefore, in expectation, the number of demand points rejected is $\sum_{d \in D} r_d \leq R$.

We know that with probability 1, the solution costs at most $\lceil Q \rceil$ and that with positive probability, it rejects at most $R$ demand points giving us the desired result.
\end{proof}

The following lemma shows that set cover with $C$ elements reduces to $\cjrpd$, implying that we cannot achieve a constant factor approximation with runtime $\poly(DTC)$ unless $P = NP$. This demonstrates the complexity of dealing with multiple colors.
\begin{lemma}
Given an instance of set cover with $C$ elements, we can construct an equivalent $\cjrp$ instance with 1 item type, $C$ colors and $0-\infty$ holding costs.
\end{lemma}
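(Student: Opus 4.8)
The plan is to give a reduction from the decision version of set cover. Let the set cover instance have universe $[C] = \{1, \dots, C\}$ and sets $S_1, \dots, S_M \subseteq [C]$, and suppose we want to know whether there is a subcover of size at most $k$. I would build a $\cjrp$ instance with a single item type, a time horizon of length $M$ (one timestep per set), $0$--$\infty$ holding costs, and $C$ colors --- one color per universe element. The general ordering cost will be $K_0 = 1$ and the item ordering cost $K_1 = 0$ (or some negligible value), so that the cost of the solution is essentially the number of timesteps at which an order is placed. Rejection penalties $p_d$ are set to $0$, so the only constraint from rejections is the per-color budget $R_c$.

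The key step is to encode ``set $S_j$ is chosen'' as ``an order is placed at timestep $j$.'' For each element $c \in [C]$ and each set $S_j$ \emph{not} containing $c$, I would place a demand point of color $c$ with a deadline and a $0$-holding-cost window designed so that it can only be served by an order at timestep $j$; concretely, since holding costs are $0$--$\infty$, I make its unique servicing opportunity be timestep $j$ (deadline $j$, holding cost $\infty$ at all earlier times). Each such demand point gets weight $1$ in its own color $c$ and weight $0$ in all other colors. Now set the rejection budget $R_c = (\text{number of sets not containing } c) - 1$ for each color $c$. Then color $c$'s constraint is satisfiable if and only if at least one of the sets containing $c$ has an order at its timestep --- i.e., the chosen sets form a cover. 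Placing orders at exactly the timesteps corresponding to a subcover of size $k$ yields a feasible solution of cost $k \cdot K_0 = k$; conversely any feasible solution of cost at most $k$ uses at most $k$ order timesteps, and feasibility of the color constraints forces those timesteps to index a subcover. Hence the $\cjrp$ instance has a solution of cost $\le k$ iff the set cover instance has a subcover of size $\le k$, and the construction is clearly polynomial in $M$ and $C$.

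A small subtlety I would be careful about: the earlier ``splitting timesteps'' assumption guarantees at most one demand point per $(i,t)$ pair, so I should make sure the demand points for different colors that want to be served at the same timestep $j$ get distinct (sub)timesteps via the splitting trick, which only blows up the horizon polynomially; alternatively I can just allocate a fresh block of timesteps per set and route the ``deadline'' of each demand point to the appropriate one. I would also double-check that making $K_1 = 0$ doesn't let a solution cheat by opening item orders without general orders --- but the constraint $y^i_s \le y_s$ rules that out, so an item can only be served at a timestep with a general order, which is exactly what we want.

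I expect the main obstacle to be purely bookkeeping rather than conceptual: making the holding-cost windows and the splitting of timesteps precise enough that ``demand point of color $c$ associated with set $S_j$'' is served \emph{exactly} when an order sits at $S_j$'s timestep, and verifying that the $R_c$ budgets translate correctly into the covering condition (off-by-one errors in ``number of sets not containing $c$ minus one'' are the likely pitfall). Once that encoding is nailed down, the equivalence of the two instances and the polynomial-time bound on the reduction are immediate, and the corollary about $\Omega(\log C)$ hardness in $\poly(NTC)$ time follows from the known inapproximability of set cover.
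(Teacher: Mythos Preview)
Your reduction has the right architecture --- one timestep per set, one color per element, $K_0=1$, $K_1=0$, singleton $0$-holding-cost windows --- but the membership condition is flipped, and that breaks the equivalence. You place a color-$c$ demand at timestep $j$ whenever $c \notin S_j$ and set $R_c$ to one less than the number of such $j$. Feasibility for color $c$ then forces at least one order at a timestep indexing a set that does \emph{not} contain $c$; it says nothing about whether $c$ is covered. Concretely, take $C=2$, $S_1=\{1\}$, $S_2=\{2\}$, $S_3=\{1,2\}$. In your instance the only color-$1$ demand is at timestep $2$ and the only color-$2$ demand is at timestep $1$, both with $R_c=0$, so every feasible solution must order at both timesteps $1$ and $2$; the optimal cost is $2$. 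Yet the set cover optimum is $1$ (just take $S_3$), so the instances are not equivalent.

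The fix is exactly what the paper does: place a color-$c$ demand at timestep $j$ whenever $c \in S_j$, and set $R_c$ to one less than the number of sets containing $c$. Then ``at least one color-$c$ demand is served'' is literally ``some chosen set contains $c$,'' and the cost equals the number of chosen sets. With that single change your argument goes through and coincides with the paper's proof. Your worry about multiple demand points sharing the same $(i,t)$ pair is handled by the timestep-splitting convention already established in the paper, so no extra work is needed there.
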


\begin{proof}
All demand points will incur holding cost 0 if they are served at their deadline and $\infty$ otherwise. Number the sets in the set cover instance $1, \ldots, T$. Number the elements of the set cover instance $1, \ldots, C$. For every element $c$ of set $t$, there exists a demand point of deadline $t$ and color $c$. $K_0=1$ and $K_i=0$ for the single item type. For every color, we are allowed to reject all but one demand of that color. Observe that every solution to the set cover problem corresponds to a set of timesteps in the $\jrp$ problem and placing replenishment orders in these timesteps leads to cost equal to the number of sets in the cover. Since every element $c$ has a set $t$ in the cover that contains it, the replenishment order at time $t$ serves a demand of color $c$. Similarly, every solution to the $\jrp$ problem corresponds to a set of timesteps and the collection of sets corresponding to these is a valid cover if the original solution is feasible.
\end{proof}

\section{$\rjrpd$: Deadlines and a Single Color}

We will present an algorithm for $\rjrpd$ that yields the following theorem. 


\begin{theorem}
\label{thm:jrpd_bound}
Given an instance of $\rjrpd$, then for any $\epsilon > 0$, we can obtain a solution of cost $(2+\epsilon)\LP^{\sol}_{\gen}+(3+\epsilon)\LP^{\sol}_{\itm} \leq (3+\epsilon) \opt$ with runtime $\poly((NT)^{\frac{1}{\eps}})$.
\end{theorem}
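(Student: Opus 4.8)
The plan is to first strengthen the LP by guessing a constant amount of side information about $\opt$, then use the strengthened solution to split the instance into two pieces that are each rounded separately, and finally combine the two solutions while certifying that the single rejection budget $R$ is respected. For the guessing step, I would enumerate (in time $\poly((NT)^{1/\eps})$) a set of roughly $1/\eps$ timesteps at which $\opt$ places general orders together with, for each item type, the $O(1/\eps)$ most expensive item orders and the $O(1/\eps)$ most expensive holding-cost assignments made by $\opt$; I then add constraints to the LP forcing $y_s = 1$ at those guessed timesteps, forcing the guessed item orders, and forbidding item orders / holding-cost assignments more expensive than the cheapest guessed ones. This kills the integrality-gap instance of the earlier Lemma because the small number of "heavy" components of $\opt$ are now hard-wired, and what remains is a fractional solution in which every remaining general order, item order and holding cost is cheap relative to $\opt$'s total; in particular each such cost is at most $\eps\cdot\opt$ (times a constant), which is what will absorb the $\eps$ in the final bound. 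Since the enumeration has $\poly((NT)^{1/\eps})$ choices, we can afford to run the rest of the algorithm on each guess and return the cheapest feasible solution.

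Given the LP solution $\LP^{\sol}$ to the strengthened relaxation, I follow the "$2$-approximation for $\jrpd$" idea from the excerpt to extract a set of \emph{committed} general orders: letting $Z_t = \sum_{s\le t} y_s$, place a general order at the first timestep where $Z_t$ crosses each integer threshold. This costs at most $\LP^{\sol}_{\gen}$ in general-order cost and partitions $[T]$ into blocks between consecutive committed orders. I would then define Instance 1 to consist of the demand points that are (fractionally) "mostly served" by the committed general orders in the sense that a constant fraction of their $x$-mass lies in blocks adjacent to a committed order; these can be served by item orders snapped to the two bracketing committed general orders, paying $2\LP^{\sol}_{\itm}$ on them, exactly as in the simple argument. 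Instance 2 is everything else — demand points whose service mass is spread out — and here the key structural gain is that, because the remaining $y$-mass between committed orders is less than $1$ in each block, these demands can be handled by the randomized/pipage rounding machinery: round the fractional placement of at most one additional general order per block together with its item orders, using pipage rounding (à la \cite{AgeevS04}) to preserve the LP cost in expectation while achieving a $(1+\eps)$-type blow-up; Lemma~\ref{random_shift_lemma} is the single-item prototype of the guarantee we want per item type. The $(3+\eps)$ in the item term comes from $2\LP^{\sol}_{\itm}$ on Instance 1 plus $(1+\eps)\LP^{\sol}_{\itm}$ on Instance 2 (charged to disjoint portions of the LP mass), and the $(2+\eps)$ on the general term from $\LP^{\sol}_{\gen}$ for the committed orders plus $(1+\eps)\LP^{\sol}_{\gen}$ for the at-most-one-extra-order-per-block.

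The main obstacle — and the reason the single-color case is still nontrivial — is controlling the rejection budget. The LP only guarantees $\sum_{(i,t)} w_{it} r_{it} \le R$ \emph{fractionally}, and the two rounding procedures each decide, independently, which demands to serve; naively one could blow past $R$. I would handle this in two stages: (i) on Instance 1, the commitment to item orders at bracketing general orders is deterministic, so the set of demands served there is fixed and we can simply charge the rest to rejection, bookkeeping the used-up budget exactly; (ii) on Instance 2, run the pipage rounding on the joint vector of order variables \emph{and} rejection variables, treating $\sum w_{it} r_{it} \le R$ as one more linear constraint that pipage preserves (pipage rounds one fractional coordinate to integrality at a time while maintaining all linear equalities/inequalities it is told to maintain and not increasing the expected objective), so the rounded Instance-2 solution respects the residual budget up to one "fractional" demand, which the guessing step (guess the at most one such heavy boundary demand, or argue its weight is $\le \eps R \le \eps\opt$) lets us absorb. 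Closing this gap cleanly — showing the two sub-solutions' rejection weights sum to at most $R$ with only an $\eps\opt$ additive slack that can be folded into the stated bound — is the crux; everything else is the standard snapping-and-charging bookkeeping. Finally, combining: total cost $\le \LP^{\sol}_{\gen} + (1+\eps)\LP^{\sol}_{\gen} + 2\LP^{\sol}_{\itm} + (1+\eps)\LP^{\sol}_{\itm} + (\text{holding}) + (\text{guessed heavy parts}) \le (2+\eps)\LP^{\sol}_{\gen} + (3+\eps)\LP^{\sol}_{\itm} \le (3+\eps)\opt$, since $\LP^{\sol} \le \opt$ and $\LP^{\sol}_{\gen} + \LP^{\sol}_{\itm} \le \LP^{\sol}$.
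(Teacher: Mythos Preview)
Your high-level architecture is right --- guess side information, place initial general orders at integer thresholds of $Z_t$, split into two instances, and use pipage on the second --- and your cost accounting is essentially the paper's. But there is a real gap in how you handle Instance~1.

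You write that ``the commitment to item orders at bracketing general orders is deterministic, so the set of demands served there is fixed.'' This is not correct. Snapping the $y^i_s$-mass to the two bracketing IGOs produces a \emph{fractional} vector $\overline{y}^i_s$ whose item-ordering cost is at most $2\LP^{\sol}_{\itm}$; it does not produce an integral solution. At each IGO timestep $s$, the value $\overline{y}^i_s$ is the sum of the LP mass from the adjacent blocks and is typically in $(0,1)$. You still have to decide, integrally, at which IGO timesteps to actually open an item-$i$ order, and --- because of the rejection budget --- these decisions are coupled across all item types. If you simply round every positive $\overline{y}^i_s$ up to $1$, the cost is unbounded relative to $2\LP^{\sol}_{\itm}$; if you try to round greedily per item type, you have no control over $\sum_d w_d r_d$. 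The paper devotes an entire nontrivial procedure (iterative rounding, Theorem~\ref{thm:iterative_special}) to this step: at an extreme point of the Instance-1 LP there is at most one ``multibatch'' of non-integral $\overline{y}^i_s$'s, and one repeatedly rounds up a carefully chosen variable, adds interval constraints, and re-solves, shrinking the fractional mass geometrically. Your proposal contains no analogue of this, and without it the $(3+\epsilon)\LP^{\sol}_{\itm}$ bound cannot be established.

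A smaller point: for $\rjrpd$ the paper only guesses the $M$ most expensive item orders (and, separately, handles the case of fewer than $M$ general orders by enumeration). Guessing general-order timesteps and holding-cost assignments is unnecessary here since holding costs are $\{0,\infty\}$; it does no harm but is not what drives the argument. Also, the paper splits the rejection budget between the two instances using the LP's own $r$-values ($R_1 = \sum_{d\in D_1} r_d$, $R_2 = \sum_{d\in D_2} r_d$), rather than sequentially bookkeeping a residual as you suggest; this makes the two rounding problems independent and sidesteps the ``up to one fractional demand'' issue you raise.
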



The algorithm starts 
by augmenting the LP relaxation with side information about the optimal solution -- sufficiently compact to permit enumeration over all possible values parameterized by an integer $M$ dependent on $\eps$; the choice of $\eps$ determines both $M$ and the running time of the algorithm, with an obvious trade-off between them. 
Then, inspired by the $2-$approximation for $\jrpd$, we use the LP solution to identify a set of initial general orders; as in that algorithm, actually solve an adapted LP for $\rjrpd$ where we replace $x^{it}_s$ with $y^i_s$ if $H^{it}_s =0$, and with $0$ if $H^{it}_s=\infty$. We split the problem into two instances, where the 
the first instance consists of those timesteps at which we have placed general orders and contains all of the demand points  that can be serviced by these orders. There is a feasible LP solution to this instance of $\rjrpd$ in which the item ordering cost is at most $2\LP^{\sol}_{\itm}$, in the same way that the item ordering costs doubled in the simple $\jrpd$ algorithm above. The second instance consists of all timesteps in which we have not placed an initial general order, and the remaining demand points. $\LP^{\sol}$ restricted to the timesteps and demands in instance 2 is feasible for instance 2. We construct an LP-rounding $(1+\epsilon)$-approximation algorithms for both instances and combine the solutions to obtain a $(3+\epsilon)$-approximate solution.

We will use $I(i,t)$ and/or $I_{it}$ to denote the interval for which the holding cost corresponding to $(i,t)$ is 0. When there are sufficiently many orders in the optimal solution, we will be able to amortize ``rounding costs" as being negligible. Let $M \geq 3$ be sufficiently large to ensure that $\frac{10 \log M}{M} \leq \frac{\eps}{6}$: choosing $M = 1000 + \frac{1000}{\eps}\log(\frac{1000}{\eps})$ suffices.  
We will assume that the optimal solution has at least $M$ replenishment orders. (When the optimal solution has fewer than $M$ replenishment orders, by enumeration we guess the corresponding timesteps $\mathcal{T}_{small}$ and generate item orders by solving the LP restricted to $\mathcal{T}_{small}$ and with the constraints $y_s=1$ for all $s \in \mathcal{T}_{small}$ and applying the iterative rounding scheme in Section 4.2.) The algorithm proceeds as follows, where we set $\epsilon = \epsilon (M)$:
\begin{itemize}
    \item[(a)] \textbf{Enumeration for item orders} We guess the $M$ most expensive item orders in the optimal solution. Add constraints setting the corresponding $y^i_s=1$. Let $K^{M}_{\max}$ be the cost of the cheapest guessed item order. For all $(i,s)$ not in the guessed set such that $K_i > K^{M}_{\max}$, add the constraint $y^i_s=0$. Solve the LP (which must have a feasible solution if our guess is correct). We will need to round several item ordering variables up to $1$ and knowing that every fractional item ordering variable has cost at most $K^{M}_{\max}$ is critical in bounding the cost.
    \item[(b)] \textbf{Initial General Orders (IGO)} Let $Z_t=\sum_{s \leq t} y_s$. Let $\mathcal{T}$ be the set of timesteps $t'$ in which the running sum $Z_{t'}$ first reaches $0,1,2,3,\ldots, \lfloor Z_T \rfloor, Z_T$. Place a general order at each time $t \in \mathcal{T}$. Using these timesteps, we split the instance into two and give a $(1+\epsilon)$-approximation for each.
    \item[(c)] \textbf{Instance 1} This consists of timesteps $\mathcal{T}$ and the set of demand points $(i,t)$ such that there exists $s \in \mathcal{T}$ for which $H_s^{it} = 0$; let $D_1$ denote this set. The rejection limit for this instance is at most the extent to which the LP solution rejected demands in $D_1$: $\sum_{(i,t) \in D_1} r_{it}$. We solve this instance via iterative rounding by treating $K_0=0$ for each $t \in \mathcal{T}$, since the general orders are already opened. This iterative rounding procedure is critical to solving the multi-colored version of the problem. The key structural feature of this instance is that because there is a general order at each timestep $t \in \mathcal{T}$, the item types are (somewhat) decoupled. The only thing connecting them is the shared rejection limit. This enables us to obtain the following guarantee:
\end{itemize}
    \begin{theorem}
    \label{thm:iterative_special}
    Given an LP solution to Instance 1, we can obtain an
    integral solution of cost not more than $(1+\frac{\epsilon}{4})$ times the LP solution cost. 
    \end{theorem}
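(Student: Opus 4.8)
The plan is to make the structure of Instance 1 explicit, observe that it is a totally unimodular covering system coupled only through the single rejection constraint, and then round it by iterative LP-rounding in which the sole quantity we ever pay extra for is rounding up \emph{cheap} item orders.

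\textbf{Setup.} After the \jrpd-substitution (replace $x^{it}_s$ by $y^i_s$ when $H^{it}_s=0$ and by $0$ when $H^{it}_s=\infty$) and using $K_0=0$ for $s\in\mathcal{T}$, the LP for Instance 1 is: minimize $\sum_i K_i\sum_{s\in\mathcal{T}}y^i_s+\sum_{(i,t)\in D_1}p_{it}r_{it}$ subject to $r_{it}+\sum_{s\in I_{it}\cap\mathcal{T}}y^i_s\ge 1$ for every $(i,t)\in D_1$, the single constraint $\sum_{(i,t)\in D_1}w_{it}r_{it}\le R_1$ (where $R_1$ is the rejection weight the given LP solution spends on $D_1$), together with the guessed constraints from step~(a): $y^i_s=1$ for the $M$ guessed item orders and $y^i_s=0$ for all unguessed $(i,s)$ with $K_i>K^{M}_{\max}$. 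Two facts drive everything. First, within a fixed item type the covering matrix has the consecutive-ones property (the zero-holding-cost interval $I_{it}\cap\mathcal{T}$ is contiguous in $\mathcal{T}$), and since distinct item types share no $y$-variable, the entire covering matrix together with the identity block on the $r$-variables is totally unimodular; hence \emph{deleting the rejection constraint} leaves an integral polytope --- this is precisely the ``item types are decoupled'' phenomenon. Second, every item-order variable that can ever be fractional has cost $K_i\le K^{M}_{\max}$, and since each of the $M$ guessed orders costs at least $K^{M}_{\max}$ we get $K^{M}_{\max}\le \LP^{\sol}_{\itm}/M$.

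\textbf{Iterative rounding.} We repeatedly compute an extreme point $(y^*,r^*)$ of the current residual LP and simplify: delete any variable equal to $0$; if $r_{it}=1$, reject $(i,t)$ and decrease $R_1$ by $w_{it}$; if $r_{it}=0$, drop $r_{it}$ (the demand must now be covered by item orders); if $y^i_s=1$, open that item order, pay $K_i$, and remove every demand it covers --- which splits the remaining demands of item $i$ into a ``before $s$'' block and an ``after $s$'' block. If at some point the rejection constraint is not tight, the residual polytope is the TU covering polytope, so the LP optimum is already integral and we stop. Otherwise every remaining variable is strictly fractional and the rejection constraint is tight; here we invoke the product structure. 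The residual covering polytope is $\prod_i P_i$ with each $P_i$ integral, so a vertex of $\prod_i P_i\cap\{\sum w_{it}r_{it}=R_1\}$ lies on an \emph{edge} of $\prod_i P_i$; thus it agrees with an integral point on every item type but one, say $j$, where it is a proper convex combination of two adjacent integral vertices $u,u'$ of $P_j$. Every coordinate on which $u$ and $u'$ disagree is fractional in $(y^*,r^*)$ --- in particular every disagreeing item-order coordinate costs at most $K^{M}_{\max}$. Since $\sum w_{it}r_{it}$ is affine along $[u,u']$ and equals $R_1$ at $(y^*,r^*)$, at least one endpoint respects the residual budget; we commit item $j$ to it (preferring the less-rejecting endpoint when the budget permits) and continue.

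\textbf{Cost accounting and the obstacle.} The committed and forced rejections never exceed in total weight what the given LP solution spent, and --- by always committing toward the less-rejecting side along each edge whenever the budget allows --- we argue the integral solution pays no more in penalties than $\sum p_{it}r^*_{it}$. The only genuine overhead is the set of item orders we round up, each of cost at most $K^{M}_{\max}\le \LP^{\sol}_{\itm}/M$. The crux is to bound the total number of such roundings by $O(\log M)$ via a potential argument on the residual structure --- the ``split'' created when an item order is opened makes the relevant combinatorial measure shrink geometrically --- which gives total overhead $O\!\big(\tfrac{\log M}{M}\big)\,\LP^{\sol}_{\itm}\le \tfrac{\epsilon}{4}\,\LP^{\sol}$. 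I expect the main difficulty to be exactly this counting: showing that the interaction between rounding and the rejection budget forces only $O(\log M)$ item orders to be rounded up overall, and that committing a whole item type to an edge endpoint in the stuck case never forces a large penalty payment beyond what $\LP^{\sol}$ already incurs. Everything else follows the standard ``totally unimodular system plus one side constraint'' iterative-rounding template.
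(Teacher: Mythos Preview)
Your structural observation is correct and in fact equivalent to what the paper proves: dropping the single rejection constraint leaves a product of interval-covering polytopes (one per item type), each integral, so an extreme point of the full LP lies on an edge of this product and hence has all its fractional coordinates confined to one item type. The paper states this as ``at most one multibatch'' and proves it by explicitly constructing an alternating perturbation direction $\delta$ inside any multibatch.

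The gap is in the rounding step and its analysis. Committing item $j$ wholesale to an edge endpoint $u$ of $P_j$ has no a priori cost bound in your write-up: you would need that adjacent vertices of $P_j$ differ in item-ordering cost by at most $O(1)\cdot K_j$, and you do not argue this. Your proposed accounting --- count the $y^j_s$ that get ``rounded up'' and bound their number by $O(\log M)$ via a potential argument --- is both the wrong metric (the overhead is the \emph{net} change $c(u)-c(\mathrm{LP}_j)$, since committing to $u$ also rounds many $y^j_s$ \emph{down} to $0$) and not connected to the algorithm you actually describe: the ``split'' potential you allude to belongs to an iterate-and-resolve scheme, not to a single commit step. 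Indeed, if commit-to-endpoint worked, one commitment would terminate the whole procedure (every other item is already integral at the extreme point), so the ``and continue'' and the $O(\log M)$ iteration count are inconsistent with your own strategy.

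The paper's rounding is genuinely different and supplies exactly the mechanism you are missing. It never commits to an endpoint. Instead, each iteration rounds up a \emph{single} variable --- the ``middle'' $y^i_{t'}$ of the current multibatch --- and, crucially, before re-solving it adds interval constraints $\sum_{s\in J} y^i_s\ge 1$ for every subinterval $J$ of the multibatch that already carries mass $\ge 1$. The rounded-up variable cuts the multibatch in two; the interval constraints pin at least $\lceil Q/4-1\rceil$ units of mass on each side; since the re-solved extreme point can again have only one multibatch, one side becomes fully integral, and the total fractional weight drops by a factor of at least $8/7$. This gives $O(\log Q_{\mathrm{init}})$ iterations, each paying exactly one $K_i\le K^M_{\max}$, and the overhead is then bounded by $\tfrac{10\ln M}{M}\,\LP_{\itm}\le \tfrac{\epsilon}{4}\,\LP_{\itm}$ by casing on whether $Q_{\mathrm{init}}\lessgtr M$. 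The added interval constraints and the explicit halving step are the concrete realisation of the geometric-shrinkage potential you gesture at but do not implement.
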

\begin{itemize}
\item[(d)] \textbf{Instance 2} This consists of timesteps $[T]-\mathcal{T}$ and all demand points $D_2$ that are not in $D_1$. Note that unlike vanilla $\jrp$, because we are allowed to reject demand points, plenty of partially served demands might have no IGO in their 0-holding cost interval. Therefore, we need an Instance 2 to accomodate these. The rejection limit for this instance is $\sum_{(i,t) \in D_2} r_{it}$. We will add the constraint that between any two consecutive initial general orders $s_1$ and $s_2$, $\sum_{s=s_1+1}^{s_2-1} y_s \leq 1$.  The pipage rounding framework relies on being able to partition the time horizon, non-integral replenishment orders, and, critically, demand points into ``batches'', each of which has $\sum_{s} y_s \leq 1$. Maintaining this invariant ensures that the rejection variables are a linear function (instead of piecewise linear) of the $y$ variables which is essential in obtaining the following guarantee:
\end{itemize}
    \begin{theorem}
    \label{thm: pipage_special}
    Given an LP solution to Instance 2, we can obtain an
    integral solution of cost not more than $(1+\frac{\epsilon}{6})$   times the LP cost. 
    \end{theorem}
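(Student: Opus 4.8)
The plan is to use the $0/\infty$ holding costs to reduce Instance~2 to a linear program in the order variables only, exploit that the rejection variables are \emph{linear} (not clipped) functions of those variables, run pipage rounding, and charge the small integrality loss against the $\ge M$ replenishment orders of $\opt$ and the $M$ guessed item orders.

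\emph{Reduction.} Since Instance~2 is an $\rjrpd$ instance, set $x^{it}_s = y^i_s$ for $s \in I(i,t)$ and $x^{it}_s = 0$ otherwise; the servicing constraint then becomes $r_{it} = 1 - \sum_{s \in I(i,t)} y^i_s$, and taking equality here is without loss since lowering $r_{it}$ never hurts. By the construction of the initial general orders, the $y$-mass strictly between two consecutive IGOs is less than $1$, so $\LP^{\sol}$ restricted to Instance~2 already satisfies the added batch caps $\sum_{s\in B} y_s \le 1$; moreover every $(i,t)\in D_2$ has $I(i,t)$ contained in a single batch $B$, so $\sum_{s\in I(i,t)} y^i_s \le \sum_{s\in B} y_s \le 1$ and the displayed equality indeed gives $r_{it}\in[0,1]$. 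After substitution, Instance~2 becomes a linear program over $\{y_s\},\{y^i_s\}$ with a \emph{linear} objective, precedence constraints $y^i_s \le y_s$, one cap $\sum_{s\in B} y_s \le 1$ per batch $B$, the single linear rejection constraint $\sum_{(i,t)\in D_2} w_{it}\bigl(1 - \sum_{s\in I(i,t)} y^i_s\bigr) \le R_2$, and box constraints; its optimum is at most the cost of $\LP^{\sol}$ on Instance~2.

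\emph{Rounding.} Dropping the single rejection constraint leaves a polytope that is a product over batches of polytopes whose vertices are exactly the integral configurations ``at most one order time in the batch together with an arbitrary item subset'' — in particular it is integral, and its edges lie within a single batch. Hence an optimal vertex $(\bar y,\bar y^i)$ of the full LP is integral off a single batch, where it is a proper convex combination of two adjacent integral configurations. We then run pipage steps that transfer mass between fractional order variables along directions preserving both the batch caps and the value $\sum_{(i,t)} w_{it} r_{it}$ — available precisely because, once each $I(i,t)$ sits inside one batch, both quantities are \emph{linear} in the $y^i_s$ — so feasibility is maintained while the linear objective is nonincreasing toward one endpoint; iterating drives all but $O(1)$ fractional general orders (one per touched batch, so the caps hold) and $O(\log M)$ fractional item orders to $\{0,1\}$. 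The residue is rounded \emph{up} (a general order first so precedence holds); since raising item orders only lowers each $r_{it}$, the budget $R_2$ stays satisfied, giving an integral feasible solution. Its extra cost over the LP value is at most $O(1)\cdot K_0 + O(\log M)\cdot K^{M}_{\max}$; since $\opt$ uses at least $M$ replenishment orders and its $M$ most expensive item orders each cost at least $K^{M}_{\max}$, we have $K_0\le \opt/M$ and $K^{M}_{\max}\le \opt/M$, and the choice of $M$ with $\tfrac{10\log M}{M}\le \tfrac{\eps}{6}$ makes this overhead at most $\tfrac{\eps}{6}$ times the LP cost, as claimed.

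\emph{Main obstacle.} The crux is the combinatorial core of the pipage step: showing that, as long as the current point is non-integral, there is a mass transfer that \emph{simultaneously} respects every per-batch cap and the global rejection budget without raising the linear objective, and that when no such transfer exists the leftover fractional support is small enough that completing it to an integral solution costs only $O(\log M)\,K^{M}_{\max}+O(1)\,K_0$ — which is exactly where guessing the $M$ most expensive item orders becomes indispensable. The two structural facts that enable this are the product structure of the order polytope after removing the rejection constraint and the linearity of the rejection variables, the latter being the whole reason Instance~2 was defined so that every zero-holding interval lies inside a single batch.
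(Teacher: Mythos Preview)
Your reduction and the observation that each $I(i,t)\subseteq B$ forces $r_{it}=1-\sum_{s\in I(i,t)}y^i_s$ to be \emph{linear} in the order variables are both correct and are exactly what the paper exploits. Your structural claim is also correct: dropping the rejection constraint, the feasible region factors over batches, each batch polytope has as vertices precisely ``all zero'' or ``one $y_s=1$ together with an arbitrary $\{0,1\}$ item profile at $s$'', and adding back the single rejection inequality forces an optimal extreme point to lie on an \emph{edge} of the product polytope, hence to be fractional in at most one batch.

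The gap is in what you do next. Once you are on a one-dimensional edge, there is no nontrivial direction that simultaneously preserves the batch cap \emph{and} the rejection value---the rejection functional is linear and nonconstant along that edge (else you would already be at a vertex of the unconstrained polytope), so ``pipage preserving $\sum w_{it}r_{it}$'' is vacuous from that point. Moreover, the two integral endpoints of the edge can differ in up to $2N$ item orders, so rounding to the feasible endpoint (or rounding everything up) can cost $\Theta(N)\,K^M_{\max}$, which is not absorbed by $\epsilon\cdot\LP^{\sol}$. Your stated residual of ``$O(\log M)$ fractional item orders'' is not derived anywhere and appears to be imported from the Instance~1 iterative rounding; here there is no logarithm. (Your edge idea \emph{can} be salvaged: round up the at most two fractional $y_s$'s---this costs at most $K_0$---and then the remaining item variables live in a box with a single linear constraint, whose extreme points have at most one fractional coordinate; round that one up for at most $K^M_{\max}$. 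But you did not say this.)

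For comparison, the paper does not use an extreme-point argument at all for Instance~2. It first \emph{splits timesteps} so that $y^i_s\in\{0,y_s\}$ for every $i,s$, turning each fractional $y_s$ into an atomic ``candidate order'' with a fixed item set. Then it runs two explicit pipage phases: first over candidate orders (one variable $\delta_o$ per fractional order, one equation per batch with $\ge 2$ fractional orders plus one global equation $\sum_o n_o\delta_o=0$), which leaves at most $2$ fractional general orders to round up; then, with the $y_s$'s integral, the same scheme over item orders (batch replaced by item--batch pair) leaves at most $2$ fractional $y^i_s$'s. The total overhead is $2K_0+2K^M_{\max}\le\frac{\epsilon}{6}\LP^{\sol}$, with no logarithmic factor.
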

\begin{itemize}
    \item[(e)] \textbf{Forming the solution} We simply merge the solutions from the two instances; if a replenishment order exists in either solution, it will be in our final answer.
\end{itemize}

In the remainder of this section we give details of the algorithms to round LP solutions Instance 2 followed by Instance 1. (The order is swapped for ease of exposition). The following analysis will also operate on the assumption that we have correctly guessed the $M$ most expensive item orders. In this case, we do obtain a feasible LP solution that costs at most $\opt$ (because the optimal integral solution is feasible). Also notice that the number of initial general orders placed is at most $Z_T+2$, and the cost of these general orders is at most $(1+\epsilon/8)\LP^{\sol}_{\gen}$. Moreover, the item ordering cost of the optimal LP solution for Instance 1 is at most $2\LP^{\sol}_{\itm}$ for the same reason as for the $2-$approximation to $\jrpd$. The integral solution to Instance 2 that we obtain costs at most $(1+\epsilon/6) \opt$. Putting all this together, we obtain Theorem~\ref{thm:jrpd_bound}
\subsection{$\rjrpd$: Solving Instance 2 with Pipage Rounding}
We devise a novel pipage-style rounding method to round a solution $(y,r)$ of cost $\LP^{\sol}$ to Instance 2. We obtain an integral solution of cost at most $(1+\frac{\epsilon}{6})\LP^{\sol}$. Our algorithm will consider groups of non-integral variables and adjust until at least one of them is integral, while maintaining feasibility and not increasing cost. At the end, we will need to round up at most 2 general orders and 2 item orders, which is why our solution might cost slightly more than $\LP^{\sol}$. We believe that this is the first use of pipage rounding in an inventory model setting. The algorithm described below can be optimized to need slightly fewer guesses, but this approach generalizes more smoothly to $\cjrp$.

This entire section will assume that the entire time horizon in the input instance to the pipage rounding algorithm (Instance 2 from the previous subsection) can be partitioned into a collection of intervals called ``batches", that for all batches $B,$ $\sum_{s \in B} y_s \leq 1$ and that for any demand point $(i,t)$ with $0-$holding cost interval $I_{(i,t)}$, there exists a batch $B(i,t)$ such that $I_{(i,t)} \subset B(i,t)$.
First, we make a critical structural observation: $\sum_{s \in I_{i,t}} y^i_s \leq \sum_{s \in I_{i,t}} y_s \leq 1$ and therefore, we can assume that $r_{it} = 1 - \sum_{s \in I_{i,t}} y^i_s$. (Note that in the original LP solution, there may be demands with extremely large intervals such that $\sum_{s \in I_{(i,t)}} y^i_s > 1$ and $r_{it}=0$, however, instance 2 cannot have such demand points because every interval $I$ with $\sum_{s \in I} y_s \geq 1$ must have an IGO and any demand point with $I_{(i,t)} = I$ would be in instance 1).

This implies the important fact below

\textbf{Fact:} $\sum_{d \in D} r_d = 1 - \sum_{s} y^i_s n^i_s$ where $n^i_s$ is the number of demand points $(i,t)$ such that $s \in I(i,t)$.

\begin{itemize}
\item[1.] \textbf{Splitting Timesteps}  We first show that WLOG, $y^i_s \in \{0, y_s\}$ for each timestep $s \not \in \mathcal{T}$
by splitting these timesteps as follows: sort the items by their $y^i_s$ value; WLOG assume that $y^{1}_s \leq y^{2}_{s} \leq \dots \leq  y^{N}_s$. If $y^1_s = y^N_s$, then WLOG, $ y^i_s =y_s$ for each $i \in [N]$.
Consider an index $i$ for which $y^i_s < y^{i+1}_s$. We shall split timestep $s$ into two timesteps $s_1$ and $s_2$ with $y_s = y_{s_1} + y_{s_2}$ and
$y_{s_1} = y^i_s$; then set $y^{j}_{s_1} = y^{j}_{s}$ and
$y^{j}_{s_2} = 0$ for $j=1,\ldots,i$, and  set $y^{j}_{s_1} = y^{i}_s$  and $y^{j}_{s_2} = y^{j}_{s} - y^{i}_s$  for $j=i+1,\ldots,N$.
The timesteps $s_1$ and $s_2$ now occur consecutively in the discrete time horizon where $s$ occurred previously; any demand point interval $I(i,t)$ that originally contained $s$ will now contain $s_1$ and $s_2$. Clearly we have maintained feasibility of the LP solution as well as its batch-structure. We can then apply this construction recursively to $s_1$ and $s_2$. Overall, each  distinct value in $\{y^i_s : i \in [N]\}$ corresponds to a timestep generated.
The number of timesteps is now at most $NT$, which is still polynomial in the input. For convenience, we will still let $T$ denote the number of timesteps. 
\item[2.] \textbf{Rounding Candidate Orders via Piping} For each demand point $(i,t) \in D_2$, by the definition of $D_2$, $I(i,t)$ lies in the interior of the interval between 2 consecutive initial general orders in $\mathcal{T}$. For each timestep $s$ with $y_s>0$, we define a candidate (replenishment) order, consisting of all item types with $y^i_s >0$. Once again, if $y^i_s > 0$, then $y^i_s = y_s$. A \textit{non-integral} candidate order is one where $0 < y_s < 1$. Increasing a non-integral order by $\delta$ refers to increasing $y_s$ and each positive $y_s^i$ by $\delta$. An \textit{integral} order is one where $y_s \in \{0,1\}$. We loop over steps $a) \text{ and }b)$ until only two fractional candidate orders remain. Then we round the two corresponding $y_s$ variables up to $1$ and proceed to step $3.$ First, we observe that the following fact is a direct consequence of our definition of candidate orders.

\textbf{Fact:} $\sum_{d \in D} r_d = 1 - \sum_{o} n_o y_o$ where $y_o$ is the value of $y_{s(o)}$ where $s(o)$ is the timestep corresponding to candidate order $o$ and $n_o$ is the number of demand points $(i,t)$ such that item $i$ is included in order $o$ and $I_{(i,t)}$ includes $s(o)$.

\end{itemize}
\begin{itemize}
\item[(a)] We construct a linear system of equations as follows. The components of our solution vector will specify a perturbation each non-integral candidate order so that we can move towards an integral solution without increasing the total cost. 
For each non-integral candidate order $o$, there will be a corresponding variable $\delta_o$ in the linear system. A solution to the system will specify a feasible perturbation to the current solution, and it will be possible to move towards an integral solution without increasing the total cost. The first equation is $\sum_{o} n_o\delta_o = 0$. For every batch $B$ with multiple non-integral candidate orders, we also add the constraints $\sum_{o \in B} \delta_o = 0$. 

\begin{lemma}
If we let $n_o$ be the number of demand points candidate order $o$ serves, the interval over which $H^{it}_s=0$ for each demand point is contained within a single batch and $\sum_{o \in B} y_o \leq 1$, then $\sum_{d \in D} r_d = |D|-\sum_{o} n_o y_o$
\label{pipage_monochromatic_rejection_batch_candidate}
\end{lemma}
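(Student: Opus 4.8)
The plan is to compute $\sum_{d\in D} r_d$ directly from the LP solution and then regroup the resulting double sum by candidate orders. First I would establish that for every demand point $(i,t)\in D$ we may take $r_{it}=1-\sum_{s\in I(i,t)} y^i_s$. The hypothesis that $I(i,t)$ is contained in a single batch $B$, together with $\sum_{s\in B} y_s\le 1$ and the LP constraint $y^i_s\le y_s$, gives $\sum_{s\in I(i,t)} y^i_s\le \sum_{s\in B} y_s\le 1$, so $1-\sum_{s\in I(i,t)} y^i_s$ is a legitimate nonnegative value for $r_{it}$. On the other hand, since holding costs are $\{0,\infty\}$ the demand can only be served inside $I(i,t)$ and $x^{it}_s\le y^i_s$, so the covering constraint forces $r_{it}\ge 1-\sum_{s\in I(i,t)} y^i_s$; as penalties are nonnegative, decreasing $r_{it}$ to exactly this value neither increases cost nor violates any rejection budget. (Equivalently, this is just the identity already noted in the text for Instance~2.)

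Next I would substitute and interchange the order of summation:
\[
\sum_{(i,t)\in D} r_{it} \;=\; |D| \;-\; \sum_{(i,t)\in D}\;\sum_{s\in I(i,t)} y^i_s .
\]
After the timestep-splitting step we have $y^i_s\in\{0,y_s\}$, so the inner sum picks up only those $s\in I(i,t)$ at which the candidate order $o$ exists ($y_s>0$) and includes item $i$, and each such term equals $y_{s(o)}=y_o$. Interchanging the sums therefore groups the contributions by candidate order:
\[
\sum_{(i,t)\in D}\;\sum_{s\in I(i,t)} y^i_s \;=\; \sum_{o} y_o\cdot\bigl|\{(i,t)\in D : i\in o,\ s(o)\in I(i,t)\}\bigr| \;=\; \sum_{o} n_o\, y_o ,
\]
by the definition of $n_o$. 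Combining the two displays gives $\sum_{d\in D} r_d = |D|-\sum_o n_o y_o$, as claimed.

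The argument is essentially bookkeeping, and the one place where care is needed is the first step: the batch hypothesis is exactly what guarantees $\sum_{s\in I(i,t)} y^i_s\le 1$, hence that $r_{it}$ is the genuine linear expression $1-\sum_{s\in I(i,t)} y^i_s$ rather than the truncated $\max\bigl(0,\,1-\sum_{s\in I(i,t)} y^i_s\bigr)$ — it is precisely this linearity that the pipage step later relies on. One should also check the harmless edge cases: timesteps $s\in I(i,t)$ with $y_s=0$ define no candidate order and contribute $0$ to both sides.
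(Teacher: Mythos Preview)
Your proof is correct and follows essentially the same approach as the paper: establish $r_{it}=1-\sum_{s\in I(i,t)} y^i_s$ from the batch hypothesis, then use $y^i_s\in\{0,y_s\}$ to rewrite in terms of candidate orders and swap the order of summation. Your version is in fact more carefully justified than the paper's, particularly in explaining why the batch containment ensures the linear (untruncated) expression for $r_{it}$.
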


\begin{proof}
Until the final round up of $y_s$ variables, for each demand point $d$ with item type $i$, $\sum_{s \in I_d} y^i_s \leq \sum_{s \in I_d} y_s \leq 1$. Therefore, $r_{d} = 1 - \sum_{s \in I_d} y^i_s = 1- \sum_{o \in I_d: i \in o} y_o$ where the last equality follows from $y^i_s \in \{0,y_s\}$ 

$\sum_{d \in D} r_d = \sum_{d \in D} 1 - \sum_{o \in I_d: i \in o} y_o = |D| - \sum_{d \in D} \sum_{o \in I_d: i \in o} y_o = \sum_{o} \sum_{(i,t) \in D : i \in o \in I_{(it)}} y_o = |D| - \sum_{o} n_o y_o$.  
\end{proof}

\item[(b)] We find a non-zero solution to the above system of equations. Let $d_{\ord}$ be the set of candidate orders that serve demand $d$. If, for every candidate order $o$, increasing the corresponding $y_s$ and each $y^i_s$ such that $y^i_s=y_s$ at rate $\delta_o$ does not lead to a net increase in total cost, then do so and simultaneously decrease $r_d$ at a rate $\sum_{o \in d_{\ord}} \delta_o$ until some variable becomes integral. Else, increase the corresponding $y_s$ and all $y^i_s$ such that $y^i_s=y_s$ at rate $-\delta_o$ and simultaneously increase $r_d$ at a rate $\sum_{o \in d_{\ord}} \delta_o$ until some variable becomes integral. The number of variables and constraints in the linear system guarantees the following lemma.
\end{itemize}

    \begin{lemma}
Whenever there are 3 or more non-integral candidate orders, there is a non-zero solution to the linear system.
\label{monopipage-sol-exists}
    \end{lemma}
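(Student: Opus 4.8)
The plan is a dimension-counting argument: whenever there are at least three non-integral candidate orders, the linear system we built has more variables than constraints, so it has a nonzero solution in its kernel. First I would count the unknowns: we have one variable $\delta_o$ for each non-integral candidate order, so if there are $k \ge 3$ such orders, there are $k$ unknowns. Next I would count the equations. There is exactly one ``global'' equation, $\sum_o n_o \delta_o = 0$, coming from the requirement that $\sum_d r_d$ is preserved (this uses Lemma~\ref{pipage_monochromatic_rejection_batch_candidate}, which guarantees $\sum_d r_d = |D| - \sum_o n_o y_o$ is an affine function of the $y_o$, so fixing it amounts to a single linear constraint on the perturbations). The remaining equations are the per-batch constraints $\sum_{o \in B} \delta_o = 0$, one for each batch $B$ that contains two or more non-integral candidate orders.

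The key combinatorial point is that each such per-batch equation is ``paid for'' by at least two of the $k$ variables: a batch only contributes an equation if it contains $\ge 2$ non-integral orders, and the batches partition the time horizon, so they are disjoint and each non-integral order lies in exactly one batch. Hence if there are $b$ batches contributing equations, involving $k_1, k_2, \dots, k_b \ge 2$ non-integral orders respectively, then $\sum_{j} k_j \le k$, so $b \le k/2$. Therefore the total number of equations is at most $1 + b \le 1 + k/2$. Since $k \ge 3$ we have $1 + k/2 < k$ (indeed $k/2 < k - 1$ for $k \ge 3$), so the system is underdetermined and its coefficient matrix has a nontrivial kernel, yielding a nonzero solution $(\delta_o)$.

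The one subtlety to be careful about — and the place where the argument could go wrong if stated carelessly — is the accounting of when the global equation is or is not redundant with the batch equations, and more generally making sure we have not miscounted by, e.g., including an equation for a batch with only one non-integral order (such a batch contributes no equation, since with a single free variable in it the constraint would force that variable to zero, which we explicitly exclude). I would state precisely: equations are (i) the single global equation, and (ii) one equation per batch with $\ge 2$ non-integral orders; variables are one per non-integral order. The inequality $1 + \lfloor k/2 \rfloor < k$ for $k \ge 3$ then closes it. I expect no real obstacle here beyond writing the bookkeeping cleanly; the content is entirely the disjointness of batches forcing $b \le k/2$.
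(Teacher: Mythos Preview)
Your proposal is correct and takes essentially the same approach as the paper: both are dimension-counting arguments showing that with $k\ge 3$ non-integral candidate orders the number of equations (one global plus one per batch with $\ge 2$ such orders) is strictly less than $k$, so the homogeneous system has a nonzero kernel. The paper phrases the count via a short case split on the number $|G|$ of contributing batches, while you bound $|G|\le \lfloor k/2\rfloor$ directly from disjointness; the content is identical.
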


\begin{proof}
Let $G$ be the set of batches $B$ such that there are multiple non-integral candidate orders in $B$. 
There are exactly $|G|+1$ constraints in the linear system. Therefore, whenever $|G|+2$ non-integral candidate orders (and hence $|G|+2$ variables in the linear system) exist, the system has either no solution, or infinitely many solutions. Since $\delta=0$ is a solution, there must be infinitely many non-zero solutions. Using standard linear algebra techniques, we can find such a solution efficiently.

If $|G| \geq 2$, there must be at least $|G|+2$ non-integral candidate orders (because each element of $G$ has more than one) and therefore, at least $|G|+2$ of them must exist. Therefore, if there is no solution, $|G| \leq 1$. and the number of non-integral candidate orders $|G| + 1 \leq 2$
\end{proof}
\begin{itemize}
    \item[3.] \textbf{Rounding item orders} We now round the $y^i_s$ in a similar manner to the $y_s$ earlier. Loosely speaking, one can think of each (item,batch) pair as corresponding to a batch in step $2$ and whenever there are three or more non-integral $y^i_s$ variables, we can perform steps $(a)$ and $(b)$ below. Once there are two or fewer non-integral $y^i_s$ variables we round them up (and round down the corresponding $r_{it}$ variables) obtaining an integral solution. 
    Note that this point, the choice of item orders can be reduced to a knapsack problem; however, our method scales to multiple colors more directly.
    \item[(a)] For each non-integral $y^i_s$, let $n^i_s$ be the number of demand points in this instance of item $i$ serviceable at time $s$. Once again, we solve a linear system of equations with variables $\delta^i_s$ corresponding to these non-integral $y^i_s$. The first constraint is $\sum_{i,s} n^i_s\delta^i_s = 0$. For every batch $B$ and item type $i$ with multiple non-integral $y^i_s$, we add a constraint $\sum_{s \in B} \delta^i_s = 0$. For similar reasons to before, we get the following two lemmas guaranteeing that we can find such a solution and that the sum of $r_{it}$ variables is $|D|-\sum_{i = 1}^N \sum_{s} n^i_s y^i_s$ :
\end{itemize}    
\begin{lemma}
If we let $n^i_s$ be the number of demand points $(i,t)$  such that $s \in I_{it}$ and assume that for all demand points $d$, $I_d$ is contained within a single batch and $\sum_{o \in B} y^i_s \leq 1$, then the total number of demand points rejected is $|D|-\sum_{i=1}^N \sum_{s} n^i_s y^i_s$
\end{lemma}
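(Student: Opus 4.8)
The plan is to mirror the argument of Lemma~\ref{pipage_monochromatic_rejection_batch_candidate}, replacing candidate orders by the individual item-order variables $y^i_s$. The crux is to establish, for each demand point $d=(i,t)\in D$, the pointwise identity $r_{it}=1-\sum_{s\in I_{it}}y^i_s$; once this holds, the lemma follows by summing over $d$ and interchanging the order of summation. (Here I read the hypothesis ``$\sum_{o\in B}y^i_s\le 1$'' as $\sum_{s\in B}y^i_s\le 1$ for every batch $B$.)

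First I would check that the right-hand side of the claimed pointwise identity is a legitimate value for $r_{it}$, i.e. lies in $[0,1]$. Since $I_{it}$ is contained in a single batch $B(i,t)$ by hypothesis and that batch satisfies $\sum_{s\in B(i,t)}y^i_s\le 1$, we get $\sum_{s\in I_{it}}y^i_s\le\sum_{s\in B(i,t)}y^i_s\le 1$, so $1-\sum_{s\in I_{it}}y^i_s\in[0,1]$. Next, after the timestep-splitting step we may assume $x^{it}_s$ is effectively $y^i_s$ on $I_{it}$ and $0$ outside it, so the servicing/rejection constraint $r_{it}+\sum_{s\le t}x^{it}_s\ge 1$ reads $r_{it}\ge 1-\sum_{s\in I_{it}}y^i_s$; since it is never beneficial to set $r_{it}$ larger than forced, both an optimal LP solution and the solution maintained by the rounding procedure satisfy $r_{it}=1-\sum_{s\in I_{it}}y^i_s$ exactly (we are in the regime before the final round-up of the at most two remaining fractional $y^i_s$ per (item,batch) group, exactly as in the candidate-order case). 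Then I would sum: $\sum_{d\in D}r_d=\sum_{(i,t)\in D}\bigl(1-\sum_{s\in I_{it}}y^i_s\bigr)=|D|-\sum_{(i,t)\in D}\sum_{s\in I_{it}}y^i_s$, and interchange summations so that each $y^i_s$ is counted once for every demand point $(i,t)$ with $s\in I_{it}$, i.e. exactly $n^i_s$ times, yielding $\sum_{(i,t)\in D}\sum_{s\in I_{it}}y^i_s=\sum_{i=1}^N\sum_s n^i_s y^i_s$ and hence the stated expression $|D|-\sum_{i=1}^N\sum_s n^i_s y^i_s$.

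The main obstacle is precisely the justification that $r_{it}=1-\sum_{s\in I_{it}}y^i_s$ rather than something smaller (namely $0$ when the sum exceeds $1$): this is exactly where the batch-containment hypothesis and the per-batch bound are used, since otherwise a demand with a long $0$-holding-cost interval could have $\sum_{s\in I_{it}}y^i_s>1$, forcing $r_{it}=0$ and destroying linearity in the $y^i_s$ — and this is the degenerate case that was deliberately ruled out by moving such demands into Instance~1. With that pointwise identity secured, everything else is the routine double-counting bookkeeping above, identical in spirit to the proof of Lemma~\ref{pipage_monochromatic_rejection_batch_candidate}.
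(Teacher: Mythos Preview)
Your proposal is correct and follows essentially the same approach as the paper: establish the pointwise identity $r_{it}=1-\sum_{s\in I_{it}}y^i_s$ using the batch-containment hypothesis to guarantee $\sum_{s\in I_{it}}y^i_s\le 1$, then sum over $D$ and swap the order of summation so each $y^i_s$ is counted $n^i_s$ times. The paper adds a brief remark that the identity already holds at the start of Step~3 by continuity from Lemma~\ref{pipage_monochromatic_rejection_batch_candidate} (since the final round-up of $y_s$ variables leaves the $y^i_s$ and $r_{it}$ untouched), but the core argument is identical to yours.
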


\begin{proof}
Notice that this is true at the start of step 3. because until the rounding up of $y_s$ variables, $n_o = \sum_{i=1}^N
n^i_s$. Moreover, the final roundup of $y_s$ variables did not alter any of the $y^i_s$ and $r_{it}$ variables. Therefore, we may assume this is true initially.

For each demand point $d$ with item type $i$, $\sum_{s \in I_d} y^i_s \leq \sum_{s \in I_d} y_s \leq 1$. Therefore, $r_{d} = 1 - \sum_{s \in I_d} y^i_s$ 

$\sum_{d \in D} r_d = \sum_{d \in D} 1 - \sum_{s \in I_d: i \in o} y^i_s = |D| - \sum_{d \in D} \sum_{o \in I_d: i \in o} y_o = \sum_{s} \sum_{(i,t) \in D : i \in s \in I_{(it)}} y^i_s = |D| - \sum_{s \in I_{it}} n^i_s y^i_s$ 
\end{proof}
\begin{lemma}
\label{monopipage-sol-exists2}
Whenever there are 3 or more non-integral $y^i_s$ variables, there is a non-zero solution to the linear system. 
\end{lemma}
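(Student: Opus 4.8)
The plan is to mirror the counting argument used for Lemma~\ref{monopipage-sol-exists}, now applied to the homogeneous linear system built in step~3(a). Its unknowns are exactly the non-integral variables $y^i_s$; write $V$ for their number, so by hypothesis $V \ge 3$. Its equations are the single global equation $\sum_{i,s} n^i_s \delta^i_s = 0$ together with one equation $\sum_{s \in B} \delta^i_s = 0$ for each pair $(i,B)$ consisting of an item type $i$ and a batch $B$ that between them contain at least two non-integral variables $y^i_s$; let $G'$ be the set of such pairs, so the number of constraints is exactly $|G'| + 1$. Since the system is homogeneous, $\delta \equiv 0$ is always a solution, and a non-zero solution exists precisely when the solution space has positive dimension, i.e.\ when $V$ exceeds the rank of the constraint matrix, which is at most $|G'| + 1$. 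So it suffices to show $V \ge |G'| + 2$.

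The key structural fact is that $V \ge 2|G'|$. This is where the batch hypothesis is used: the batches partition the (possibly refined, after step~1) time horizon, so every timestep $s$ lies in exactly one batch, hence every non-integral variable $y^i_s$ belongs to exactly one pair $(i, B)$. Consequently the collections of non-integral variables associated with distinct pairs of $G'$ are pairwise disjoint, and each such collection has size at least $2$ by definition of $G'$; summing gives $V \ge 2|G'|$.

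Now case-split on $|G'|$. If $|G'| \ge 2$, then $V \ge 2|G'| \ge |G'| + 2$, as desired. If $|G'| \le 1$, then the system has at most $|G'| + 1 \le 2$ constraints, while $V \ge 3 > |G'| + 1$; again $V$ strictly exceeds the number of constraints. In either case the null space of the constraint matrix is nontrivial, so a non-zero solution exists, and can be produced by Gaussian elimination in polynomial time, completing the proof.

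The argument is essentially identical to that of Lemma~\ref{monopipage-sol-exists}, so there is no real obstacle; the only point requiring care is the disjointness claim $V \ge 2|G'|$, which hinges on the batches forming a genuine partition of the time horizon and on each $0$-holding-cost interval $I_d$ lying inside a single batch. One should double-check that the timestep-splitting of step~1 preserves this partition structure, but this is immediate from how $s$ is replaced by consecutive $s_1, s_2$ there.
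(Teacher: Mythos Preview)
Your proof is correct and follows essentially the same counting argument as the paper: both define the set of (item, batch) pairs contributing a batch-constraint, observe that each such pair accounts for at least two disjoint non-integral variables so that $V \ge 2|G'|$, and then case-split on $|G'|$ to conclude $V > |G'|+1$. The only cosmetic difference is that the paper phrases the final step as a contrapositive (``if no non-zero solution, then at most two non-integral variables''), whereas you argue directly.
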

\begin{proof}
Similar to before, let $G$ be the set of item, batch pairs $(i,B)$ such that $\sum_{s \in B} y^i_s=1$. There are exactly $|G|+1$ constraints.

Therefore, whenever $|G|+2$ non-integral $y^i_s$ variables (and hence $|G|+2$ variables in the linear system) exist, the system has either no solution, or infinitely many solutions. Since $\delta=0$ is a solution, there must be infinitely many non-zero solutions. Using standard linear algebra techniques, we can find such a solution efficiently.

If $|G| \geq 2$, there must be at least $|G|+2$ non-integral candidate orders (because each element of $G$ has more than one) and therefore, at least $|G|+2$ of them must exist. Therefore, if there is no solution, $|G| \leq 1$. and the number of non-integral $y^i_s$ variables, $|G| + 1 \leq 2$.
\end{proof}
\begin{itemize}
    \item[(b)] Similarly to step $2(b)$, we change $y^i_s$ variables at rate $\delta^i_s$ until some $y^i_s$ becomes integral. We also change $r_{it}$ at rate $-\sum_{s \in I_{it}} \delta^i_s$.
    
\end{itemize}

The cost guarantee below follows from the fact that the only steps that increase the cost of the current solution are the roundups in Steps 2 and 3. All that remain to complete the proof of Theorem \ref{thm: pipage_special}. is to verify feasibility, which is done in Lemma \ref{feasibility_monochromatic}

\begin{lemma}\label{monopipage-last-lemma}
The total extra cost incurred by pipage based rounding is at most $2K_0+2K^{M}_{\max} \leq \frac{\epsilon}{6}\LP^{\sol}_2$.
\end{lemma}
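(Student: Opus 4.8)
The plan is to account for every operation in the pipage-rounding procedure and show that only the two final roundups in Step 2 and the two final roundups in Step 3 can ever increase the objective, then bound the incurred increase and invoke the guessing step to dominate it by $\frac{\epsilon}{6}\LP^{\sol}_2$. First I would argue that the piping moves in Steps 2(b) and 3(b) are cost-neutral (or cost-decreasing): the perturbation $\delta$ is chosen from the null space of a linear system whose first equation $\sum_o n_o \delta_o = 0$ (respectively $\sum_{i,s} n^i_s \delta^i_s = 0$) guarantees, via Lemma~\ref{pipage_monochromatic_rejection_batch_candidate} and its Step-3 analogue, that $\sum_{d} r_d$ is unchanged, so the rejection-penalty part of the objective does not move; meanwhile the general-order, item-order, and holding-cost terms are linear in the perturbed variables, so the cost is an affine function of the scalar step length along $\delta$, and we choose the sign of the move (as specified in 2(b)/3(b): increase at rate $\delta_o$ if that does not raise cost, else at rate $-\delta_o$) so that cost is nonincreasing until some variable hits an integral value. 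The batch constraints $\sum_{o\in B}\delta_o = 0$ are exactly what keeps each batch's order-mass fixed so that the ``$\sum_s y_s \le 1$ per batch'' invariant — and hence the linear (not piecewise-linear) formula for $\sum_d r_d$ — is preserved; I would state this preservation explicitly as it is the structural crux that makes the affine-cost argument valid throughout.

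Next I would isolate the only cost-increasing events. In Step 2, once at most two fractional candidate orders remain we round their $y_s$ up to $1$; each such roundup increases $y_s$ by at most $1$ and, since a candidate order at a split timestep $s\notin\mathcal T$ has $y^i_s \in \{0,y_s\}$, it also raises the corresponding item variables, but the total extra ordering cost from rounding one candidate order up is at most $K_0$ plus the item costs it carries — and here I would use that after Step 2's piping the surviving fractional orders lie in batches interior to $[s_1,s_2]$ intervals, and bound the general-order contribution by $K_0$ per order, giving at most $2K_0$ from Step 2. (The item-order cost changes from these two roundups are deferred to/absorbed in Step 3's accounting since Step 3 re-rounds item variables.) In Step 3, once at most two fractional $y^i_s$ remain we round them up; by step (a) of the enumeration phase every fractional item variable has cost at most $K^M_{\max}$ (variables with $K_i > K^M_{\max}$ outside the guessed set were fixed to $0$, and the $M$ most expensive item orders were fixed to $1$ and hence never fractional), so each of the two roundups costs at most $K^M_{\max}$, giving at most $2K^M_{\max}$. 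Summing, the total extra cost is at most $2K_0 + 2K^M_{\max}$.

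Finally I would convert $2K_0 + 2K^M_{\max}$ into an $\frac{\epsilon}{6}$-fraction of $\LP^{\sol}_2$. The guessing of the $M$ most expensive item orders means $K^M_{\max}$ is the $M$-th largest item-order cost in the optimal solution, so $M\cdot K^M_{\max} \le \LP^{\sol}_{\itm} \le \LP^{\sol}$, i.e. $K^M_{\max} \le \LP^{\sol}/M$; similarly, under the running assumption that the optimal solution has at least $M$ replenishment orders, $M\cdot K_0 \le \LP^{\sol}_{\gen} \le \LP^{\sol}$, so $K_0 \le \LP^{\sol}/M$. Hence $2K_0 + 2K^M_{\max} \le \frac{4}{M}\LP^{\sol} \le \frac{4}{M}\LP^{\sol}_2 \cdot \frac{\LP^{\sol}}{\LP^{\sol}_2}$; more directly, since Instance 2 inherits the portion of the LP that remains and the guessed-order bookkeeping is charged against the same $\LP^{\sol}$, the choice $\frac{10\log M}{M}\le\frac{\epsilon}{6}$ (in particular $\frac{4}{M}\le\frac{\epsilon}{6}$ for the stated $M$) yields $2K_0 + 2K^M_{\max} \le \frac{\epsilon}{6}\LP^{\sol}_2$, as claimed. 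The main obstacle I anticipate is the bookkeeping that makes ``only four roundups ever cost anything'' rigorous: one must verify that the batch-preservation invariant genuinely survives every piping move (including the timestep-splitting of Step 1 and the re-interpretation of (item,batch) pairs as batches in Step 3), and that the final roundups' item-cost increments are not double-counted between Steps 2 and 3 — this is where a careful statement of which variables are touched when, and why each touched fractional item variable is cheap thanks to enumeration step (a), carries the argument.
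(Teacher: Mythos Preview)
Your approach is essentially the same as the paper's, which offers only the one-sentence justification that ``the only steps that increase the cost of the current solution are the roundups in Steps 2 and 3.'' You correctly expand this: the piping moves are cost-nonincreasing by the sign choice in 2(b)/3(b), the Step~2 roundup touches only two $y_s$ variables (contributing at most $2K_0$), and the Step~3 roundup touches only two $y^i_s$ variables whose item type must satisfy $K_i\le K^M_{\max}$ by the enumeration in step~(a), giving $2K^M_{\max}$.

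One point deserves sharpening in both your write-up and the paper's: the second inequality $2K_0+2K^M_{\max}\le\frac{\epsilon}{6}\LP^{\sol}_2$. Your bounds $MK_0\le\opt$ and $MK^M_{\max}\le\opt$ are against $\opt$ (equivalently $\LP^{\sol}$), not against $\LP^{\sol}_2$, and there is no a priori reason $\LP^{\sol}_2$ dominates either quantity. The circular manipulation $\frac{4}{M}\LP^{\sol}\le\frac{4}{M}\LP^{\sol}_2\cdot\frac{\LP^{\sol}}{\LP^{\sol}_2}$ does not help. What actually matters for Theorem~\ref{thm:jrpd_bound} is the bound against $\opt$: the sentence following Theorem~\ref{thm: pipage_special} in the outline uses ``the integral solution to Instance 2 costs at most $(1+\epsilon/6)\opt$,'' and that is what $2K_0+2K^M_{\max}\le\frac{4}{M}\opt\le\frac{\epsilon}{6}\opt$ delivers. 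So state the inequality against $\opt$ rather than $\LP^{\sol}_2$; this is a cosmetic imprecision in the lemma statement rather than a gap in your reasoning.
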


\begin{lemma}
Once a variable becomes integral in the solution maintained by the pipage rounding algorithm, it never changes. Moreover, the solution is always feasible.
\label{feasibility_monochromatic}
\end{lemma}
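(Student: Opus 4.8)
<br>

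The plan is to follow the LP solution through the three phases — the timestep-splitting of Step~1, the piping loop of Step~2, and the item-order rounding of Step~3 — and to verify that every constraint of the adapted Instance~2 relaxation holds at every intermediate moment, while simultaneously showing that no coordinate that has reached $0$ or $1$ is ever moved again. Step~1 is easy to clear: splitting a timestep $s$ merely replaces $y_s$ by $y_{s_1}+y_{s_2}$ and reassigns the $y^i_s$ mass, which (as noted in the construction) keeps the solution feasible and keeps every $0$-holding-cost interval inside a single batch, and it happens once, before any rounding, so it plays no role in the ``once integral, never changes'' assertion.

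For the monotone-integrality assertion the key point is structural: in each iteration of Step~2 the linear system carries a variable $\delta_o$ \emph{only} for candidate orders $o$ that are currently non-integral (and Step~3 carries $\delta^i_s$ only for currently non-integral $y^i_s$); part~(b) moves each variable at the constant rate given by the chosen solution and halts the instant \emph{some} coordinate first hits an integer, so nothing overshoots. Hence as soon as $y_s$ (or $y^i_s$) becomes $0$ or $1$, its candidate order (resp. item order) is integral and disappears from every later linear system, so it is never perturbed again; the two round-ups at the end of each phase touch only coordinates that are still fractional. The variables $r_d$ are not free — they are kept equal to $\max\{0,\,1-\sum_{s\in I_d}y^i_s\}$ — so their integrality rides on that of the $y^i$'s once the latter settle. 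I would spell out that a coordinate $y^i_s=0$ with $s\in I_d$ can never become positive (Step~2 only rescales existing orders and never adds an item to an order, and Step~3 never perturbs an integral $y^i_s$), which keeps $r_d=1$ fixed, and that the batch equality forces $\sum_{s\in I_d}y^i_s$ to stay at $1$ once it gets there, which keeps $r_d=0$ fixed.

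For feasibility I would check the constraints one by one. The box constraints $0\le y^i_s\le y_s\le 1$ hold because the invariant $y^i_s\in\{0,y_s\}$ is maintained through Step~2, because $y_s$ is already integral upon entering Step~3 (so every non-integral $y^i_s$ there has $y_s=1$), and because perturbations stop before leaving $[0,1]$ while round-ups only push a fractional value up to $1$. The constraint $r_{it}+\sum_{s\in I_{it}}y^i_s\ge 1$ is immediate from $r_{it}=\max\{0,1-\sum_{s\in I_{it}}y^i_s\}$, and $r_{it}\ge 0$ from the clamp (before any round-up already from $\sum_{s\in I_{it}}y^i_s\le\sum_{s\in B(i,t)}y_s\le 1$). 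The one constraint that needs real work is the rejection budget: Lemma~\ref{pipage_monochromatic_rejection_batch_candidate} and its Step-3 counterpart, combined with the equations $\sum_o n_o\delta_o=0$ and $\sum_{i,s}n^i_s\delta^i_s=0$, show that $\sum_d r_d$ is \emph{exactly} conserved during every perturbation, while each round-up strictly increases $\sum_o n_o y_o$ (resp. $\sum_{i,s}n^i_s y^i_s$) and hence can only decrease $\sum_d r_d$; one also has to verify the phase hand-off, namely that at the start of Step~3 the identity $n_o=\sum_i n^i_s$ holds so the count transfers intact. Therefore the rejection total never climbs above its initial, feasible value.

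The step I expect to be the main obstacle is exactly this last interaction: the clean identity $\sum_d r_d=|D|-\sum_o n_o y_o$ that drives the budget argument rests on the auxiliary invariant $\sum_{o\in B}y_o\le 1$ for every batch $B$, and the final round-ups can violate that invariant. The way I would resolve it is to note that the invariant is only ever invoked to justify the identity \emph{during} the loop, where it does hold; once an integral solution is in hand we simply take $r_d=\max\{0,1-\sum_{s\in I_d}y^i_s\}$, verify the original LP constraints directly (all monotone in the right direction), and use the fact that round-ups only shrink the total rejection weight to conclude the budget is still met.
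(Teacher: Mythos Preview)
Your proposal is correct and follows essentially the same argument as the paper's proof: integral $y$-variables are absent from later linear systems and hence frozen; $r_{it}$ hitting $1$ (resp.\ $0$) is tied to all relevant $y^i_s$ being $0$ (resp.\ the batch sum being pinned at $1$ by the batch equation), so it stays fixed; and the rejection budget is preserved by the first equation of each linear system and only improved by the final round-ups. Your write-up is somewhat more explicit than the paper's about the phase hand-off from Step~2 to Step~3 and about the fact that the batch-sum invariant can break only at the terminal round-ups (where the identity is no longer needed), but these are elaborations of the same proof rather than a different route.
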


\begin{proof}
If a $y$ variable becomes integral, clearly it is never changed again because it is never rounded up or down and is never in the linear system.

If an $r_{it}$ variable becomes $1$. Then for all $s$ such that $H^{it}_s = 0$, $y^i_s=0$ and because the corresponding $y^i_s$ variables are not in any subsequent linear system of equations, the $r_{it}$ variable is never changed again by the algorithm.

If an $r_{it}$ variable becomes $0$, then $\sum_{s \in B(it), s \leq t} y^i_s \geq 1$ and the choice of constraints in the linear systems will ensure that this sum never changes. Therefore, $r_{it}$ will remain $0$ until termination.

The above properties imply non-negativity and the construction of the linear system guarantees that the ``serve or reject" constraint is satisfied. Moreover, $y^i_s \leq y_s$ is trivially true.

The only constraint that needs to be checked is the rejection limit. The first constraint in the linear system implies that this must be true as long as $\sum_{s \in B} y^i_s \leq 1$ for all batches $B$. By our choice of constraints in the linear system, we maintain this invariant until the final round up in step 3. Therefore, the rejection limit is satisfied until the final round up in step 3. Since the final round up in step 3 only decreases $r$ variables, the rejection limit is satisfied.
\end{proof}

\subsection{$\rjrpd$: Solving Instance 1 with Iterative Rounding}

Given an LP solution $(\overline{y}^*,\overline{r}^*)$ for Instance 1, we build an iterative rounding algorithm that can find a $(1+\frac{\epsilon}{4})$ approximate integral solution. We will rely on a key structural property of an extreme point optimal solution; at most one item type $i$ has non-integral $\overline{y}^{*i}_s$. Moreover, between the first and last timesteps $s$ such that $0<\overline{y}^{*i}_{s}<1$ (denoted $s_1$ and $s_2$, respectively) there is no $s'$ such that $\overline{y}^{*i}_{s'}=1$. At each iteration, we will round up a single $\overline{y}^{*i}_s$ variable for $s \in [s_1,s_2]$ and before re-solving, update an extra family of constraints that depends on the current solution. This ensures that the sum of non-integral $\overline{y}^{*i}_s$ variables reduces by a constant fraction. Once this sum is sufficiently small, we have a pipage-inspired final step to complete the rounding.  

For the rest of this section, we will simplify notation by referring to $(\overline{y}^*,\overline{r}^*)$ as $(y,r)$. In future sections, we will use $\overline{y}^*,\overline{r}^*$ to distinguish between the optimal LP solution to Instance 1 and $\LP^{\sol}$, the optimal solution to the LP relaxation of the unsplit instance.  

\begin{definition}[Multibatch]
For a given LP solution $(y,r)$ for Instance 1, a multibatch is a pair $(i,I)$ of an item type and interval such that 
(i) for all $s \in I$, $y^i_s < 1$; 
(ii) the first and last timesteps of $I$ have $y^i_s>0$;
(iii) the interval $I$ is maximal in the sense that there is no $I' \neq I$ such that $I \subset I'$ and $(i,I')$ satisfying the first two properties.
\end{definition}

Because of the maximality condition, we know that if there are multiple multibatches of the same item type, the intervals corresponding to them must be disjoint.
\begin{lemma}\label{monoiter-delta}
    For each multibatch $(i,I)$, there exists a vector $\delta$ (with entries corresponding to each variable in the LP) such that (1) $\delta =0$ for all integral variables; (2)  $\delta = 0$ for all $y_s^{i'}$ where either $s \notin I$ or $i \neq i'$; (3) increasing all variables by $\delta$ still satisfies all constraints except the rejection limit constraint; 
    and (4) the above properties hold for $-\delta$ as well.
\end{lemma}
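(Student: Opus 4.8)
The plan is to construct the perturbation vector $\delta$ essentially by a dimension-counting argument, analogous to the pipage constructions in Lemmas~\ref{monopipage-sol-exists} and \ref{monopipage-sol-exists2}, but now adapted so that the support of $\delta$ is confined to a single multibatch $(i,I)$. First I would set $\delta = 0$ on every integral variable and on every $y^{i'}_s$ with $i' \ne i$ or $s \notin I$; this immediately secures properties (1) and (2). The free coordinates are then exactly the non-integral $y^i_s$ for $s \in I$ (by the defining properties of a multibatch, all of these satisfy $0 < y^i_s < 1$, and the endpoints of $I$ are strictly positive). Since this is Instance~1, each $t \in \mathcal{T}$ already has a general order ($y_s = 1$), so the constraint $x^{it}_s \le y^i_s$ and the ``serve or reject'' constraint together let us treat $r_{it}$ as the slack $1 - \sum_{s \in I_{it}} y^i_s$ whenever $I_{it}$ lies inside the multibatch interval; thus once we fix a perturbation of the $y^i_s$ variables, the induced changes to the $r_{it}$ variables are determined linearly. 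We do not require $\delta$ to preserve the rejection-limit constraint — property (3) explicitly exempts it — so the only equality constraints we must respect are those that are currently tight and that couple the $y^i_s$ for $s \in I$: namely, for each sub-interval $B \subseteq I$ on which $\sum_{s \in B} y^i_s = 1$ (a ``tight block''), we impose $\sum_{s \in B} \delta_s = 0$.

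Next I would count. Let $k$ be the number of non-integral $y^i_s$ with $s \in I$, and let $G$ be the collection of maximal tight blocks inside $I$; because these blocks are each a union of at least two fractional coordinates and are internally disjoint (a coordinate equal to $1$ cannot lie in $I$ by property~(i) of a multibatch, so tight blocks are separated by fractional coordinates that are not themselves saturating), we have $k \ge |G| + 1$ unless $G = \varnothing$, in which case $k \ge 2$ since the endpoints of $I$ are distinct fractional coordinates (if $I$ is a single timestep, $k = 1$ and the lemma is trivial with $\delta$ supported on that one coordinate, perturbing it up/down freely — but then there is no tight block constraint and we just take $\delta_s$ nonzero there). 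In the generic case we have a homogeneous linear system in $k$ variables with $|G| \le k - 1$ constraints, hence a nontrivial solution $\delta$ exists, and it can be found by standard linear algebra. Any such $\delta$, scaled small enough, keeps every $y^i_s$ in $(0,1)$ and keeps every $r_{it}$ non-negative (the $r_{it}$ for demand points whose interval is \emph{not} fully inside a tight block have strictly positive slack, and those fully inside a tight block have their $r_{it}$ frozen at the value determined by the tight constraint). The constraints $y^i_s \le y_s = 1$ and $x^{it}_s \le y^i_s$ remain satisfied for small perturbations since they are currently slack or can be tracked by moving $x^{it}_s$ with $y^i_s$. This gives property~(3), and since the construction is symmetric in sign, property~(4) is immediate.

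The step I expect to be the main obstacle is verifying cleanly that \emph{all} non-rejection-limit constraints survive the perturbation, in particular handling the $r_{it}$ variables and the $x^{it}_s$ variables uniformly. The subtlety is that a demand point's $0$-holding-cost interval $I_{it}$ may straddle the boundary of $I$ or of a tight block, in which case $r_{it}$ is neither frozen nor freely positive, and one must argue its slack behaves monotonically under the chosen $\delta$; the right move is to observe that the multibatch is by construction maximal, so any $I_{it}$ extending beyond $I$ must reach a timestep with $y^i_{s'} = 1$, making $\sum_{s \in I_{it}} y^i_s \ge 1$ and hence $r_{it} = 0$ already and out of play. Once that observation is in hand, every $r_{it}$ falls into one of the three clean cases (frozen at $0$, frozen by a tight block, or strictly positive slack), and small perturbations are safe. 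I would also remark that the construction guarantees, after the perturbation is pushed until some coordinate hits $0$ or $1$, that the number of non-integral $y^i_s$ strictly decreases, which is what drives the iterative-rounding progress argument in the surrounding subsection.
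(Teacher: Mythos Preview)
Your overall strategy (find a nonzero kernel direction by linear algebra, then scale) is sound in spirit, and your handling of the $r_{it}$ cases at the end is correct. But the core dimension-counting step has a real gap. The tight equalities you must preserve are \emph{not} organized into disjoint ``maximal tight blocks.'' They come from (a) every demand point $(i,t)$ with $r_{it}=0$ and $\sum_{s\in I_{it}} y^i_s=1$, and (b) every added interval constraint $\sum_{s\in J} y^i_s\geq 1$ that is currently tight. These intervals can overlap arbitrarily inside the multibatch, and there can be far more than $k-1$ of them. Imposing only $\sum_{s\in B}\delta_s=0$ for each maximal connected union $B$ is strictly weaker than imposing $\sum_{s\in J}\delta_s=0$ for every tight interval $J$, so your system does not capture all the constraints you need. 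Consequently the inequality $|G|\leq k-1$ (even if true for your $G$) does not give you a valid perturbation.

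Your approach can be repaired, but it needs a rank argument, not a count. Restrict to the $k$ positions with $0<y^i_s<1$; the tight constraints become interval (consecutive-ones) constraints on these $k$ coordinates, so the constraint matrix is totally unimodular. If its rank were $k$, any $k$ independent rows would have determinant $\pm 1$, forcing the unique solution of $A y=\mathbf{1}$ to be integral --- contradicting that all $k$ coordinates are strictly fractional. Hence the rank is at most $k-1$ and a nonzero $\delta$ exists. That is the missing ingredient.

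The paper does \emph{not} argue this way. It builds $\delta$ explicitly: it defines a sequence $t_1<t_2<\cdots$ of fractional timesteps in the multibatch, chosen greedily so that every tight interval in $\overline{V}$ contains either zero or exactly two consecutive terms $t_j,t_{j+1}$; then it sets $\delta^i_{t_j}=(-1)^{j+1}\kappa$ (and adjusts $r_{it}$ accordingly). The ``two consecutive terms'' property makes $\sum_{s\in J}\delta^i_s=0$ automatic for every tight $J$, so all constraints other than the rejection bound survive for both $\pm\delta$. This explicit alternating construction is also what reappears, essentially unchanged, in the general holding-cost version (Lemma~\ref{iterative_feasible_direction_general}), so it is worth internalizing.
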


\begin{proof}
    Let $\overline{V}$ be the set of intervals $[s,t]$ entirely within the multibatch such that either $[s,t] \in V^i_{\tight}$ or there exists some demand point $(i,t) \in D_{\tight}$ such that $s(i,t)=s$. Let $\kappa$ be an arbitrarily small positive value. 
    
    Notice that if there are two intervals $I_1=[s_1,t_1],I_2=[s_2,t_2] \in \overline{V}$ and $I_2 \subset I_1$ then $\sum_{s \in [s_1,s_2) \cup (t_2,t_1]} y^i_s=0$ because  $ \sum_{s \in I_1} y^i_s = \sum_{s \in I_2} y^i_s =1$
    Hence, the first and last timesteps in $[s_2,t_2]$ such that $y^i_s >0$ are also the first and last timesteps in $[s_1,t_1]$ such that $y^i_s >0$.

    Define an increasing sequence of timesteps as follows: $t_1$ is the first timestep in the multibatch with positive $y^i_s$. Let $t_2$ be the last timestep $s \in I$ such that $1>y^i_s>0$ and every interval in $\overline{V}$ that contains $t_1$ also contains $s$ (if any such intervals exist; if not the sequence just has one element).

    Given $t_k,t_{k-1}$, we define $t_{k+1}$ to be the last timestep $s \in I$ such that $1>y^i_s>0$ and every interval in $\overline{V}$ that contains $t_k$ but not $t_{k-1}$ also contains $t_{k+1}$ (if no such interval exists, the sequence terminates at $t_k$).

    We will increase $y^i_s$ at all odd members of the sequence and decrease $y^i_s$ at all even members of the sequence (i.e., $\delta=\kappa$ for odd $y^i_s$ and $-\kappa$ for even $y^i_s$). 
    
    Increase (or decrease) the rejection/satisfaction variables for each demand point appropriately; notice that if $r_{it} \in \{0,1\}$, it remains unchanged because it will include either 0 or 2 consecutive members of the sequence.

    Clearly, this satisfies the constraints that $\sum_{s \in I(i,t)} y^{i}_s+r_{it} \geq 1$ while also preserving non-negativity as well as the above properties 1 and 2. Notice that we could make exactly the same arguments for $-\delta$ instead of $\delta$ by decreasing for odd members of the sequence and increasing for even members of the sequence.
\end{proof}

Using this, we can show that at an extreme point optimal solution, there is at most one multibatch:

\begin{lemma}
If $(y,r)$ is an extreme point solution, then there is at most one multibatch $i,I$ in the corresponding solution.
\end{lemma}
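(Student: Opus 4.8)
The plan is to argue by contradiction: suppose $(y,r)$ is an extreme point and it has two distinct multibatches $(i_1, I_1)$ and $(i_2, I_2)$. By Lemma~\ref{monoiter-delta}, each multibatch $(i_j, I_j)$ admits a perturbation vector $\delta^{(j)}$ that is supported only on the (non-integral) $y^{i_j}_s$ variables for $s \in I_j$ together with the associated $r$-variables, has the property that moving by $\pm\delta^{(j)}$ preserves all constraints except possibly the rejection-limit constraint, and is nonzero (it increases at least one $y^{i_j}_s$ variable). The supports of $\delta^{(1)}$ and $\delta^{(2)}$ on the $y$-variables are disjoint, since multibatches of different item types clearly involve disjoint $y$-variables, and multibatches of the same item type have disjoint intervals by the maximality remark following the definition. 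So $\delta^{(1)}$ and $\delta^{(2)}$ are linearly independent as vectors.

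The one constraint that Lemma~\ref{monoiter-delta} does not control is the single rejection-limit constraint $\sum_{(i,t)} w_{it} r_{it} \le R$ (here with all weights equal, $\sum r_{it} \le R$). Each $\delta^{(j)}$ changes the left-hand side of this constraint by some scalar amount; call it $\rho_j$ (the rate at which $\sum r$ changes under $\delta^{(j)}$). Now form the combination $\delta = \rho_2\, \delta^{(1)} - \rho_1\, \delta^{(2)}$ (if both $\rho_j = 0$, just take $\delta = \delta^{(1)}$). This $\delta$ is still nonzero by the linear independence established above, it is zero on all integral variables, and moving by $+\delta$ or $-\delta$ preserves every LP constraint: all constraints other than the rejection limit are preserved because Lemma~\ref{monoiter-delta}(3),(4) apply to each summand, and the rejection-limit constraint is preserved with equality because its net rate of change under $\delta$ is $\rho_2\rho_1 - \rho_1\rho_2 = 0$. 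Since $(y \pm \kappa\delta)$ are two distinct feasible points whose midpoint is $(y,r)$ for small enough $\kappa > 0$, this contradicts the assumption that $(y,r)$ is an extreme point of the LP polytope.

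The main obstacle — and the one point that needs care rather than being routine — is verifying that $\delta$ genuinely stays feasible when moved by a small but positive step, i.e. that the perturbations from Lemma~\ref{monoiter-delta} compose correctly and no constraint that was tight gets violated. In particular one must check: (a) that the perturbation leaves untouched every $y^i_s$ that equals $0$ or $1$ and every $r_{it} \in \{0,1\}$ (this is built into Lemma~\ref{monoiter-delta}, via the ``0 or 2 consecutive members of the sequence'' observation), so that strict inequalities have room; (b) that scaling $\delta^{(1)}, \delta^{(2)}$ by the constants $\rho_2, \rho_1$ and adding does not break the equalities $\sum_{s\in I(i,t)}y^i_s + r_{it} \ge 1$ that were tight — this holds because each $\delta^{(j)}$ preserves those (Lemma~\ref{monoiter-delta}(3)) and linear combinations of perturbations preserving a linear equality also preserve it; and (c) that the step size $\kappa$ can be taken uniformly small enough to keep all the non-integral variables strictly between their bounds. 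Once these bookkeeping points are in place, the extreme-point contradiction is immediate.
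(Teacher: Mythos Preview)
Your proposal is correct and follows essentially the same approach as the paper: assume two multibatches, invoke Lemma~\ref{monoiter-delta} to get a perturbation direction for each, take a linear combination that zeroes out the net change in the rejection sum, and conclude that $(y,r)$ is the midpoint of two feasible points, contradicting extremality. Your explicit use of the disjointness of the $y$-supports to guarantee linear independence (hence $\delta\ne0$) is a detail the paper leaves implicit, and your bookkeeping paragraph (a)--(c) spells out what the paper covers in one line by appealing to convexity of the feasible region; neither addition is wrong, just slightly more verbose than necessary.
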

\begin{proof}
Suppose for contradiction that there are multiple multibatches. Choose two, $i_1,I_1$ and $i_2,I_2$. By the previous lemma, we can find  $\delta^1,\delta^2$ corresponding to each multibatch. Using properties (3) and (4) from the previous lemma, and the fact that the feasible region of a linear program is convex, we obtain that for any pair of real numbers $\alpha_1,\alpha_2$ such that $|\alpha_1|+|\alpha_2| \leq 1$, the solution $(y',r')=(y,r)+\alpha_1 \delta_1 + \alpha_2 \delta_2$ satisfies all constraints except the rejection limit and that the same is true for $(y',r')=(y,r) -
\alpha_1 \delta_1 -
\alpha_2 \delta_2$.

However, it is clear that we can choose $\alpha_1,\alpha_2$ such that $\alpha_1 \sum_{d \in D} \delta^1_d + \alpha_2 \sum_{d \in D} \delta^2_d = 0 = -\alpha_1 \sum_{d \in D} \delta^1_d - \alpha_2 \sum_{d \in D} \delta^2_d$ and $|\alpha_1|,|\alpha_2| \leq 1$.

For these values of $\alpha_1,\alpha_2$ both $(y',r')=(y,r)+\alpha_1 \delta_1 + \alpha_2 \delta_2$ and $(y',r')=(y,r) - \alpha_1 \delta_1 -\alpha_2 \delta_2$ both satisfy the rejection limit because $\sum_{d} r'_d = \sum_{d} r_d$ in both cases.

Therefore, they are both feasible
\end{proof}
If there are multiple multibatches, the $\delta$ vectors corresponding to the multibatches would be linearly independent and there exists a non-zero linear combination of them that keeps the total of the rejection variables unchanged. This would imply that the current solution is not an extreme point. Therefore, an extreme point solution can have at most one multibatch.

\smallskip
\noindent
{\bf The Iterative Rounding Algorithm}
First, recall that we construct $\overline{y}$ in the same manner as the $2-$approx-imation for $\jrpd$. Then we choose $r_{it} \geq 0$ to be as small as possible while also satisfying the ``every demand must be served or rejected'' constraint. Each rejection variable that takes value $0$ (or $1$) is constrained to remain $0$ (or $1$ respectively) for the rest of the algorithm. Consequently, for any demand with $r_d=0$ (or $r_d=1$ respectively), it follows that $\overline{r}_d=0$ (or $\overline{r}_d = 1$ respectively) even for the rounded solution $(\overline{y},\overline{r})$.
\begin{itemize}
    \item[1.] \textbf{Finding the multibatch} We first find an extreme point optimal solution $(y,r)$ to the LP for Instance 1. Let $i$ be an item type with some $s$ such that $y^i_s$ is non-integral (if none exists, then our solution is integral and no rounding is required). Let $\LP_{\itm}$ be the item ordering cost of this solution. The figure below shows an example of what a multibatch might look like.
  
\end{itemize}

\begin{figure}[h]
\includegraphics[width=12cm]
{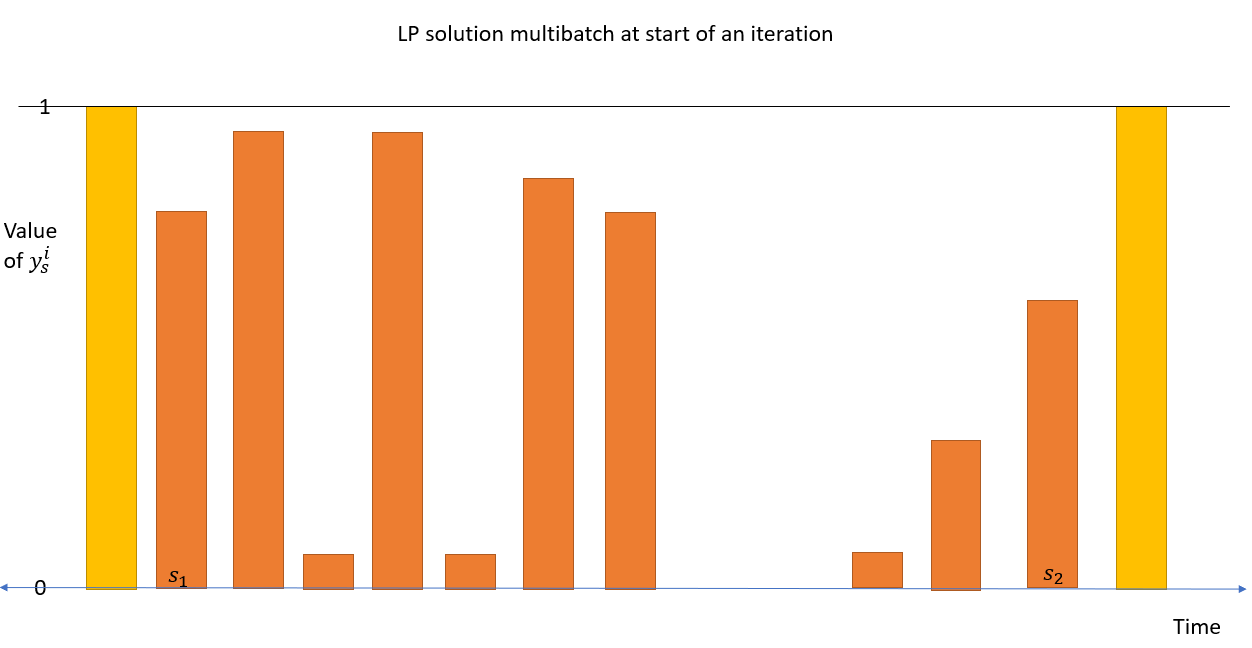}
\centering
\end{figure}
\begin{itemize}
    \item[2.] \textbf{Reducing Fractional Weight} Let $s_1$ and $s_2$ be the first and last timesteps with non-integral $y^i_s$. Let $Z^i_t=\sum_{s=s_1}^t y^i_s$.  If we already have $Z^i_{s_2} \leq 4$, proceed to Step 4. Else, for every sub-interval $I$ of $[s_1,s_2]$ such that $\sum_{s \in I} y^i_s \geq 1$, add the constraint $\sum_{s \in I} y^i_s \geq 1$ to the LP. Add constraints fixing the value of all integral variables. The figure below shows some interval constraints that will be added.
\end{itemize}

\begin{figure}[h]
\includegraphics[width=12cm]
{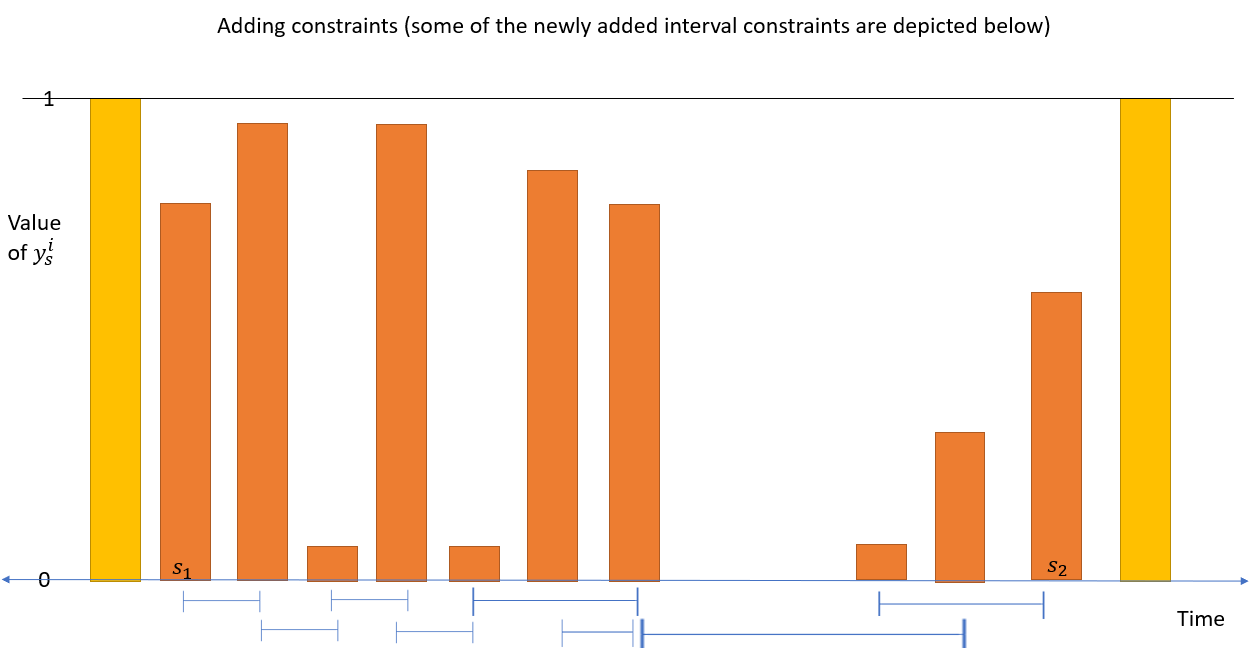}
\centering
\end{figure}

\begin{itemize}
    \item[3.] \textbf{Rounding up some variable(s)} Let $t'$ be such that $Z^i_{t'-1}<\frac{Z^i_{s_2}}{2}\leq Z^i_{t'}$. Round $y^i_{t'}$ up to 1 and constrain $y^i_{t'} =1$. Resolve the LP and go back to Step 2. Due to the extra constraints we added, a significant portion of the ordering cost remains on each side (before and after) of the rounded up variable, but the next multibatch will be either entirely before or after the rounded up timestep. Consequently, the fractional weight will reduce. This is formalized in lemma \ref{iterative_special_logarithmic_iterations} below. The figures preceding the lemma show what the solution looks like immediately before and immediately after the re-solving of the LP in this step.
\end{itemize}
\textbf{Note:} Yes, we are (almost immediately) discarding the LP solution given as input and instead finding an optimal extreme point LP solution. It is only almost immediate because we added constraints ensuring that the rejection variable corresponding to a demand point remains unchanged if it was $0$ (or $1$) in the integral solution. Notice that forcing $r_d=1$ is essentially the same as deleting a demand point and reducing the rejection limit by $1$. Constraining $r_d=0$ for a demand point is the same as adding an interval constraint (with interval $I_d$) that must be satisfied. 

This intuition of the interval constraints as being artificial demand points that cost nothing extra to service may make the algorithm and its analysis more digestible.

\begin{figure}[h]
\includegraphics[width=12cm]
{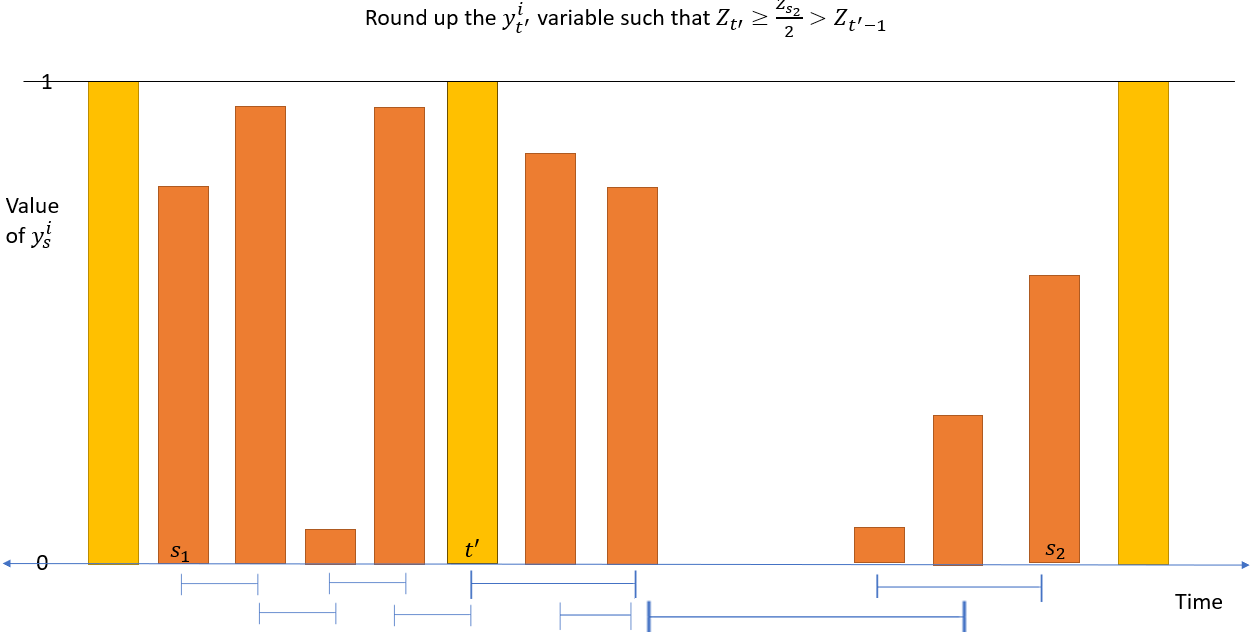}
\centering
\end{figure}

\begin{figure}[h]
\includegraphics[width=12cm]
{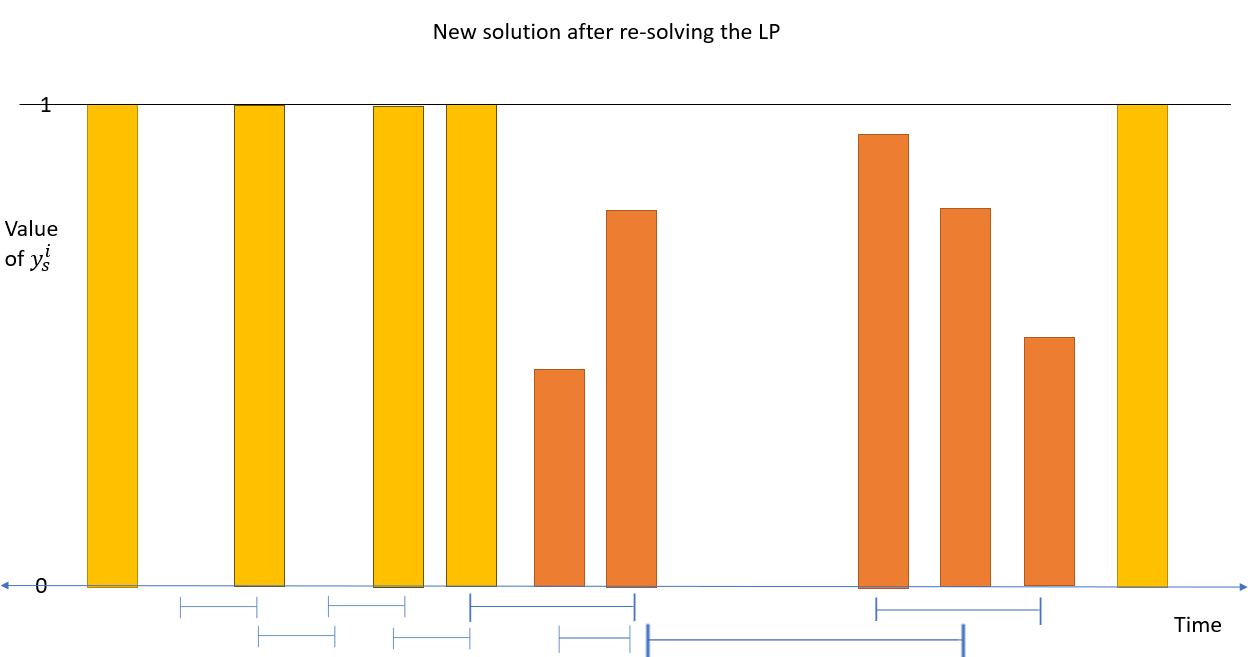}
\centering
\end{figure}
    \begin{lemma}
    In each iteration of Steps 2 \& 3 of the iterative rounding procedure, the total fractional cost $\sum_{s:y^i_s < 1} y^i_s$ decreases by a factor of at 
 least 8/7. Consequently, if $Q_{\init} > 4$ is the initial fractional weight $Z^i_{s_2}$, then $\lceil\log_{\frac{8}{7}}(\frac{Q_{\init}}{4})\rceil \leq 10\ln(Q_{\init}) - 9$ many iterations suffice.
    \label{iterative_special_logarithmic_iterations}
    \end{lemma}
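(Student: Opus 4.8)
The plan is to compare, within a single pass of Steps~2--3, the extreme point $(y,r)$ present at the start of the pass with the re-solved extreme point produced at the end. Write $Q=Z^i_{s_2}=\sum_{s\,:\,y^i_s\in(0,1)}y^i_s$ for the fractional weight at the start; recall that a pass is run only when $Q>4$, that $I=[s_1,s_2]$ is the unique multibatch, and that $t'$ is chosen with $Z^i_{t'-1}<Q/2\le Z^i_{t'}$. Since $0<y^i_{t'}<1$ inside the multibatch, $Z^i_{t'-1}>Q/2-1>1$ and symmetrically the suffix weight $Q-Z^i_{t'}>Q/2-1>1$; in particular both $[s_1,t'-1]$ and $[t'+1,s_2]$ are nonempty and each already carries more than one unit of fractional weight before the pass.

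First I would pin down the free variables. Fixing every integral variable and setting $y^i_{t'}=1$ leaves, for item $i$, exactly the variables $\{\,y^i_s:y^i_s\in(0,1)\,\}\setminus\{y^i_{t'}\}$ free, and these all lie in $[s_1,s_2]$ with starting values summing to $Q-y^i_{t'}$. The LP minimizes the item-ordering cost, which on these variables equals $K_i\sum_{s\text{ free}}y^i_s$; keeping their old values is feasible (rounding $y^i_{t'}$ up only helps the interval and ``serve-or-reject'' constraints), so the re-solved solution satisfies $\sum_{s\text{ free}}y'^i_s\le Q-y^i_{t'}<Q$. By the single-multibatch property, the re-solved extreme point has all of its fractional mass in one multibatch $J$, and since $y'^i_{t'}=1$ we must have $J\subseteq[s_1,t'-1]$ or $J\subseteq[t'+1,s_2]$; assume the latter, the other case being symmetric. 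Then the new fractional weight is $F=\sum_{s\in J}y'^i_s$, while the free variables on the opposite side $[s_1,t'-1]$ are all $0/1$-valued with integer total $W_L$, and $F+W_L\le\sum_{s\text{ free}}y'^i_s<Q$, i.e.\ $F<Q-W_L$.

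The crux is a lower bound on $W_L$ coming from the interval constraints added in Step~2. Every subinterval $I'\subseteq[s_1,t'-1]$ with $\sum_{s\in I'}y^i_s\ge1$ in the starting solution forces $\sum_{s\in I'}y'^i_s\ge1$ afterwards, and inside $[s_1,t'-1]$ the only non-fixed item-$i$ variables are precisely those this sum ranges over, so it is a genuine lower bound on that portion of $W_L$. Greedily carving $[s_1,t'-1]$ from the left into maximal-from-the-left blocks of starting-weight $\ge1$ --- each of weight $<2$ since $y^i_s<1$ throughout the multibatch --- produces $P_L$ pairwise-disjoint such blocks with $2P_L>Z^i_{t'-1}-1>Q/2-2$, hence $P_L>Q/4-1$; moreover $P_L\ge1$ unconditionally, since $[s_1,t'-1]$ itself has starting-weight $>1$. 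Thus $W_L\ge P_L$ and $F<Q-P_L$. For $Q\le8$ this gives $F<Q-1\le7Q/8$, and for $Q\ge8$ we have $P_L>Q/4-1\ge Q/8$, hence $F<7Q/8$; in all cases the fractional weight shrinks by a factor at least $8/7$. The iteration count then follows mechanically: after $k$ passes the weight is at most $Q_{\init}(7/8)^k$ and the loop halts once this is $\le4$, so at most $\lceil\log_{8/7}(Q_{\init}/4)\rceil$ passes occur; since $\ln(8/7)>1/8$, this is at most $8\ln(Q_{\init}/4)+1=8\ln Q_{\init}-8\ln4+1\le10\ln Q_{\init}-9$.

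The step I expect to be the real obstacle is making the lower bound on $W_L$ airtight: I must confirm that the added interval constraints are not ``absorbed'' by variables fixed to $0$ (so that a unit-weight block really forces a unit of mass among the free variables), that the greedy family of blocks lives strictly inside $[s_1,t'-1]$ rather than straddling $t'$, and that the single-multibatch property still applies to the re-solved LP after the fixing and interval constraints are imposed (these only enlarge the family of tight constraints used in Lemma~\ref{monoiter-delta}, so the argument there should carry over verbatim). The small regime $4<Q<8$, where the greedy estimate $P_L>Q/4-1$ degrades, is the other point to watch; it is rescued by the crude bound $P_L\ge1$ together with the elementary inequality $Q-1\le 7Q/8$, valid for $Q\le8$.
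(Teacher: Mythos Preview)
Your proof is correct and follows essentially the same approach as the paper: both arguments pack at least $\lceil Q/4-1\rceil$ disjoint unit-weight subintervals on each side of $t'$, use the single-multibatch property to force one side to become entirely integral after re-solving, and conclude that the fractional weight drops by a factor $8/7$ (with the same small-$Q$ case split). Your version is arguably a touch cleaner---you invoke optimality explicitly to get $F+W_L<Q$ rather than the paper's $Q+1-\lceil Q/4\rceil$---but the underlying mechanism is identical, and the concerns you flag (fixed-to-zero variables inside the multibatch, the re-solved extreme point still having a single multibatch, the $4<Q<8$ regime) are all handled correctly.
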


\begin{proof}
    Let $(i,I)$ be the multibatch with $I = [s_1,s_2]$. We round $y^i_{t'}$ up to 1 where $t'$ satisfies $Z^i_{t'-1}<\frac{Z^i_{s_2}}{2}\leq Z^i_{t'}$. Let $Q = Z^i_{s_2}$.


Since $y^i_{s} \leq 1$ for all $s$, $\sum_{s=s_1}^{t'-1} y^i_s$ and $ \sum_{s=t'+1}^{s_2} y^i_s$ are both at least $\frac{Q}{2}-1$. Now consider the family of intervals $\mathcal{J}$ as follows.


Start with the interval $J= [ s_1 , s_1 + 1]$ . If $\sum_{s \in J} y^i_{s} <1$ then increase the right endpoint of $J$ by 1. Repeat until $\sum_{s \in J} y^i_{s} \geq 1$ and then add $J$ to $\mathcal{I}$. To find the next interval $J'$, start with the first timestep $t$ such that $[t,t+1] \subset [s_1,t'-1]$, $y^i_t > 0$ and $t$ is not a part of an interval in $\mathcal{J}$ (if such a $t$ exists). If we can extend this interval $J'$ in the same manner as before while ensuring that $\sum_{s \in J'} y^i_{s} \geq 1$, do so and add $J'$ to $\mathcal{J}$. Otherwise terminate the procedure.


Notice that every interval $J \in \mathcal{J}$ satisfies $1 \leq \sum_{s \in J} y^i_{s} < 2$. Therefore, the LP solution in the next iteration will ensure that $\sum_{s \in J} y^i_{s} \geq 1$ for all $J \in \mathcal{J}$. Moreover, the intervals in $\mathcal{J}$ are pairwise disjoint and $\sum_{J \in \mathcal{J}} \sum_{s \in J} y^i_{s} > \frac{Q}{2}-2$.

Therefore, there are at least $\lceil \frac{Q}{4} - 1 \rceil$ many such intervals in $[s_1,t'-1]$. Note that $\lceil \frac{Q}{4} - 1 \rceil \geq 1$ for $Q > 4$. Similarly, there are at least $\lceil \frac{Q}{4}  -1 \rceil$ such intervals in $[t'+1,s_2]$. When we re-solve the LP with $y_{t'}^i = 1$, there can once again be at most one multibatch $(i,I')$. Further, $I'$ must be a sub-interval of either $[s_1,t'-1]$ or $[t'+1,s_2]$, so either $[s_1,t'-1]$ has no non-integral $y^i_{s}$ variables or $[t'+1,s_2]$ has no non-integral $y^i_{s}$. However, we argued that both $[s_1,t'-1]$ and $[t'+1,s_2]$ have $\lceil \frac{Q}{4}-1 \rceil$ disjoint intervals $J$ constrained to have $\sum_{s \in J} y^i_{s} \geq 1$. Therefore, the sum of fractional $y^i_{s}$ variables must have decreased by at least $\lceil \frac{Q}{4}-1 \rceil$. Therefore, the new fractional weight is at most $Q + 1 - \lceil \frac{Q}{4} \rceil \leq \frac{7Q}{8}$ for $Q > 4$. 

The claim about the number of iterations directly follows. 
\end{proof}
\begin{itemize}     \item[4.] \textbf{Finishing up the last non-integer solution} For all $k \in \{0,\ldots,4\}$, let $t'_k$ be the first timestep in which $Z^i_{t'_k}$ exceeds $k$. Define $t'_{4}$ to be $s_2$ (note that $t_0$ is $s_1$). For $t'_0,\ldots,t_{4}'$, round $y^i_{t'}$ to 1.
    \item[5.] Apply pipage rounding Step 3 with the remaining non-integral $y^i_s$ variables.
\end{itemize}

By Lemma \ref{iterative_special_logarithmic_iterations}, the number of iterations is logarithmic in the total fractional orders placed. However, our choice of $M$ ensures that when the number of fractional orders exceeds $M$, the ratio $\frac{\log Q_{\init}}{Q_{\init}}$ is small. If the number of fractional orders is below $M$, we can bound the cost of rounding by using the guessed components of our solution. This gives us the following lemma, which completes the proof of Theorem \ref{thm:iterative_special}:

\begin{lemma}\label{monoiter-final}
The total extra cost incurred by iterative rounding is at most $\frac{\epsilon}{4} \LP_{\itm}$.
\end{lemma}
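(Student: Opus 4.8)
The plan is to account for every point at which iterative rounding increases the cost of the feasible solution it maintains. Since in Instance~1 we treat $K_0=0$ and the rejection penalties vanish in $\rjrpd$, the only terms in the (effective) cost are the item ordering costs $\sum_{i,s}y^i_sK_i$; hence the solution's cost changes only when some $y^i_s$ is rounded up to $1$, and each such roundup raises the cost by at most $K_i$ (re-solving the LP after a roundup, as in Step~3, can only lower the cost from there, since the rounded solution is feasible for the enlarged LP). Crucially we begin Step~1 from an \emph{extreme point} optimal solution, so by the lemma that an extreme point has at most one multibatch, only a single item type $i$ has any non-integral $y^i_s$ at all; therefore every roundup performed anywhere in the algorithm is a roundup of an order of this one type $i$, and by the guessing in Step~(a) every fractional item order satisfies $K_i\le K^M_{\max}$. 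So the total extra cost is at most $(\#\text{roundups})\cdot K_i\le(\#\text{roundups})\cdot K^M_{\max}$.

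Next I would count the roundups. By Lemma~\ref{iterative_special_logarithmic_iterations}, if the initial fractional weight $Q_{\init}=Z^i_{s_2}$ exceeds $4$ then Steps~2--3 perform at most $10\ln Q_{\init}-9$ iterations, one roundup each; if $Q_{\init}\le 4$ they are skipped. Step~4 rounds up at most $5$ further variables $t'_0,\dots,t'_4$, and Step~5 (pipage Step~3 applied to the leftover fractional $y^i_s$, with $K_0$ effectively $0$) at most $2$ more, by Lemma~\ref{monopipage-last-lemma}. So the number of roundups is at most $10\ln Q_{\init}$ when $Q_{\init}>4$ and at most $7$ otherwise.

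The last step is a case analysis on $Q_{\init}$ against the threshold $M$, using two complementary lower bounds on $\LP_{\itm}$. If $Q_{\init}\ge M$: all timesteps of the multibatch interval lie in $\mathcal{T}$, so the LP paid at least $Q_{\init}K_i$ on those fractional type-$i$ orders, giving $\LP_{\itm}\ge Q_{\init}K_i$; hence the extra cost is at most $(10\ln Q_{\init})K_i\le\frac{10\ln Q_{\init}}{Q_{\init}}\LP_{\itm}$, and since $x\mapsto(\ln x)/x$ is decreasing for $x\ge M\ge 3$ and $M$ satisfies $\tfrac{10\log M}{M}\le\tfrac{\epsilon}{6}$, this is at most $\tfrac{\epsilon}{6}\LP_{\itm}\le\tfrac{\epsilon}{4}\LP_{\itm}$. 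If $Q_{\init}<M$: the number of roundups is the $\epsilon$-dependent constant $\lceil\log_{8/7}(M/4)\rceil+7=O(\log M)$, and I would instead use $\LP_{\itm}\ge M\,K^M_{\max}$. This bound is where the enumeration of Step~(a) is essential: the $M$ guessed item orders are the $M$ most expensive in the optimal solution (each of cost $\ge K^M_{\max}$), and each sits at a timestep $s$ with $y^i_s=1$, hence $y_s=1$; the running sum $Z$ then increases by exactly $1$ across $s$, so $Z$ first attains some integer at $s$ and $s\in\mathcal{T}$ --- so all $M$ guessed orders survive into the LP for Instance~1 (at value $1$) and contribute $\ge M\,K^M_{\max}$ to $\LP_{\itm}$. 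Thus the extra cost is at most $\big(\lceil\log_{8/7}(M/4)\rceil+7\big)K^M_{\max}\le\frac{O(\log M)}{M}\LP_{\itm}\le\tfrac{\epsilon}{4}\LP_{\itm}$ by the choice of $M$.

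I expect the $Q_{\init}<M$ regime to be the main obstacle: there the ``amortize a handful of roundups against many orders'' idea is unavailable, so one must lean on the guessed components being simultaneously expensive and preserved inside Instance~1, which is exactly what yields $\LP_{\itm}\ge M\,K^M_{\max}$. The finicky supporting points are the ``$y_s=1\Rightarrow s\in\mathcal{T}$'' observation (keeping the guessed item orders in Instance~1) and verifying that a single threshold $M$ works for both regimes --- i.e.\ that the slack in $\tfrac{10\log M}{M}\le\tfrac{\epsilon}{6}$ also swallows the $\log_{8/7}$ factor appearing in the small-$Q_{\init}$ iteration count.
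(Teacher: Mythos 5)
Your proposal is correct and follows essentially the same route as the paper: bound the number of roundups by $10\ln Q_{\init}$ (or $7$ when $Q_{\init}\le 4$) using Lemma~\ref{iterative_special_logarithmic_iterations} plus Steps 4--5, then case on $Q_{\init}$ versus $M$ using the two lower bounds $\LP_{\itm}\ge MK_i$ and $\LP_{\itm}\ge Q_{\init}K_i$ and the choice of $M$. The only difference is cosmetic: you spell out the ``$y_s=1\Rightarrow s\in\mathcal{T}$'' argument justifying that the $M$ guessed item orders survive into Instance~1, which the paper treats as implicit.
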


\begin{proof}
Suppose $Q_{\init} > 4$. By iterative rounding, we incur an additional cost of $(10\ln(Q_{\init}) - 9)K_i$. We then open $5$ orders of cost $5K_i$ before the pipage step, which incurs an additional cost of $2K_i$. So we incur an additional cost of $7K_i$. Therefore, the total cost is at most $10\ln(Q_{\init}) K_i$. 
Since we guessed the $M$ most expensive item orders, we have the following lower bounds on $\LP_{\itm}$:
\[\LP_{\itm} \geq M K_i, Q_{\init} K_i\]
\begin{itemize}
\item If $Q_{\init} \leq M$, $10\ln(Q_{\init}) K_i \leq 10\ln(M) K_i \leq \frac{10\ln(M)}{M} \LP_{\itm}$.
\item If $Q_{\init} > M$, $10\ln(Q_{\init}) K_i \leq \frac{10\ln(Q_{\init})}{Q_{\init}} \LP_{\itm} \leq \frac{10\ln(M)}{M} \LP_{\itm}$ as $\frac{\ln x}{x}$ is decreasing for $x \geq 3$ and $M \geq 3$.
\end{itemize}
Suppose $Q_{\init} \leq 4$. Then the cost is at most $7K_i \leq 10 K_i \leq \frac{10}{M} \LP_{\itm} \leq \frac{10\ln M}{M} \LP_{\itm}$. We chose $M = 1000 + \frac{1000}{\epsilon}\ln(1000/\epsilon)$, so $\frac{10\ln M}{M} \leq \frac{\eps}{6}$ as desired. 
\end{proof}

\section{General Holding Costs and Fairness Constraints}
Our techniques are flexible enough to accommodate general holding costs as well, and almost directly yield a $(4+\epsilon)$ approximation for $\rjrp$. In this section, we first sketch this result, and then provide the main ideas for improving upon the constants, as well as extending the results to our fairness settings (though, as indicated initially, we will restrict the presentation here to the setting in which the demands are partitioned into color classes.)

\textbf{General Holding costs,} $\mathbf{4+\epsilon:}$ We reduce $\rjrp$ to $\rjrpd$ by solving the LP relaxation for $\rjrp$ and letting, for each demand $d=(i,t)$, $I(d)=[s(d),t]$ where $s(d)$ is the earliest timestep with $x^d_s>0$; that is, $H^{d}_s := 0$ if $s \in I(d)$ and $H^{d}_s := \infty$ if $s < s(d)$. Then we apply the $(3+\epsilon)$-approximation algorithm. If the resulting solution rejects fewer than $R$ demand points, we reject additional demand points with $r_d>0$ in $\LP^{\sol}$ so that there are exactly $R$ rejections. We will use the fact that if $r_d=0$ for any demand point, then the demand point must be serviced (it must be true that $\sum_{s \in I(d)} y_s \geq 1$ and therefore, there must be an IGO placed in the interval $I(d)$). The dual LP is shown below:

\begin{empheq}{alignat=3}
\text{minimize}\quad &  \sum_{(i,t) \in D} b_{it} - R \lambda \\
\text{subject to } & b_{it} - l^{it}_s \leq H^{it}_s \quad \text{for each } (i,t) \in D; \\
& \sum_{(i,t) \in D : t \geq s} l^{it}_s - z^i_s \leq K_i \quad \text{for each } i \in [N], s \in [T]; \\
& \sum_{i=1}^N z^i_s \leq K_0 \quad \text{for each } s \in [T]; \\
& b_{it} - \lambda \leq 0 \quad \text{for each } (i,t) \in D; \\
& b_{it},l^{it}_s,z^i_s,\lambda \geq 0 \quad \text{for each } s \in [T], i \in [N], (i,t) \in D 
\end{empheq}

$(2)$ corresponds to the $x^{it}_s$ variables, $(3)$ to the general ordering variables, $(4)$ to the item ordering variables, $(5)$ to the rejection variables. The $b_{it}$ variables correspond to the "serve or reject" constraint, $\lambda$ to the rejection limit constraint, $l^{it}_s$ to the "servicing requires an item order" and the $z^i_s$ variables to the "item order requires a general order" 

We use the optimal dual solution to show that the holding costs are at most $\opt$. The objective value of the optimal dual solution is $\sum_{d \in D} b_d^* - R\lambda^* , $ where $b_d^*$ is the value of the dual variable corresponding to the ``serve or reject'' constraint for demand $d$ and $\lambda^*$ is the value of the variable corresponding to the rejection limit. By complementary slackness, $x^d_s>0 \implies b_d \geq H^d_s$ and $r_d>0 \implies b_d = \lambda^*$. However, the only demands rejected have $r_d>0$, which implies that if $A$ is the set of serviced demand points, the dual objective value is $\sum_{d \in A} b_d^*$. This implies, by strong duality, that the total holding cost incurred by serviced demand points is at most $\opt$. Note that even this bound improves on previously results for $\rjrp$.

We outline how to refine and extend this approach to achieve the approximation factors claimed and in the generality described in the introduction. In this section we hope to illustrate the main concepts without miring the reader in technical details, which are then given in subsequent sections.
\vspace{-.4in}
\paragraph{Weights} If each demand point $d$ has a weight $w_d$, we can simply solve the LP replacing the $\sum_{d \in D} r_d \leq R$ constraint with $\sum_{r \in D} w_dr_d \leq R$. We split into 2 instances and generate a fractional solution in exactly the same way. During pipage rounding, we will now simply ensure the total weighted fractional sum of rejections does not decrease; during iterative rounding, having weights does not change our procedure of rounding up service variables and resolving.
\vspace{-.2in}
\paragraph{Improving the constant}
In the discussion that follows, we will use $(x^*,y^*,r^*)$ to denote the optimal LP solution prior to splitting (and the initial solution to Instance 2) and $(\overline{x},\overline{y},\overline{r})$ to denote the initial solution to Instance 1.

\begin{itemize}
    \item We will show in the following sections that modified versions of our rounding algorithms yield $(1+\epsilon)$-approximation factors to Instances 1 and 2 even with general holding cost functions.
    \item  Notice that we can take all rejected demands from Instance 1 and add them to Instance 2 (but deleting all $x^{it}_s$ not in the batch containing their deadline) and the pipage rounding algorithm will still yield a solution with cost at most $(1+\epsilon) \opt$.
    \item We can think of the rejection limit $\sum_{d \in D} w_dr^*_d \leq R_{\weight}$ as being equivalent to a satisfaction requirement $\sum_{d\in D} w_d(1-r^*_d) \geq W-R_{\weight}$ where $W=\sum_{d \in D} w_d$.
    
    \item Suppose we found a set of initial general orders $\mathcal{T}$ in the same way as in Section 3. For a demand point $d = (i,t)$, let $o(d)$ be the last timestep $s \in \mathcal{T}$ such that $s \leq t$. We introduce a nonlinear integer program below, where the variables $x^{it}_{\Left}$ and $x^{it}_{\Right}$ represent the extent to which $(i,t)$ is served in Instance 1 and 2, respectively. Notice that any integral solution to this nonlinear integer program is a feasible solution for the original problem and has the same cost.    
    \begin{empheq}{align*}
\text{minimize} \quad & \sum_{s=1}^T K_0 y_s + \sum_{i=1}^N \sum_{s = 1}^T K_i y^i_s + \sum_{(i,t) \in D} \sum_{s=1}^T H_s^{it} x_s^{it} + \sum_{(i,t) \in D} p_{it}r_{it}
\end{empheq}
\begin{empheq}{align*}
\text{subject to }
\quad & x^{it}_{\Left}=\sum_{s \in \mathcal{T}: s \leq o(d)} x^{it}_s , \quad \text{for each } (i,t) \in D;\\
\quad & x^{it}_{\Right}=\sum_{s: o(d) < s  \leq t} x^{it}_s , \quad \text{for each } (i,t) \in D;\\
\quad & y^i_s \leq y_s, \quad \text{for each } i \in [N], s \in [T]; \\
\quad & x_s^{it} \leq y_s^i , \quad \text{for each } (i,t) \in D, s \in [T]; \\
\quad & \sum_{s \in [s_1+1,s_2-1]}y_s \leq 1, \quad \text{for each pair of consecutive IGOs } s_1,s_2 \in \mathcal{T};\\
\quad & \sum_{(i,t) \in D} (x^{it}_{\Right}+(1-x^{it}_{\Right})x^{it}_{\Left})w_d \geq W-R; \\
\quad & y_s = 1, \quad \text{for each } s \in \mathcal{T};\\
\quad & y_s^i,y_s,x^{it}_s, x^{it}_{\Left}, x^{it}_{\Right} \in \{0,1\}, \quad \text{for each } (i,t) \in D, s \leq t,s \in [T].
\end{empheq}
    
    \item Notice that given a non-integral solution to the above nonlinear program, rounding to an integral solution is relatively simple. Let $W=\sum_{d \in D} w_d$ and $ W'=\sum_{d \in D} (1-x^d_{\Right})w_d $. Fix $x^d_{\Right}$ and all corresponding variables;$x^d_{\Left}$ is now a solution to Instance 1 (with $r^d_{\Left}=1-x^d_{\Left}$ and the caveat that for each demand $d$, $w_d$ is changed to $(1-x^d_{\Right})w_d$ and the rejection limit is now $W'-(W-R-\sum_{d \in D} w_d x^d_{\Right})$). We round $x^d_{\Left}$ by iterative rounding. Then, fix the integral $x^d_{\Left}$ and by treating $x^d_{\Right}$ as a solution to Instance 2, round $x^d_{\Right}$. Both these rounding procedures cost a factor of $(1+\frac{\epsilon}{4})$ and yield an integral solution.

    \item  All that remains is to find a good non-integral solution. For $\rjrpd$, instead of placing general orders when $Z_t$ reaches $1,2...$ we can place general orders when $Z_t$ exceeds $\frac{1}{3},\frac{2}{3}...$. The advantage of doing this is that when constructing a feasible LP solution for Instance 1, instead of choosing $y^i_s$ at a timestep $s$ with an initial general order to be the sum of $y^{i*}_s$ in the batches immediately preceding and immediately succeeding the initial general order, we can simply choose it to be the sum of $1.5y^{i*}_s$ in the batch immediately preceding the general order. This will give us a feasible solution to the nonlinear problem above which we know we can round. This solution has cost $(4+\epsilon)\LP^{\sol}_{\gen}+(2.5+\epsilon)\LP^{\sol}_{\itm}$. The best of this and the algorithm as described earlier is a 
    $(2.8+\epsilon)$-approximation.
    \item 
    \textbf{General holding costs improvement} We now describe how to obtain an integral solution of cost $3\LP^{\sol}_{\gen}+3\LP^{\sol}_{\itm}+2\LP^{\sol}_{\hold}$. A more careful choice of parameters (the correct choice will depend on the relative ratios of $\LP^{\sol}_{\gen},\LP^{\sol}_{\itm},\LP^{\sol}_{\hold} $) gives us a $\frac{3\sqrt{5}-1}{2}+\epsilon = (2.86+\epsilon)$-approximation. 
    
    We place orders when $Z_t$ exceeds $0.5,1,1.5,2...$. Item ordering variables' values are chosen in the same way. However, when building $x^{it}_s$ values for Instance 1, we will need to be more cautious. For each initial general order timestep $o$, let $o^+$ be the set of timesteps in the batch after $o$ and $o^-$ be the set of timesteps in the batch before $o$. $x^{d}_o=\sum_{s \in o^-} x^d_s$. Then we let $o_d$ be the last initial general order before the deadline of $d$. We increase $x^d_{o_d}$ by $\min(d_{\Left},d_{\Right})$. This results in a feasible solution to the above non-linear program with cost $(3+\epsilon)\LP^{\sol}_{\gen}+(3+\epsilon)\LP^{\sol}_{\itm}+(2+\epsilon)LP^{\sol}_{\hold}$.
\end{itemize}
\noindent
\textbf{Multiple colors}
This step shows the power of the general framework in which we rely on pipage rounding: we can solve a linear system of equations to find a direction that does not increase cost and ensures all color rejection constraints are simultaneously satisfied with every batch having $\sum_{s} y_s \leq 1$. The system has a column for each candidate order and a row for each color and a row for each batch with multiple candidate orders. We show that if there are more than $2C$ fractional candidate orders, we can find a such a direction. Once there are at most $2C$ non-integral candidate orders, we round the $y_s$ variables and re-apply pipage rounding with each item order now being a candidate order and each item-batch pair as a separate batch. This gives us at most $2C$ non-integral item orders, all of which we round up. 

During iterative rounding, there are now $O(C)$ multibatches -- we split the largest multibatch by placing $O(C)$ integral orders. Moreover, the fraction by which the total non-integral $y^i_s$ variables decreases is $1-1/{O(C)}$ and these complexities require us to round up $O(C^3 \log C)$ many item orders. To bound the cost of this, we choose $M=\theta({C^4}/{\epsilon})$.

\noindent
\textbf{Rejection penalties} Remarkably, our techniques yield exactly the same approximation factors when every demand point also has a (non-negative) rejection penalty. First, we note that nowhere in our analysis or algorithms do we exploit the fact that every demand point has only one color. To achieve the desired approximation, we first solve the LP with the correct guesses. Then we add an artificial constraint that the total penalty due to rejections does not exceed the penalty cost incurred by the initial LP solution. When running the pipage rounding and the iterative rounding algorithms, we treat this artificial constraint as an extra color constraint. Our returned solutions will have rejection penalty cost no greater than the initial LP solution. The remaining components of the cost incurred can be bounded the same way as before. Note that line of argument allows for the immediate extension to the case where the weights are $C$-dimensional feature vectors.

\textbf{Time-dependent general ordering costs:} If the fixed cost of a replenishment order were not fixed, but instead varied arbitrarily with time and for timestep $s$ was $K_s$ instead of $K_0$, then we would need to guess the $M$ timesteps with the most expensive general orders placed in the optimal solution. Fix the corresponding $y_s$ variables to $1$ and the algorithm would be almost exactly the same. 

Instead of placing IGOs when $Z_t$ exceeds $0,1,2...$, we would, uniformly at random choose $\delta \in [0,1]$ and place general orders when $Z_t$ exceeds $0, \delta, 1+\delta, 2+\delta ...$ and at the smallest $t$ such that $Z_t=Z_T$. The expected general ordering cost of the IGOs is at most $(1+\frac{\epsilon}{8})LP^{\sol}_{\gen}$. The additional cost in the pipage rounding algorithm due to rounding up general orders would be at most $2C G_{max}$ and therefore, we would obtain the same constant approximation. 
\section{Algorithm for General Holding Costs}
In this section, we will discuss how our algorithm works with general holding costs and fairness constraints: in this setting, we have $C$ colors and every demand point $d$ has weight $w^c_d$. We are given a rejection weight limit for each color. This section and the ones that follow it will be much more focused on proving that we achieve the desired results and much less focused on providing intuition.

We will assume the rejection penalties $p_d$ are all $0$ and explain how to generalize to arbitrary penalties at the end. This algorithm will be a $3$-approximation for $0-\infty$ holding costs $(\cjrpd)$ and a $4$-approximation for general holding costs $(\cjrp)$. We will explain how to improve these factors to $2.8+\epsilon$ and $2.86+\epsilon$ in a later section.

The algorithm for the general case proceeds as described above. We require a fixed amount of side information which can be found by enumerating over all possibilities. The amount of information required depends on $C$ but crucially, neither on $|D|$ nor on $T$. We require $M$ to be sufficiently large that $g(m) = (C+1)\frac{40C^2+90C^2 \log m}{m} \leq \frac{\epsilon}{8}$ and $g'(m)<0$ for all $m \geq M>\frac{50C}{\epsilon}$. There exists a sufficiently large constant $W$ which guarantees that this occurs when $M=W\frac{C^4}{\epsilon}$. We choose $M$ appropriately and proceed as follows:

\paragraph{Small Solutions} We can solve instances with an optimal solution that has fewer than $M$ replenishment orders by enumeration.

\textbf{Case 1:} At most $M$ demand points are serviced in an optimal solution. If an optimal solution serves at most $M$ demand points, we can, by enumeration, guess both the demand points and the time at which each of them is served in the optimal solution. This means we know the exact set of timesteps at which replenishment orders exist and which item types are in each such order; consequently, we have an optimal solution.

\textbf{Case 2:} More then $M$ demand points are serviced, but at most $M$ item orders are used in an optimal solution. If an optimal solution has at most $M$ item orders, we guess these by enumeration. This means we know the entire replenishment order schedule and the only remaining task is to identify the set of demand points serviced. In this case we guess the $M$ demand points that the optimal solution serviced with the most expensive holding cost and the timestep at which each of these demands was served. Let the smallest of these $M$ guessed holding costs be $H^{M}_{max}$. For all demand points $(i,t)$ other than , If $H^{it}_s > H^{M}_{\max}$, $H^{it}_s := \infty$.
\begin{lemma}
\label{fixed_replenishment_schedule_lemma}
Given the complete replenishment order schedule (i.e., all item ordering variables and general ordering variables are integral) but a (possibly non-integral) $x^d_s,r_d$ such that the corresponding solution is feasible and has cost $Q$, we can find an integral solution of cost at most $Q+C H^M_{\max}$. 
\end{lemma}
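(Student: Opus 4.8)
The plan is to reduce this to a one-dimensional covering/knapsack-type rounding problem, since the replenishment schedule is already fixed. Once every $y_s$ and $y^i_s$ is integral, the only decision left is, for each demand point $d=(i,t)$, which (if any) open item order of type $i$ with timestep $s \leq t$ to route $d$ to; equivalently, how to set $x^d_s$ and $r_d$. Crucially, given the schedule, each demand $d$ has only finitely many candidate service times (those $s$ with $y^i_s = 1$, $s \le t$), the cheapest of which I will call its \emph{best holding cost} $h_d^{\min}$. In any integral solution $d$ is either rejected (paying $p_d = 0$ here, and consuming $w_d^c$ of budget $R_c$) or served at its best available time (paying $h_d^{\min}$; there is never a reason to serve it later since the order is already open and holding costs are monotone). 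So the residual problem is: choose a set $J \subseteq D$ of rejected demands with $\sum_{d \in J} w_d^c \le R_c$ for every color $c$, minimizing $\sum_{d \notin J} h_d^{\min}$ — plus the fixed schedule cost which is already integral and contributes equally to $Q$ and to our solution.

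The key step is to round the fractional $(x,r)$ restricted to this residual structure. First, symmetrize the fractional solution: without increasing cost, move all of $d$'s fractional service weight onto its best time $s_d^{\min}$, so that effectively $x^d = 1 - r_d$ is a single number and the fractional holding cost contributed by $d$ is exactly $(1-r_d)h_d^{\min}$ (this only decreases cost). Now I have a fractional point in the polytope $\{r : 0 \le r_d \le 1,\ \sum_d w_d^c r_d \le R_c\ \forall c\}$, a product of a box with $C$ knapsack constraints. Take an extreme point (vertex) $r^\star$ of this polytope with objective $\sum_d (1-r_d^\star)h_d^{\min}$ no larger than that of the symmetrized solution. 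A standard rank/counting argument shows that at a vertex, the number of coordinates $r_d^\star \notin \{0,1\}$ is at most the number of tight knapsack constraints, hence at most $C$. Round every fractional coordinate \emph{down} to $0$ — i.e. serve those $\le C$ demands. This keeps all color budgets satisfied (we only decreased the $r$'s), and it increases the holding cost by at most $\sum_{d : r_d^\star \text{ fractional}} r_d^\star h_d^{\min} \le \sum_{d : \text{frac}} h_d^{\min} \le C \cdot H^M_{\max}$, using that every unguessed demand has all its finite holding costs at most $H^M_{\max}$ (and any demand with a guessed expensive holding cost can be handled separately / is already pinned down). Total extra cost $\le C H^M_{\max}$, and the fixed-schedule part is unchanged, giving an integral solution of cost at most $Q + C H^M_{\max}$.

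I expect the main obstacle to be the bookkeeping around the guessed expensive-holding-cost demands and the ``$\infty$'' truncation: I need to argue that after guessing the $M$ most expensive served holding costs (and pinning those demands to their service times), the remaining fractional variables $x^d_s$ with $H^{it}_s$ set to $\infty$ genuinely vanish, so that the residual polytope really does only involve demands with best holding cost $\le H^M_{\max}$, and that the symmetrization step is compatible with the color-budget bookkeeping (weights must be the $C$-dimensional feature weights, but the argument is unchanged since we only ever decrease $r_d$). A secondary subtlety is handling the case where a fractional $r_d$ cannot simply be rounded down because $d$ has \emph{no} open order of its type before its deadline — but this cannot happen for a demand with $r_d^\star < 1$, since feasibility of the fractional point forces some $y^i_s = 1$ in its 0-holding-cost window once $r_d < 1$. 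Modulo these points, the vertex-sparsity plus round-down argument is the whole proof.
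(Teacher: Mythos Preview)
Your proposal is correct and follows essentially the same approach as the paper: reduce to choosing, for each demand, between rejection and service at its cheapest available open order, then observe that the resulting LP has only the $C$ color-budget constraints (plus box constraints), so a vertex has at most $C$ fractional $r_d$, which you round down to $0$ at extra cost at most $C H^M_{\max}$. The paper phrases ``cheapest available time'' as ``the last order of item $i$ before $t$'' (equivalent by monotonicity of holding costs) and simply solves the LP relaxation rather than explicitly symmetrizing and passing to a vertex, but the argument is the same.
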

\begin{proof}

Each demand point $(i,t)$ will either be served at the last order of item $i$ before $t$ (denoted $q(i,t)$), or be rejected. This means that we essentially need to optimize the following IP (which only has $C$ constraints other than $r_{it} \in \{0,1\}$ ):

\begin{empheq}{align*}
\text{minimize} \quad &  \sum_{(i,t) \in D} H_{q(i,t)}^{it} (1-r_{it})\\
\text{subject to }
\quad & \sum_{(i,t) \in D} w^c_{it}r_{it} \leq R_c \quad \text{for all } c \in [C]\\
\quad & r_{it} \in \{0,1\} \quad \text{for all } (i,t) \in D \\
\end{empheq}

We can simply solve the natural LP relaxation; since at most $C$ variables will be non-integral, we can simply round them down to $0$ (i.e., service the corresponding demand points). This increases cost by at most $CH^M_{\max}$
\end{proof}
\textbf{Case 3:} More than $M$ item orders and demand points serviced but at most $M$ replenishment orders placed in an optimal solution. If an optimal solution has at most $M$ replenishment orders, guess them by enumeration. Then guess the $M$ most expensive item orders and holding costs in the optimal solution. Let $K^M_{\max}$ be the item ordering cost of the $M$th most expensive item order as per our guesses and like before, let $H^M_{\max}$ be the $M$th most expensive holding cost guessed. We add constraints forcing $y_s=1$ if $s$ was a guessed replenishment order timestep and $y_s=0$ otherwise. For any item order $i,s$ not guessed, if $K_i>K^M_{\max},$ then add the constraint $y^i_s=0$. For all the guessed item orders $i,s$ add the constraint $y^i_s=1$. If $H^{it}_s > H^M_{\max}$ and demand point $(i,t)$ apply the iterative rounding procedure described later in this section with initial general orders at the guessed timesteps, which yields a $1+\epsilon$ approximation.

\paragraph{`Big' Solutions:} From now on, we will assume that any optimal solution has at least $M$ replenishment orders and that our side information is correct and proceed very similarly to before:

\begin{itemize}
    \item[(a)] \textbf{Augmenting and solving the LP:} Add constraints setting the general and $M$ most expensive item order variables corresponding to our guesses to $1$. We add constraints forcing the $x^{it}_s$ variables corresponding to the guessed demand points to 1. For all $(i,s)$ not in the guessed set such that $K_i > K^{M}_{\max}$, add the constraint $y^i_s=0$. For all $(d,s)$ such that $d$ was not one of the guessed demand points and $H^d_s>H^{M}_{\max}$, we change $H^d_s := \infty$. We also let $\kappa$ be the smallest positive value of any $H^{it}_s$. We increase each $H^{it}_s$ by $\frac{\epsilon \kappa(t-s)}{|D|T}$. This increment is not necessary, but greatly simplifies the analysis of the iterative rounding algorithm. Solve the LP (it must be feasible if our side information is correct).
    \item[(b)] \textbf{Initial General Orders (IGO):} Let $Z_t=\sum_{s \leq t} y_s$. Let $\mathcal{T}$ be the set of timesteps $t'$ in which the running sum $Z_{t'}$ first reaches $0,1,2,,\ldots, \lfloor Z_T \rfloor, Z_T$. Place a general order at each time $t \in \mathcal{T}$. Using these timesteps, we split the instance into two and give a $(1+\epsilon)$-approximation for each.
    \item[(c)] \textbf{Instance 1:} This consists of timesteps $\mathcal{T}$ and the set of demand points $(i,t)$ such that there exists $s \in \mathcal{T}$ for which $H_s^{it} = 0$; let $D_1$ denote this set. The rejection limit for each color $c$ in this instance is at most the extent to which the LP solution rejected demands in $D_1$: $\sum_{(i,t) \in D_1} w^c_{it}r_{it}$. We solve this instance via iterative rounding by treating $K_0=0$ for each $t \in \mathcal{T}$, since the general orders are already opened. As before, the key structural feature of this instance is that because there is a general order at each timestep $t \in \mathcal{T}$, the item types are (somewhat) decoupled. The only thing connecting them is the shared colorwise rejection limits. This structure enables us to obtain and utilize the following guarantee:
\end{itemize}
    \begin{theorem}
    \label{thm:iterative_general}
    Given an instance of $\cjrp$ with $p_d=0$ for all $d$ and the constraint that replenishment orders are placed at every timestep and an LP solution with cost $\LP^{1}$ we can obtain an integral solution of cost $(1+\frac{\epsilon}{8})\LP^{1}$ provided:
    \begin{itemize}
    \item $M$ is sufficiently large that $g(m) = (C+1)\frac{40C^2+90C^2 \log m}{m} \leq \frac{\epsilon}{8}$ and $g'(m)<0$ for all $m \geq M>\frac{50C}{\epsilon}$.
    \item Every optimal solution has at least $M$ replenishment orders
    \item The LP to which $LP^1$ is a solution is augmented with the necessary side information corresponding to $M$.
    \end{itemize}
    \end{theorem}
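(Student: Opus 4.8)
The plan is to adapt the single-color iterative rounding argument (Theorems~\ref{thm:iterative_special}, Lemma~\ref{iterative_special_logarithmic_iterations}, Lemma~\ref{monoiter-final}) to the $C$-color setting, keeping careful track of how the combinatorial parameters blow up. First I would argue the structural fact that an extreme point optimal LP solution has only $O(C)$ non-integral item orders in total: the LP has $C$ colorwise rejection constraints plus interval (``tight-interval'') constraints, and by a multibatch-style linear-algebra argument (generalizing Lemma~\ref{monoiter-delta} and the single-multibatch lemma that follows it) any extreme point has at most $O(C)$ ``multibatches.'' Concretely, if there were more than $O(C)$ multibatches one could build that many linearly independent perturbation vectors $\delta$ — each supported on a single multibatch, each preserving all constraints except the rejection limits — and then take a nonzero linear combination annihilating all $C$ rejection sums, contradicting extremality. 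So at each iteration we work with a bounded collection of multibatches; we split the largest one by rounding up $O(C)$ item orders placed where the partial sum $Z^i_t$ crosses $1,2,\ldots$, add interval constraints $\sum_{s\in J}y^i_s\ge 1$ for the disjoint unit-mass intervals $J$ on each side (exactly as in Lemma~\ref{iterative_special_logarithmic_iterations}), re-solve, and recurse. Because we now only guarantee that a $1/O(C)$ fraction of the fractional mass survives as forced interval constraints on the ``losing'' side, each iteration reduces the total fractional item-order mass by a factor $1-1/O(C)$ rather than $8/7$; hence $O(C\log Q_{\init})$ iterations suffice to drive the fractional mass below some $O(C)$ threshold, after which a pipage-style finishing step (Step~3 of the pipage subsection, generalized to $C$ colors as sketched in the ``Multiple colors'' paragraph) rounds the remaining $O(C)$ non-integral item orders up. Counting: $O(C)$ item-order roundups per iteration, $O(C\log Q_{\init})$ iterations, plus the splitting cost, gives a total of $O(C^3\log Q_{\init})$ item orders rounded up, each of cost at most $K^M_{\max}$ — this is where the exponent in $M$ comes from.

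Next I would turn the iteration count into the cost bound. Each item order $(i,s)$ that gets rounded up has $K_i \le K^M_{\max}$ because we guessed the $M$ most expensive item orders and forbade all larger ones; similarly, rejections only ever decrease during rounding, so feasibility of the colorwise limits is preserved throughout (this needs the invariant, as in Lemma~\ref{feasibility_monochromatic}, that $\sum_{s\in B}y^i_s\le 1$ is maintained on every batch until the final roundup, which only decreases $r$). For the cost: when $Q_{\init}\le M$ the total extra cost is at most (number of roundups)$\cdot K^M_{\max}$, and since $\LP^1 \ge M\cdot K^M_{\max}$ (there are $M$ guessed item orders each of cost $\ge K^M_{\max}$, all forced to $1$), the overhead is $O(C^3\log M)\cdot K^M_{\max} \le \frac{O(C^3\log M)}{M}\LP^1$, which is $\le \frac{\epsilon}{8}\LP^1$ by the hypothesis $g(M)\le\frac{\epsilon}{8}$. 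When $Q_{\init}>M$, I would instead bound the overhead by $O(C^3\log Q_{\init})\cdot K^M_{\max}$ and use $\LP^1\ge Q_{\init}\cdot K^M_{\max}$ (the mass being rounded is itself fractional item-order mass paying for itself in the objective) together with the fact that $g$ is decreasing on $[M,\infty)$ — this is exactly the role of the hypothesis $g'(m)<0$ — so again the overhead is at most $g(Q_{\init})\LP^1 \le g(M)\LP^1 \le \frac{\epsilon}{8}\LP^1$. The holding-cost side is handled by the $0$-$\infty$ reduction already in place for Instance~1 (demand points in $D_1$ have a $0$-holding-cost order among the IGOs, and the $H^d_s:=\infty$ truncation for $d$ not guessed is justified by the guessed $H^M_{\max}$), so rounding never pays holding cost beyond what $\LP^1$ already pays.

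The main obstacle, and the step I expect to require the most care, is establishing the $1-1/O(C)$ contraction per iteration in the multi-color, interval-constrained LP — i.e., the analogue of Lemma~\ref{iterative_special_logarithmic_iterations}. With a single color one argues that after rounding $y^i_{t'}$ up, the next extreme point has at most one multibatch, so an entire side $[s_1,t'-1]$ or $[t'+1,s_2]$ becomes integral, and the disjoint forced unit-intervals on that side account for a constant fraction of its mass. With $C$ colors and interval constraints the re-solved LP can retain $O(C)$ multibatches simultaneously, so we do not get a whole side cleared; instead we must argue that splitting the \emph{largest} multibatch (and placing $\Theta(C)$ integral orders inside it) forces enough disjoint unit-mass intervals that the largest multibatch's mass drops by a factor of $\Omega(1/C)$, and that the number of multibatches cannot grow. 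Making this bookkeeping airtight — in particular ensuring the added interval constraints stay valid across re-solves, that they do not themselves create new multibatches, and that the potential $\Phi=\sum_{i,s:y^i_s<1}y^i_s$ genuinely contracts geometrically rather than merely decreasing — is the crux; the rest is a faithful but more elaborate re-run of the single-color analysis with $C$-dependent constants, chosen so that $M=\Theta(C^4/\epsilon)$ absorbs everything.
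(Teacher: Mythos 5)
Your proposal has the right architecture — multibatch extreme-point structure, iterative rounding with interval constraints, geometric contraction of fractional mass, pipage-style finish, and the $M$-based amortization — and this is the same high-level approach the paper takes. But there are two substantive gaps. First, you flag the $1-1/O(C)$ contraction as ``the crux'' yet never supply the ingredient that makes it go through: in addition to the interval constraints $\sum_{s\in J}y^i_s\geq 1$, the paper must also add the constraint $\sum_s y^i_s\leq U$ (``the item-ordering cost does not increase after re-solving''). Without it, the re-solved LP is free to inflate $\sum_s y^i_s$ while honoring the interval constraints, and the potential $\Phi$ need not contract at all. This extra constraint is also why the paper's extreme-point bound is $C+1$ multibatches rather than $C$: each $\delta$-perturbation preserves everything except the $C$ rejection limits \emph{and} this new cost-cap constraint, so you need $C+2$ multibatches before a nonzero annihilating combination is forced. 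Your version, which only tries to annihilate $C$ rejection sums, implicitly omits this constraint and therefore both the bound count and the contraction are unjustified.

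Second, the holding-cost side is not handled by ``the $0$-$\infty$ reduction already in place.'' The LP in the iterative-rounding subsection carries genuine holding costs $H^{it}_s x^{it}_s$ in the objective, and the paper needs a whole extra layer of machinery to deal with them: the ``leanness'' property of optimal solutions (enforced by a tiny perturbation of the $H^{it}_s$), the classification into tight and semitight demand points when constructing the $\delta$ perturbations, and an adjusted pipage finish that rounds $x$ variables as well as $y$ variables and provably incurs an extra $C\cdot H^M_{\max}$ in holding cost per item type (Lemma~\ref{fixed_replenishment_schedule_lemma}), which is then absorbed into $M$ via the guessed $H^M_{\max}$. Your assertion that ``rounding never pays holding cost beyond what $\LP^1$ already pays'' is therefore false, and your budget for $g(M)$ must cover this $CH^M_{\max}$ term as well — the paper's choice of $g$ does so, but your accounting never introduces the term. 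These are not bookkeeping nits: without the cost-cap constraint the contraction fails, and without the holding-cost overhead term the bound is simply wrong.
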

\begin{itemize}
\item[(d)] \textbf{Instance 2:} This consists of timesteps $[T]-\mathcal{T}$ and all demand points $D_2$ that are not in $D_1$. The rejection limit for every color $c$ this instance is $\sum_{(i,t) \in D_2} w^c_dr_{it}$. We will add the constraint that between any two consecutive initial general orders $s_1$ and $s_2$, $\sum_{s=s_1+1}^{s_2-1} y_s \leq 1$.  The pipage rounding framework relies on being able to partition the time horizon, non-integral replenishment orders, and, critically, demand points into ``batches'', each of which has $\sum_{s} y_s \leq 1$. Maintaining this invariant ensures that the rejection variables are a linear function (instead of piecewise linear) of the $y$ variables which is essential in obtaining and utilizing the following guarantee:
\end{itemize}
    \begin{theorem}
    \label{thm: pipage_general}
    Given an instance of $\cjrp$ with $p_d=0$ for all $d$, an LP solution $(x,y,r)$ of cost $\LP^{2}$ to the instance and a partition of the time horizon into intervals called ``batches" such that for each batch $B$ $\sum_{s \in B} y_s \leq 1$ and for each demand point $(i,t)$, $x^{it}_s = 0$ for all $s$ not in the same batch as $t$, we can obtain an integral solution of cost $(1+\frac{\epsilon}{8}) \LP^2$ provided:
    \begin{itemize}
    \item $M$ is sufficiently large that $g(m) = (C+1)\frac{40C^2+90C^2 \log m}{m} \leq \frac{\epsilon}{8}$ and $g'(m)<0$ for all $m \geq M>\frac{50C}{\epsilon}$.
    \item Every optimal solution has at least $M$ replenishment orders
    \item The LP to which $LP^1$ is a solution is augmented with the necessary side information corresponding to $M$.
    \end{itemize}
    \end{theorem}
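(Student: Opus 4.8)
The plan is to lift the monochromatic pipage procedure behind Theorem~\ref{thm: pipage_special} to $C$ colours and to general holding costs. I would start with the \emph{splitting timesteps} preprocessing so that, WLOG, $y^i_s\in\{0,y_s\}$ at every timestep of Instance~2 (which are exactly the non-IGO timesteps); this only blows up $T$ by a factor $N$ and preserves both the batch partition and the LP cost, after which a \emph{candidate order} is associated with each timestep $s$ with $y_s>0$. The new ingredient is the holding-cost bookkeeping. Since the hypothesis forces $x^{it}_s=0$ for every $s$ outside the batch $B$ containing $t$, and $\sum_{s\in B}y_s\le 1$, each demand $d=(i,t)$ is served only by the candidate orders $o\subseteq B$ with $i\in o$ and $s(o)\le t$, of total $y$-mass at most~$1$. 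I would normalize the input by reassigning each $d$'s fractional service to these orders in non-decreasing order of $H^{d}_{s(o)}$ up to their capacities $y_o$ while leaving $r_d=1-\sum_o x^d_o$ untouched; this does not increase cost. Under this normalization both each colour's weighted rejection $\sum_d w^c_d r_d$ and the total holding cost are piecewise-linear in the candidate-order values and \emph{linear} along any elementary pipage move, provided the move is stopped the moment some candidate order hits $\{0,1\}$ or one of the (polynomially many) prefix-sum breakpoints of a demand is reached---each such breakpoint being a freshly tight constraint, so progress is never lost.

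The iteration then mirrors the monochromatic one with a wider linear system: one variable $\delta_o$ per non-integral candidate order, one homogeneous equation per colour encoding ``$\sum_d w^c_d r_d$ unchanged'', and one equation $\sum_{o\in B}\delta_o=0$ per batch $B$ holding at least two non-integral candidate orders (this pins $\sum_{o\in B}y_o$, preserving batch feasibility and the linearity of the rejections). With $F$ non-integral candidate orders the system has at most $C+F/2$ equations, so whenever $F>2C$ it admits a nonzero solution $\delta$; I would move along whichever of $\pm\delta$ does not increase cost (the objective is linear along the segment by the previous paragraph, and both $\pm\delta$ keep the solution feasible since the colour and batch equations are homogeneous), stopping at the first integrality or breakpoint event. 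Iterating drives $F$ down to at most $2C$; I then round those remaining candidate orders' $y_s$ up to~$1$ (leaving their $y^i_s$ alone) and rerun the same argument with item orders in the role of candidate orders and (item,\,batch) pairs in the role of batches, finishing with at most $2C$ fractional item orders, all rounded up.

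It remains to account for the extra cost and to check feasibility. The two rounding phases together add the cost of at most $2C$ general orders, at most $2C$ candidate orders' worth of un-guessed item costs, and at most $2C$ more un-guessed item orders; by exactly the accounting of Lemmas~\ref{monopipage-last-lemma} and~\ref{monoiter-final}---using that the LP is forced to contain the $M$ most expensive item orders of $\opt$, that every optimal solution uses at least $M$ replenishment orders, and that every un-guessed item cost is at most $K^{M}_{\max}$---this overhead is at most $\frac{\epsilon}{8}\LP^2$ for $M=\Theta(C^4/\epsilon)$ satisfying $g(M)\le\epsilon/8$. Feasibility follows as in Lemma~\ref{feasibility_monochromatic}: once a $y$- or $r$-variable is integral it is dropped from every later system and round-up; the colour equations keep each $\sum_d w^c_d r_d$ at most $R_c$ until the very last round-ups; and those round-ups only raise $y$'s and lower $r$'s, so no colour bound, and neither $x^{it}_s\le y^i_s$ nor the serve-or-reject constraint, can be violated.

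The step I expect to be the main obstacle is precisely this holding-cost bookkeeping: proving that, after the greedy-by-holding-cost normalization, the colour-rejection and holding-cost functions really are linear along each pipage move and that the auxiliary ``breakpoint'' stopping events are polynomially bounded so that the process terminates. Everything else---the counting bound $C+F/2<F$ for $F>2C$, the $\pm\delta$ feasibility argument, and the cost accounting---is a routine thickening of the proof of Theorem~\ref{thm: pipage_special}.
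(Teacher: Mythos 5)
Your outline gets the pipage skeleton right (one $\delta_o$ per fractional candidate order, $C$ colour equations plus one equation per batch with $\ge 2$ fractional orders, so a nonzero direction exists whenever there are more than $2C$ fractional orders), and the two-phase structure (candidate orders, then item orders) is the same as the paper's. But there are two genuine gaps.

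First, the handling of general holding costs. You propose a greedy-by-holding-cost reassignment of each demand's service mass, accept that the holding cost and rejections become only piecewise-linear in the $y$-values, and plan to stop pipage moves at "breakpoints." You flag this yourself as the main obstacle, and indeed it is not resolved: a stopping event at a breakpoint does not make any variable integral, so you have no guarantee that the number of fractional candidate orders decreases, and hence no termination bound. The paper avoids the issue entirely by a \emph{stronger} preprocessing than the one you took from the monochromatic case: it first \emph{trims} (shrinks $y^i_s$ down to $\max_t x^{it}_s$ and $y_s$ down to $\max_i y^i_s$) and then splits timesteps so that $x^{it}_s \in \{0,y_s\}$ for every demand $d$ and every non-IGO timestep $s$, not merely $y^i_s\in\{0,y_s\}$. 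After that, both the holding cost and every $r_{it}=1-\sum_s x^{it}_s$ are \emph{exactly} linear in the candidate-order values along the whole pipage segment, with no breakpoints; each move strictly decreases the number of fractional candidate orders, and the counting lemma then applies cleanly. Your splitting (which only equalizes $y^i_s$ with $y_s$) is not strong enough to give this.

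Second, you are missing the final rounding step for the demand-point service variables, and the corresponding cost term. After Steps 2 and 3, all $y_s$ and $y^i_s$ are integral, but the $x^{it}_s$ and $r_{it}$ can still be fractional. The paper's Step 4 (Lemma~\ref{fixed_replenishment_schedule_lemma}) fixes these by solving the natural LP over the $r_{it}$ alone: with the replenishment schedule fixed there are only $C$ nontrivial constraints, an extreme point has at most $C$ fractional $r_{it}$, and rounding those down (i.e., servicing the corresponding demands) costs at most $C\,H^{M}_{\max}$, bounded using the side information on the $M$ most expensive serviced holding costs. Your cost tally (``$2C$ general orders, $2C$ candidate orders' worth of un-guessed item costs, $2C$ more un-guessed item orders'') does not match the paper's $2C K_0 + 2C K^{M}_{\max} + C H^{M}_{\max}$, omits the $H^{M}_{\max}$ term entirely, and appears to double-charge for item costs in the candidate-order round-ups (rounding $y_s$ up to $1$ need not touch $y^i_s$, so it costs only $K_0$). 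Adding the trimming-plus-finer-splitting preprocessing and the final Lemma~\ref{fixed_replenishment_schedule_lemma}-style step would close both gaps.
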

    \textbf{Remark:} One can adjust the algorithms used to prove these theorems to ensure that the general ordering cost, item ordering cost and holding cost are larger than their $\LP^1$ counterparts by at most $(1+\frac{\epsilon}{8})$ with a little more side information. We don't do this because it merely complicates the proof without any additional value in the bound.
\begin{itemize}
    \item[(e)] \textbf{Forming the solution:} We merge the solutions from the two instances; if a replenishment order is in either solution, it is in the merged solution. If a demand point is served at some timestep in its' corresponding instance, it's served in the merged solution.
\end{itemize}

\paragraph{Notation:} For the rest of the paper we will use $s(i,t)$ to denote the earliest timestep that $x^{it}_s>0$ and $o(it)$ to denote the last IGO before $t$. Also, we will often use the word ``shift" to describe how we're constructing the item order and demand point service variables at IGO timesteps. Suppose we have two consecutive IGOs $s_k,s_{k+1}$ and several timesteps between them. What we mean by, for example shifting item orders to the previous and the next IGO is, that we increase $y^i_{s_{k+1}}$ and $y^i_{s_k}$ both by $\sum_{s \in [s_k+1,s_{k+1}-1]} y^i_s$. Of course, because we are shifting to the previous and the next, $y^i_{s_k} := \sum_{s=[s_{k-1}+1,s_{k+1}-1]} y^i_s$. In this process though, we are not changing $y^i_s$ for $s \notin \mathcal{T}$. The timesteps $s \notin \mathcal{T}$ will not, however be in instance 1 while the timesteps in $\mathcal{T}$ will not be in instance 2.
\paragraph{Feasible solutions for each instance:} As before, the optimal solution restricted to the timesteps and demand points in Instance 2 is a feasible solution for Instance 2. For Instance 1, we use the same approach as in section 3 to construct $\overline{y}^i_s$ variables by shifting the weight of every $y_s^i$ variables to the previous and next initial general order. 

We do something a little more parsimonious for the $x$ variables. For each $s \in \mathcal{T}, s \leq t$, let $q(s)$ be the previous IGO (if $s$ is the first timestep, $q(s)=0$ since the time horizon begins with $T=1$). We construct $\overline{x}^{it}_s$ by shifting any service between two IGOs to the later of the two when possible. Otherwise we just shift it to the earlier IGO.

More formally, $\overline{x}^{it}_s=\sum_{s'=q(s)+1}^s x^{it}_s$ unless $s$ is the last IGO before $t$ in which case, $\overline{x}^{it}_s = \sum_{s'=q(s)+1}^t x^{it}_s$.

\subsection{Pipage Rounding for Instance 2}
In this section, we will aim to prove Theorem \ref{thm: pipage_general}. We will assume that we are given an instance and an LP solution $(x,y,r)$ such that the desired partition of the time horizon into batches exists. 

Our algorithm is almost the same as before; we will consider groups of non-integral variables and adjust until at least one of them is integral, while maintaining feasibility and not increasing cost. At the end, we will need to round up at most $2C$ general orders, $2C$ item orders and $C$ individual demand point service variables which is why our solution might cost slightly more than $\LP^{\sol}$. We only need to round up $C$ individual demand point service variables and not $2C$ because the ``total number of orders is at most $1$" type restrictions are no longer necessary at the demand point level.

\begin{itemize}
\item[0.] \textbf{Trimming unnecessary orders:} The aim of this step is to make sure that there are no variables that are larger than they need to be. 

Recall that, by assumption, if $B(d)$ is the set of timesteps in the batch containing the deadline of $d \in D_2$, then $x^d_s=0$ if $s \notin B(d)$. Moreover, since $\sum_{s \in B} y_s \leq 1$ for each batch $B$, $\sum_{s \leq t} x^{it}_s \leq 1$ for every $(i,t) \in D_2$. Therefore, we can assume $r_{it}=1-\sum_{s \in B(i,t)} x^{it}_s=1-\sum_{s \leq t} x^{it}_s$.

For each item $i$ and timestep $s$ let $D^{is}_2$ be the set of demand points $(i,t)$ with non-zero $x^{it}_s$.  If possible, decrease $y^{i}_s$ so that $y^i_s= \max_{(i,t) \in D^{is}_2} x^{it}_s$. After doing this for each $i,s$ for each timestep $s$, if possible, decrease $y_s$ so that $y_s=\max_{i \in [N]} y^i_s$. Clearly, these operations are feasible and do not increase cost. We call a timestep $s$ ``trimmed" if none of these operations would actually result in a variable corresponding to $s$ decreasing.

\item[1.] \textbf{Splitting Timesteps:}  We now show that WLOG, $x^{d}_s \in \{0, y_s\}$ for each timestep $s$ in this instance (instance 2) by splitting these timesteps. The figure below gives a clear visual explanation; however, we also provide a more algebraic explanation immediately after it:

\begin{figure}[h]
\includegraphics[width=16cm]
{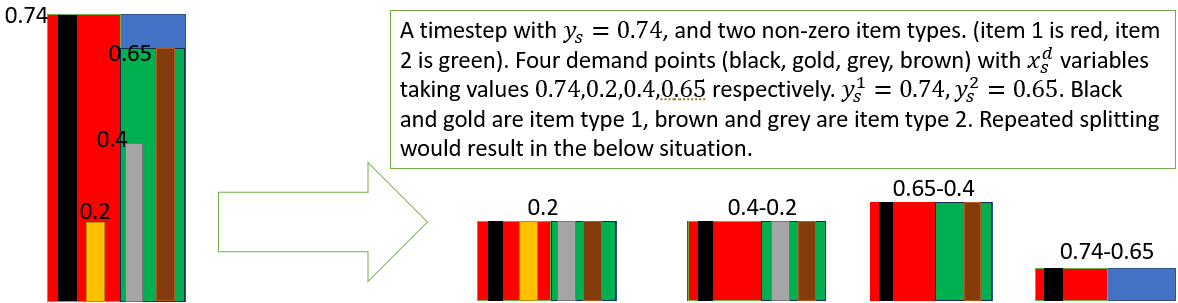}
\centering
\end{figure}

Pick any timestep $s$ with some $d$ such that $x^d_s \in (0,y_s)$ find the demand $d_{\low}$ with the lowest $x^{d_{\low}}_s$.  We now split this timestep into $s_1,s_2$. For all demand points $d$ $H^d_{s_1}=H^d_{s_2}=H^d_s$ (if the demands' deadline is at or after $s$). Essentially, $s_1$ will have every positive variable corresponding to $s$ take value $x^{d_{\low}}_s$ and $s_2$ will have every positive variable in $s$ decreased by the same amount (This implies $x^{d_{\low}}_{s_2}=0$). 

If there exists some $d$ such that $x^d_{s_2} \in (0,y_{s_2})$, then repeat this process with $s_2$. If not then, pick any timestep $s$ with some $d$ such that $x^d_s \in (0,y_s)$.  If none exits, then proceed to Step 2. 

Since at every timestep $s$ there are at most $|D_2|$ many demands with $x^d_s>0$ and every such split reduces the number of such demands at the resultant $s_2$, the number of such splits that correspond to the original timestep $s$ and its' descendents is $|D_2|$. Therefore, we have to complete this process at most $|D_2|T$ times.

See the figure below for an illustration of the outcome of this process. A more algebraic description is provided in the following paragraph.

More concretely, for all demand points $d$ with $x^{d}_s>0$, $x^d_{s_1}=x^{d_{\low}}_{s}$ and $x^d_{s_2}=x^d_s-x^{d_{\low}}_{s}$. Notice that if for any item type $j$ $y^j_s<x^{d_{\low}}_s$, $y^j_s = 0$ because of trimming. Therefore, for all $i$ such that $y^i_s > 0$, we set $y^i_{s_1}=x^{d_{\low}}_s$ and $y^i_{s_2}=y^i_s-x^{d_{\low}}_s$. We will also set $y_{s_1}=x^{d_{\low}}_{s}$ and $y_{s_2}=y_s-y_{s_1}$. 

Notice that this splitting operation does not increase cost and results in a feasible solution serving every demand to the same extent as before. Moreover, notice that both the new timesteps are trimmed and we have $x^d_{s_1} \in \{0,y_{s_1}\}$. Applying this process until all the demand points in the resulting $s_2$ are serviced to exactly the same extent as each other gives us the desired candidate order structure.

Each timestep in the original instance is split at most $|D_2|$ times in this way ($s_1$ will never be split further and $L(s_2)$ has one fewer demand point than $L(s)$). Therefore, the size of the instance is still polynomial in the input. For convenience, we will still let $T$ denote the number of timesteps. 
\item[2.] \textbf{Rounding Candidate Orders via Piping:} For each demand point $d \in D_2$, let $s(d)$ be the earliest timestep such that $x^d_s>0$ and $t(d)$ be the deadline of $d$. Because of our assumptions on the structure of the LP solution, $I(d):=[s(d),t(d)]$ lies in a single batch. For each timestep $s$ with $y_s>0$, we define a candidate (replenishment) order, consisting of all item types with $y^i_s >0$. Once again, if $y^i_s > 0$, then $y^i_s = y_s$. A \textit{non-integral} candidate order is one where $0 < y_s < 1$. Increasing a non-integral order by $\delta$ refers to increasing $y_s$ and each positive $y_s^i$ by $\delta$. An \textit{integral} order is one where $y_s \in \{0,1\}$. We loop over steps $a) \text{ and }b)$ until only $2C$ non-integral candidate orders remain. Then we round the $2C$ corresponding $y_s$ variables up to $1$ and proceed to step $3.$
\end{itemize}

\begin{itemize}
\item[(a)] We construct a linear system of equations as follows. The components of our solution vector will specify a perturbation each non-integral candidate order so that we can move towards an integral solution without increasing the total cost. 
For each non-integral candidate order $o$, there will be a corresponding variable $\delta_o$ in the linear system. A solution to the system will specify a feasible perturbation to the current solution, and it will be possible to move towards an integral solution without increasing the total cost.
Let $w^c_o$ be the total weight of demand points $d$ in $D_2$ of color $c$ such that candidate order $o$ would serve $d$ if it were placed. The first $C$ equations are $\sum_{o} w^c_o\delta_o = 0$ for each color $c$. For every batch $B$ with multiple non-integral candidate orders, we also add the constraints $\sum_{o \in B} \delta_o = 0$. Let $G$ be the set of such batches. Critically, for each batch $B$ we have that $\sum_{s \in B} y_s \leq 1$ and hence $|D_2|-\sum_{d \in D_2} r_d = \sum_{o} w^c_o y_o$ where $y_o$ is the value of $y_s$ at the timestep $s$ corresponding to order $o$.
    \item[(b)] We find a non-zero solution to the above system of equations. Let $d_{ord}$ be the set of candidate orders that serve demand $d$. If, for every candidate order $o$, increasing the corresponding $y_s$ and each $y^i_s$ such that $y^i_s=y_s$ as well as each $x^{it}_s$ such that $x^{it}_s=y_s$ at rate $\delta_o$ is a change that does not lead to a net increase in cost, then do so and simultaneously decrease $r_d$ at a rate $\sum_{o \in d_{ord}} \delta_o$ until some variable becomes integral. Else, increase the corresponding $y_s$ and all $y^i_s$ such that $y^i_s=y_s$ at rate $-\delta_o$ and simultaneously increase $r_d$ at a rate $\sum_{o \in d_{ord}} \delta_o$ until some variable becomes integral. The number of variables and constraints in the linear system guarantees the following lemma.
\end{itemize}

    \begin{lemma}
    \label{colorful_candidate_orders}
Whenever there are $2C+1$ or more non-integral candidate orders, there is a non-zero solution to the linear system.
    \end{lemma}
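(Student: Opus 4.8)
The plan is to imitate the proof of Lemma~\ref{monopipage-sol-exists}, adjusting the counting to account for the $C$ color equations in place of the single rejection-weight equation. The linear system is homogeneous, so $\delta = 0$ is always a solution; a \emph{nonzero} solution exists precisely when the number of variables strictly exceeds the rank of the coefficient matrix. The system has exactly $C$ equations $\sum_o w^c_o \delta_o = 0$ (one per color) together with one equation $\sum_{o \in B} \delta_o = 0$ for each batch $B \in G$, where $G$ is the set of batches containing at least two non-integral candidate orders; hence there are $C + |G|$ equations in total, and the rank of the coefficient matrix is at most $C + |G|$. Therefore, whenever the number $n$ of non-integral candidate orders (equivalently, the number of variables $\delta_o$) satisfies $n \geq C + |G| + 1$, the null space is nontrivial, and a nonzero solution can be found efficiently by Gaussian elimination.

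It then remains to check that $n \geq 2C+1$ forces $n \geq C + |G| + 1$. Suppose for contradiction that $n \geq 2C+1$ but $n \leq C + |G|$. Since the batches partition the time horizon into intervals, the candidate orders belonging to distinct batches are disjoint, and by definition of $G$ each batch in $G$ contains at least two non-integral candidate orders; summing over $G$ gives $n \geq 2|G|$. Combining $2|G| \leq n \leq C + |G|$ yields $|G| \leq C$, whence $n \leq C + |G| \leq 2C$, contradicting $n \geq 2C+1$. Thus whenever $n \geq 2C+1$ we have $n > C + |G|$, so the system has a nonzero solution, and the lemma follows.

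The argument is entirely routine linear algebra plus a counting step, so I do not expect a genuine obstacle; the only points requiring care are the bookkeeping of constraints — confirming that the only equations are the $C$ color equations and the one equation per ``heavy'' batch in $G$ — and the observation that candidate orders in different batches cannot coincide, which is immediate from the batch decomposition being a partition into intervals. (As in the monochromatic case, the choice between $\pm\delta$ in Step~2(b) so that cost does not increase, and the fact that the perturbation preserves $\sum_{s \in B} y_s \leq 1$ for every batch $B$ — which is what keeps the identity $|D_2| - \sum_{d} r_d = \sum_o w^c_o y_o$ valid — are feasibility facts handled separately and are not needed for the existence claim here.)
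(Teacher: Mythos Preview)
Your proof is correct and follows essentially the same approach as the paper: both compare the number of variables $n$ to the number of equations $C+|G|$, using that each batch in $G$ contributes at least two variables to conclude $|G|\le C$ whenever no nonzero solution exists, hence $n\le 2C$. Your contrapositive presentation is slightly cleaner than the paper's two-case split, but the argument is the same.
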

\begin{proof}
Let $G$ be the set of batches with multiple fractional candidate orders. The system has at least $2G$ variables and at least $G+C$ constraints. This implies that whenever $G > C$, there are more variables than constraints. Since all constraints have a 0 on the RHS, the system always has a feasible solution (everything 0). Therefore, if there are more variables than constraints, the system must have infinitely many solutions (in particular, infinitely many solutions with some $o$ such that $\delta_o \neq 0$).

If $G \leq C$, then whenever there are $G+C+1 \leq 2C+1$ or more fractional candidate orders, the system has a non-zero solution. 

Therefore, $2C+1$ or more fractional candidate orders implies the existence of a non-zero solution to the linear system.
\end{proof}
\begin{itemize}
    \item[3.] \textbf{Rounding item orders} We now round the $y^i_s$ in a similar manner to the $y_s$ earlier. Loosely speaking, one can think of each (item,batch) pair as corresponding to a batch in step $2$ and whenever there are $2C+1$ or more non-integral $y^i_s$ variables, we can perform steps $(a)$ and $(b)$ below. Once there are $2C$ or fewer non-integral $y^i_s$ variables we round them up.
    \item[(a)] For each non-integral $y^i_s$, let $w^c_{is}$ be the total weight of demand points in this instance of item $i$ with $x^{it}_s>0$. Once again, we solve a linear system of equations with variables $\delta^i_s$ corresponding to these non-integral $y^i_s$. The first $C$ constraints are $\sum_{i,s} w^c_{is}\delta^i_s = 0$. For every batch $B$ and item type $i$ with multiple non-integral $y^i_s$, we add a constraint $\sum_{s \in B} \delta^i_s = 0$. For similar reasons to before, we get the following lemma guaranteeing that we can find such a solution:
\end{itemize}    
\begin{lemma}
\label{colorful_item_orders_pipage}
Whenever there are $2C+1$ or more non-integral $y^i_s$ variables, there is a non-zero solution to the linear system. 
\end{lemma}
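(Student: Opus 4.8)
The plan is to mirror the counting argument used for Lemma~\ref{colorful_candidate_orders}, now applied to the item-order layer. I would let $G$ denote the set of pairs $(i,B)$ (item type, batch) for which there is more than one non-integral $y^i_s$ with $s \in B$; these are precisely the pairs that contribute an equation $\sum_{s \in B}\delta^i_s = 0$ to the linear system. Together with the $C$ color equations $\sum_{i,s} w^c_{is}\delta^i_s = 0$, the system therefore has exactly $|G| + C$ constraints, all homogeneous, and one variable $\delta^i_s$ for each non-integral $y^i_s$.

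First I would handle the case $|G| \ge C+1$: each pair in $G$ accounts for at least two non-integral $y^i_s$ variables, so the number of variables is at least $2|G| \ge |G| + (C+1) > |G| + C$, i.e.\ strictly more variables than constraints. Since the all-zero vector is a solution, a homogeneous system with strictly more variables than constraints has an infinite solution space, hence a non-zero solution, which can be found efficiently by standard linear algebra. Next, the case $|G| \le C$: then the number of constraints is $|G| + C \le 2C$, so as soon as there are $2C+1$ (or more) non-integral $y^i_s$ variables we again have strictly more variables than constraints, and the same fact gives a non-zero solution. Combining the two cases yields the statement: whenever there are $2C+1$ or more non-integral $y^i_s$ variables, the linear system admits a non-zero solution.

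There is essentially no hard step here; it is a dimension count identical in spirit to Lemma~\ref{colorful_candidate_orders}. The only point that needs a sentence of care is the bookkeeping that an ``isolated'' non-integral $y^i_s$ — the unique fractional one in its pair $(i,B)$ — still contributes a variable but no batch equation, so it only increases the variable-minus-constraint surplus and never hurts. I would also remark (as in the candidate-order case) that the direction produced is a legitimate perturbation for step~3(b): the batch equations $\sum_{s\in B}\delta^i_s=0$ freeze every tight sum $\sum_{s\in B} y^i_s = 1$ so the batch invariant $\sum_{s\in B}y^i_s\le 1$ is preserved, and the $C$ color equations freeze each $\sum_{d\in D_2} w^c_d r_d$, so feasibility (in particular the colorwise rejection bounds) is maintained while we move some $y^i_s$ to integrality — which, combined with the analogous Lemma~\ref{colorful_candidate_orders} for general orders, is what makes the whole pipage loop terminate with at most $2C$ fractional general orders and at most $2C$ fractional item orders.
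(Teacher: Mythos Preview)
Your proof is correct and follows essentially the same counting argument as the paper, which simply says the proof is ``identical to the previous lemma'' with $G$ redefined as the set of item-batch pairs $(i,B)$ having multiple fractional $y^i_s$. Your explicit case split $|G|\ge C+1$ versus $|G|\le C$ and the remark about isolated fractional variables only increasing the variable-minus-constraint surplus spell out exactly the dimension count the paper leaves implicit.
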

\begin{proof}
Identical to the previous lemma, except $G$ will now be the set of (item,batch) pairs $(i,B)$ with multiple $y^i_s>0$ such that $s \in B$.
\end{proof}
\begin{itemize}
    \item[(b)] Similarly to step $2(b)$, we change $y^i_s$ variables at rate $\delta^i_s$ until some $y^i_s$ becomes integral. We also change $x^{it}_s$ variables at the same rate and $r_{it}$ variables are decreased so that $\sum_s x^{it}_s + r_{it}=1$.
    \item[4.] \textbf{Rounding demand point service variables:} Now that all the item ordering and general ordering variables are still fixed, we can apply the same method as we did when dealing with ``Small solutions: Case 2" and this increases costs by at most $CH^M_{\max}$ by Lemma \ref{fixed_replenishment_schedule_lemma}
\end{itemize}

Clearly, the algorithm above returns an integral solution. All that remains is to bound the cost and verify feasibility.

\begin{lemma}
If $\LP^2$ is the cost of the original solution, then the cost of the rounded solution is at most $\LP^2+2CK_0+2CK^{M}_{\max}+CH^{M}_{\max} \leq (1+\frac{\epsilon}{8})\LP^2$
\label{pipage_cost_general}
\end{lemma}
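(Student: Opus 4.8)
The plan is to show first that the only steps of the procedure that can raise the objective are the explicit round-ups in Steps~2, 3, and 4, and then to bound their total cost. Step~0 (trimming) only lowers variables and hence only lowers cost, and Step~1 (splitting timesteps) merely redistributes mass among consecutive copies of a timestep without changing any general-order, item-order, or holding-cost contribution. In each pass through Steps~2(b) and 3(b) the objective is a linear function of the perturbation, so at least one of $+\delta,-\delta$ does not increase cost and we follow it; by Lemmas~\ref{colorful_candidate_orders} and~\ref{colorful_item_orders_pipage} a non-zero $\delta$ exists as long as strictly more than $2C$ candidate orders (respectively item orders) are fractional, so the loops terminate leaving at most $2C$ fractional general orders and, after Step~3, at most $2C$ fractional item orders, while the color constraints $\sum_o w^c_o\delta_o = 0$ and the per-batch constraints keep the solution feasible throughout (the content of the companion feasibility lemma, parallel to Lemma~\ref{feasibility_monochromatic}). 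Thus the maintained solution costs at most $\LP^2$ right up to the round-ups. Rounding the remaining $\le 2C$ fractional general orders to $1$ costs at most $2CK_0$; rounding the remaining $\le 2C$ fractional item orders costs at most $2CK^{M}_{\max}$, since any item order with $K_i > K^{M}_{\max}$ was either guessed (hence already integral at $1$) or fixed to $0$, so every fractional item order has $K_i \le K^{M}_{\max}$; and Step~4, with the replenishment schedule now integral, applies Lemma~\ref{fixed_replenishment_schedule_lemma} at additional cost at most $CH^{M}_{\max}$. Summing gives the first inequality, cost $\le \LP^2 + 2CK_0 + 2CK^{M}_{\max} + CH^{M}_{\max}$.

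For the second inequality I would appeal to the two standing hypotheses of Theorem~\ref{thm: pipage_general}: every optimal solution uses at least $M$ replenishment orders, and the LP is augmented with the side information for $M$. Then the $M$ guessed item orders are forced to $1$, each of cost at least $K^{M}_{\max}$; the $M$ guessed demand points are forced to be served at holding cost at least $H^{M}_{\max}$; and the at least $M$ replenishment orders each cost $K_0$. Hence the LP value is at least $M\max\{K_0,K^{M}_{\max},H^{M}_{\max}\}$, so $2CK_0 + 2CK^{M}_{\max} + CH^{M}_{\max} \le \tfrac{5C}{M}\LP^2 < \tfrac{\epsilon}{10}\LP^2 \le \tfrac{\epsilon}{8}\LP^2$, using $M > \tfrac{50C}{\epsilon}$.

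The main obstacle is making this last lower bound rest on $\LP^2$ itself rather than on the full LP value $\LP^{\sol}$: the IGO set $\mathcal{T}$ and the guessed item orders may lie partly in Instance~1, so $\LP^2$ alone need not witness all $M$ guessed item orders or all $\ge M$ replenishment orders. I would discharge this either by arguing that a constant fraction of these components necessarily fall in Instance~2 (so the effective $M$ shrinks by at most a constant factor, still absorbed by $M > 50C/\epsilon$), or by using the strengthened guessing noted in the Remark after Theorem~\ref{thm: pipage_general} (which keeps each Instance-2 cost component within a $(1+\epsilon/8)$ factor), or by carrying the additive error against $\LP^{\sol}$ and absorbing it into the global $(2.86+\epsilon)$ accounting. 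The remainder --- Steps~0--3 not raising cost and the count of round-ups --- is routine and parallels Lemmas~\ref{monopipage-last-lemma} and~\ref{feasibility_monochromatic} of the monochromatic case.
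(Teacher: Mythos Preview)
Your argument for the first inequality is essentially the paper's own: the only places where cost can increase are the explicit round-ups at the ends of Steps~2, 3, and 4, and you bound those by $2CK_0$, $2CK^{M}_{\max}$, and $CH^{M}_{\max}$ respectively, invoking Lemmas~\ref{colorful_candidate_orders}, \ref{colorful_item_orders_pipage}, and \ref{fixed_replenishment_schedule_lemma}. The paper's proof is one short paragraph saying exactly this (and in fact contains typos writing $2K_0$ and $2K^{M}_{\max}$ where $2CK_0$ and $2CK^{M}_{\max}$ are meant).

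For the second inequality the paper simply writes ``follows from our choice of $M$'' and gives no further detail. Your explicit bound $2CK_0+2CK^{M}_{\max}+CH^{M}_{\max}\le \tfrac{5C}{M}\LP^2$ together with $M>50C/\epsilon$ is the intended mechanism, and you are more careful than the paper in flagging the real subtlety: the side information (the $M$ guessed item orders and $M$ guessed demand-point holding costs forced to $1$) and the assumption of $\ge M$ replenishment orders are stated for the \emph{original} instance, whereas here we need lower bounds on $\LP^2$, the cost of the Instance~2 solution. The paper never addresses this distinction; it treats the theorem as a black box whose hypotheses are simply declared to hold for the pipage subproblem. Your proposed discharges (arguing a constant fraction of the guessed components land in Instance~2, or carrying the additive error against $\LP^{\sol}$ in the outer accounting) are reasonable patches, and the last of them is in fact how the paper implicitly uses the lemma downstream. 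So your proof is at least as complete as the paper's, and your identification of the $\LP^2$-versus-$\LP^{\sol}$ issue is a point the paper glosses over rather than a gap in your argument relative to theirs.
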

\begin{proof}
The cost of the solution only increases when we round up $2C$ variables in steps $2.$ and $3.$ and when we round up $C$ variables in step $4.$ (by Lemmas \ref{colorful_candidate_orders}, \ref{colorful_item_orders_pipage} and \ref{fixed_replenishment_schedule_lemma} respectively). The extra cost in step $2.$ is at most $2K_0$, the extra cost in step $3.$ is at most $2K^{M}_{\max}$ and the extra cost in step $4.$ is at most $CH^{M}_{\max}$ (by Lemma \ref{fixed_replenishment_schedule_lemma}). This gives us the bound on the cost of the rounded solution.  $\LP^2+2CK_0+2CK^{M}_{\max}+CH^{M}_{\max} \leq (1+\frac{\epsilon}{8})\LP^2$ follows from our choice of $M$.
\end{proof}

\begin{lemma}
\label{pipage_feasibility_general}
The algorithm always maintains a feasible solution.    
\end{lemma}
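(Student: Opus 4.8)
The plan is to mirror the feasibility argument for the monochromatic case (Lemma~\ref{feasibility_monochromatic}), tracking each constraint type through all four steps of the algorithm and showing no constraint is ever violated. First I would argue the \emph{invariance of integral variables}: once a $y_s$ becomes integral it is never placed into a subsequent linear system (only non-integral candidate orders get a $\delta_o$ variable), and the final round-ups only push fractional $y_s$ to $1$, so integral $y$'s are frozen; the same holds for $y^i_s$ in step 3, and for $x^{it}_s$ variables once they hit $0$ or $y_s$ (they are tied to their candidate order and only move with it). For an $r_{it}$ variable: if it reaches $1$ then every $x^{it}_s$ with $s \leq t$ is $0$ and those variables never reappear in a system, so $r_{it}$ stays $1$; if it reaches $0$ then $\sum_{s \in B(i,t)} x^{it}_s = 1$ and the batch constraints $\sum_{s\in B}\delta^i_s = 0$ (together with $\sum_o \delta_o = 0$ per batch in step 2) keep this sum pinned at $1$, so $r_{it}$ stays $0$. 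This also gives non-negativity of all variables throughout.

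Next I would check the structural constraints. The constraint $x^{it}_s \le y^i_s$ holds because after trimming and splitting we have $x^{it}_s \in \{0, y_s\}$ and $y^i_s \in \{0, y_s\}$ with $x^{it}_s = y_s \Rightarrow y^i_s = y_s$; in step 2 we move $x^{it}_s$, $y^i_s$, $y_s$ in lockstep at the same rate $\delta_o$, so the relation is preserved, and in step 3 we move $x^{it}_s$ at the same rate as $y^i_s$, again preserving it; step 4 only changes $r$ and (down) $x$ against a fixed schedule so the inequality is maintained. Similarly $y^i_s \le y_s$ is maintained since both are perturbed by $\delta_o$ in step 2 and $y_s$ is fixed in step 3. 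The ``serve or reject'' constraint $\sum_s x^{it}_s + r_{it} \ge 1$ is maintained by construction: in steps 2(b) and 3(b) we decrease $r_d$ at exactly the aggregate rate $\sum_{o \in d_{\ord}} \delta_o$ (resp.\ $-\sum_{s} \delta^i_s$) at which $\sum_s x^{it}_s$ changes, so the sum $\sum_s x^{it}_s + r_{it}$ is invariant and remains $\ge 1$; the final round-ups only increase $x$ variables; step 4 explicitly maintains $\sum_s x^{it}_s + r_{it} = 1$.

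The main obstacle — and the reason the linear system has $C$ equations rather than one — is the \emph{color-wise rejection constraints} $\sum_{(i,t)\in D_2} w^c_{it} r_{it} \le R_c$. Here I would use the batching identity stated in step 2(a): because $\sum_{s\in B} y_s \le 1$ for every batch, $|D_2| - \sum_{d\in D_2} r_d = \sum_o w_o y_o$ (and the analogous identity at the item level in step 3), so $\sum_d w^c_d r_d$ is an \emph{affine} function of the $y_o$'s with the $w^c_o$ as coefficients. The first $C$ equations $\sum_o w^c_o \delta_o = 0$ therefore say precisely that each $\sum_d w^c_d r_d$ is unchanged by the perturbation, so each color-wise sum stays at its initial value $\sum_{(i,t)\in D_2} w^c_d r_{it} \le R_c$ through all pipage moves; the batch constraints $\sum_{o\in B}\delta_o = 0$ are exactly what guarantees the batch sum stays $\le 1$ so that this affine identity remains valid until the last round-up in each of steps 2 and 3. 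Finally, since the round-ups of the $2C$ leftover candidate orders (step 2), $2C$ leftover item orders (step 3), and the adjustments of step 4 only \emph{decrease} $r$ variables (step 4 invokes Lemma~\ref{fixed_replenishment_schedule_lemma}, which services — never rejects — its $\le C$ fractional demands), the color-wise rejection bounds can only become slacker, so they hold in the final integral solution. Collecting these, every constraint of the ILP is satisfied, proving feasibility.
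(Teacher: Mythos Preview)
Your proposal is correct and follows essentially the same approach as the paper's proof: both argue that the structural inequalities $x^{it}_s \le y^i_s \le y_s$ and $\sum_s x^{it}_s + r_{it} = 1$ are maintained by construction, that the $C$ color equations in the linear system keep each $\sum_d w^c_d r_d$ invariant (with the batch constraints preserving $\sum_{s\in B} y_s \le 1$ so the affine identity remains valid), and that the final round-ups together with step~4 (via Lemma~\ref{fixed_replenishment_schedule_lemma}) only decrease rejections. Your write-up is more explicit than the paper's terse version---in particular you spell out the ``once integral, always integral'' invariant and the $r_{it}\in\{0,1\}$ cases as in Lemma~\ref{feasibility_monochromatic}---but the logical content is the same.
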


\begin{proof}
Notice that by construction, the algorithm always maintains that $x^{it}_s \leq y^i_s \leq y_s$ and that we always adjust $r_{it}$ to ensure that $\sum_{s \leq t} x^{it}_s+r_{it} = 1$. 

Moreover, the choice of constraints that the perturbations in each iteration of steps 2 and 3 must satisfy ensures that $\sum_{s \leq t} x^{it}_s \leq 1$ and that
$\sum_{d \in D_2} w^c_d \sum_{s \leq t} x^{it}_s = \sum_{d \in D_2} w^c_d  - \sum_{d \in D_2} w^c_d r_d$ does not change until the end of step 3. This ensures that until the end of step 3 the rejection limit constraint is satisfied and that $r_{it} = 1- \sum_{s \leq t} x^{it}_s \geq 0$. We know that after step 4. we still satisfy the rejection limits and $r_{it}\geq 0$ by Lemma $\ref{fixed_replenishment_schedule_lemma}$

Lastly, $x^{it}_s,y^i_s,y_s \geq 0$ by construction (the moment a candidate order/item order variable becomes integral, it never again gets perturbed by the algorithm). 

Therefore, the integral solution at the end of step 4 satisfies the constraints. 

\end{proof}

Lemma \ref{pipage_feasibility_general} and Lemma \ref{pipage_cost_general} imply that we obtain a feasible solution of cost at most $(1+\frac{\epsilon}{8})\LP^2$ and by construction, our solution must be integral. This is essentially what Theorem \ref{thm: pipage_general} stated.
\subsection{Iterative Rounding for Instance 1}
This section will be dedicated to proving Theorem \ref{thm:iterative_general}, which essentially states that given a time horizon with general orders placed at every timestep and an LP solution $\overline{x},\overline{y},\overline{r}$ with item ordering cost $\LP^1_{\itm}$ and holding cost $\LP^1_{\hold}$, we can obtain an integral solution of cost $(1+\frac{\epsilon}{8})\LP^1_{\itm} + \LP^1_{\hold}$. Moreover, our integral solution $(x,y,r)$ will have $r_d=0$ if $\overline{r}_d=0$ and $r_d=1$ if $\overline{r}_d=1$. In future sections, we will refer to the input and output of this solution using the bar (i.e. as $\overline{x},\overline{y},\overline{r}$), however, in this section, to simplify notation, we refer to them as $(x,y,r)$. Like before, we will rely on a critical structural feature of an extreme point optimal solution. Because there are multiple colors and general holding costs, instead of there being at most $1$ multibatch, there will be at most $C+1$ multibatches. The rounding process (add extra interval constraints and round up a $C-$dependent constant number of item orders) will be similar to before.

Now we describe some key definitions and properties that will be critical to our algorithm.

\subsubsection{Definitions and Facts}
We state the LP formulation this section will be using. (Note that $y_s=1$ for all timesteps in any input instance for the iterative rounding algorithm, therefore, we do not include it. )

\begin{empheq}{align*}
\text{minimize} \quad & \sum_{i=1}^N \sum_{s = 1}^T y^i_s K_i + \sum_{(i,t) \in D} \sum_{s=1}^T H_s^{it} x_s^{it} + \sum_{(i,t) \in D} p_{it}r_{it}\\
\text{subject to }\quad & r_{it} + \sum_{s = 1}^t x_s^{it} \geq 1 \quad \text{for } (i,t) \in D \hspace{1.3in} \text{(Reject or service every demand)}\\
\quad & x_s^{it} \leq y_s^i \quad \text{for all } (i,t) \in D, s \in [T] \hspace{1.1in}\text{(Servicing requires an item order)}\\
\quad & \sum_{(i,t)\in D} w^c_{it} r_{it} \leq R^1_c \quad\text{for all } c \in [C] \hspace{1.2in} \text{(Rejection bound for each color)}\\
\quad & y_s,y_s^i,x^{it}_s,r_{it} \in \{0,1\} \quad \text{for all } s \in [T], i \in [N], (i,t) \in D 
\end{empheq}

For a given LP solution $(x,y,r)$, a \textit{multibatch} is an item, interval pair $(i,[s_1,s_2])$ such that the following 3 properties are satisfied:

\begin{enumerate}
    \item $y^i_{s_1}>0,y^i_{s_2} > 0$
    \item For all $s \in [s_1,s_2], y^i_s < 1$.
    \item $(i,[s_1,s_2])$ is maximal in the sense that if $i,[s_0,s_3]$ satisfies the first 2 properties and $[s_1,s_2] \subset [s_0,s_3]$ then $s_1=s_0,s_2=s_3$.
\end{enumerate}

Note that for every multibatch $i,[s_1,s_2]$, the smallest $s > s_2$ such that $y^i_s > 0$ will satisfy $y^i_s=1$(if $s_2$ is not the last timestep $s$ with positive $y^i_s$). Similarly, the largest $s < s_1$ such that $y^i_s>0$ will satisfy $y^i_s=1$ (if $s_1$ is not the first timestep $s$ with positive $y^i_s$). Let $s^{i[s_1,s_2]}_{\prev}$ denote this largest $s<s_1$ such that $y^i_s=1$. It is possible that some demand points are partially satisfied within $[s_1,s_2]$ and partially satisfied at $s^{i[s_1,s_2]}_{\prev}$. 


\textbf{Fact:} We have a disjoint collection of multibatches such that for every fractional $y^i_s$ there is a multibatch $(i,I)$ such that $s \in I$.

\textbf{Leanness:} We define the concept of a \textit{lean} solution as follows. For every demand point $(i,t)$, let $s(it)$ denote the earliest timestep $s$ at which $x^{it}_s>0$. We say that a solution is lean if for every $s \in [s(it)+1,t], x^{it}_s = y^i_s$ and $\sum_{s \leq t} x^{it}_s + r_{it}=1$ It should be apparent that we can convert any solution into a lean solution without increasing cost by monotonicity of the holding costs.

\textbf{Tight and semitight demand points:} We call a demand point $(i,t)$ \textit{tight} if $x^{it}_s =y^i_s$ for all $s \in [s(it),t]$, $\sum_{s \leq t} x^{it}_s=1$ and $\max_{s} x^{it}_s < 1$. Let $D_i^{\tight}$ be the set of tight demand points of item type $i$. Note that for tight demand points $(i,t)$ $\sum_{s=s(it)}^t y^i_s = \sum_{s \leq t} x^{it}_s=1$. We call a demand point \textit{semitight} if it satisfies all the above restrictions with one exception: $x^{it}_{s(it)}<y^i_s$. We let $D^i_{\semitight}$ denote the set of semitight demand points. Note that $D^i_{\tight}$ and $D^i_{\semitight}$ are disjoint.

\textbf{Tight intervals:} We define $V_i^{\tight}$ to be the set of intervals $[s_1,s_2]$ such that $\sum_{s=s_1}^{s_2} y^i_s = 1$. Recall that $V_i$ will be the set of intervals $[s_1,s_2]$ such that $\sum_{s=s_1}^{s_2} y^i_s \geq 1$ and we will add constraints corresponding to these intervals to the LP before re-solving.

Notice that if a solution is lean, then every demand point $(i,t)$ such that $r_{it}=0$ is either tight, or semitight or has $x^{it}_{s(it)}=1$ (in which case, it is already rounded and we don't need to worry about it). Moreover, because of the perturbation to the holding costs, the leanness property must be satisfied by an optimal solution. We formalize this in Lemma \ref{iterative_general_leanness_optimality} after describing our algorithm.

\subsubsection{At Most $C+1$ Multibatches}
In addition to the interval constraints that we saw in section 3, we will also need to add an extra constraint stipulating that upon re-solving the LP, the item ordering cost does not increase (In the deadlines case with no penalties, this constraint was unnecessary- the only type of cost that wasn't already fixed was the item ordering cost).

By definition, a feasible solution $x,y,r$ is an extreme point if there exists no vector $\delta$ such that $(x,y,r)+\delta$ and $(x,y,r)-\delta$ are both feasible. We will for each multibatch, find a $\delta$ such that $(x,y,r)+\delta$ and $(x,y,r)-\delta$ satisfy all constraints except the rejection limits and the ``item ordering cost doesn't increase" constraint. This is the essence of the following lemma:

\begin{lemma}
For each multibatch $(i,I)$ of a lean solution $(x,y,r)$, there exists a vector $\delta=\delta^{i't}_s,\delta^{i'}_s,\delta_{i't}$ (with entries corresponding to each variable in the LP) such that (1) For all integral variables, the corresponding entry of $\delta$ is $0$; (2) If $j \neq i$ or $s \notin I$, $\delta^j_s=0$. (3) $(x,y,r)+\delta$ still satisfies all constraints including the interval constraints we add during the iterative rounding process except the colorwise rejection limit constraints and the total item order cost constraint (4) There exists a $y^i_s$ such that $s \in I$ and the corresponding $\delta$ is non-zero (5) the above properties hold for $-\delta$ as well.
\label{iterative_feasible_direction_general}
\end{lemma}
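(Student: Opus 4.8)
The plan is to follow the construction in the proof of Lemma~\ref{monoiter-delta}, now tracking the $x^{it}_s$ and $r_{it}$ variables that leanness forces to move along with the $y^i_s$, and using the colorwise rejection limits together with the ``item ordering cost does not increase'' constraint as exactly the constraints we are allowed to break. Write $I=[s_1,s_2]$, and call a timestep $s\in I$ a \emph{free coordinate} if $0<y^i_s<1$; note $s_1$ is free. I would first argue that the only real restrictions on a perturbation $(\delta^i_s)_{s\in I}$ of these free coordinates come from a family $\overline V$ of ``tight'' intervals: those $J\subseteq I$ with $\sum_{s\in J}y^i_s=1$ coming from $V^i_{\tight}$ or from a tight added interval constraint, and those of the form $J=[s(it),t]$ for a tight demand $(i,t)$. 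For a tight demand with fractional support meeting $I$, leanness gives $x^{it}_s=y^i_s$ on $[s(it),t]$ with every such value $<1$, so $[s(it),t]$ (trimmed to $y^i>0$) lies in a single multibatch of item $i$; since it meets the fractional part of $I$ and multibatches of a fixed item are disjoint, that multibatch is $I$, so $[s(it),t]$'s active part sits inside $I$. A demand with $r_{it}$ fixed to $1$ has $y^i_s=0$ on its whole interval by leanness and plays no role; a semitight or ``partial'' demand ($0<r_{it}<1$) imposes no constraint, since its free variable $x^{it}_{s(it)}$ (or, when $x^{it}_{s(it)}=y^i_{s(it)}$, the variable $r_{it}$ itself) can absorb whatever $\sum_{s>s(it)}y^i_s$ does. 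So a vector $\delta$ on the free coordinates is admissible iff $\sum_{s\in J}\delta^i_s=0$ for every $J\in\overline V$.

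Next I would produce such a nonzero $\delta$ exactly as in Lemma~\ref{monoiter-delta}. Two intervals $J_1\subsetneq J_2$ in $\overline V$ satisfy $\sum_{s\in J_2\setminus J_1}y^i_s=0$ because both sums equal $1$, so $\overline V$ is laminar once zero-$y$ timesteps are ignored; the crucial structural point is that \emph{no} $y^i_s$ equals $1$ inside $I$, which is precisely what stops the intervals of $\overline V$ from pinning down all free coordinates. One then builds an increasing sequence $t_1<t_2<\cdots$ of free coordinates, with $t_1=s_1$, designed so that each $J\in\overline V$ contains either $0$ or exactly two consecutive terms of the sequence; setting $\delta^i_{t_k}=(-1)^{k+1}\kappa$ and $\delta^i_s=0$ elsewhere for a small $\kappa>0$ then kills every $\overline V$-constraint, and $\delta^i_{s_1}=\kappa\neq0$ gives property~(4). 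Extend $\delta$ to the other variables the only way leanness allows: $\delta^{it}_s=\delta^i_s$ whenever $s\in(s(it),t]$ and also at $s=s(it)$ for tight demands; $\delta_{it}=0$ for tight demands and for demands with $r_{it}$ fixed; and for semitight/partial demands either $\delta^{it}_{s(it)}=-\sum_{s>s(it)}\delta^i_s$ (leaving $r_{it}$ fixed) or $\delta^{it}_{s(it)}=\delta^i_{s(it)}$ and $\delta_{it}=-\sum_{s\in[s(it),t]}\delta^i_s$ (leaving $x^{it}_{s(it)}=y^i_{s(it)}$ fixed). Setting $\delta$ to $0$ on every remaining coordinate gives properties~(1) and~(2).

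Finally I would check property~(3) for $\pm\delta$ and $\kappa$ small enough. Non-negativity and $x^{it}_s\le y^i_s\le 1$ hold since every free coordinate, and every free $x^{it}_{s(it)}$, is bounded away from $0$ and $1$, while the equalities $x^{it}_s=y^i_s$ are preserved by construction; ``reject or service'' stays tight because we co-adjust $r_{it}$ or $x^{it}_{s(it)}$; tight item-order intervals and tight added interval constraints are preserved because they were put into $\overline V$, and the non-tight added interval constraints have $\sum_{s\in I'}y^i_s>1$ and so survive a small step. The only constraints that can fail are the $C$ colorwise rejection limits — a step changes $r_{it}$ for ``partial'' demands, and demands of different colors shift by different net amounts — and the ``item ordering cost does not increase'' constraint, since $\sum_{s\in I}y^i_s$ may change by $\pm\kappa$; these are exactly the exceptions allowed, and everything is symmetric in $\delta\mapsto-\delta$, giving property~(5). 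The main obstacle is the middle paragraph: verifying that the enriched family $\overline V$ — which now carries, across all $C$ colors and under general (not just $0/\infty$) holding costs, the intervals $[s(it),t]$ of tight demand points, alongside the new category of semitight demands — is still laminar-up-to-zeros and still admits the alternating sequence, i.e.\ that leanness plus ``$y^i_s<1$ throughout the multibatch'' preserves the combinatorial structure underlying Lemma~\ref{monoiter-delta}.
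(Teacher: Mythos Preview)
Your proposal is correct and follows essentially the same approach as the paper: build the family $\overline V$ of tight intervals (from $V^i_{\tight}$ and from tight demand points $(i,t)$), construct the alternating sequence $t_1<t_2<\cdots$ exactly as in Lemma~\ref{monoiter-delta} so that each $J\in\overline V$ contains either $0$ or $2$ consecutive terms, set $\delta^i_{t_k}=(-1)^{k+1}\tau$, and then propagate to the $x^{it}_s$ and $r_{it}$ variables via leanness (with the semitight case absorbing into $x^{it}_{s(it)}$ and the ``partial'' case absorbing into $r_{it}$). One small slip: for a demand with $r_{it}=1$ you only get $x^{it}_s=0$ everywhere (not $y^i_s=0$), but your conclusion that such demands play no role is still correct.
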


\begin{proof}
    The proof of this lemma is lengthy; by far the lengthiest single lemma-proof in the entire article. Roughly speaking, we will first construct a $\delta$ vector by carefully choosing timesteps to perturb. Then we will prove that this perturbation satisfies the desired properties.
    Let $\overline{V}$ be the set of intervals $[s,t]$ entirely within the multibatch such that either $[s,t] \in V^i_{\tight}$ or there exists some demand point $(i,t) \in D^i_{\tight}$ such that $s(i,t)=s$. Let $\kappa$ be an arbitrarily small positive value. 
    
    Define an increasing sequence of timesteps as follows: $t_1$ is the first timestep in the multibatch with positive $y^i_s$. Let $t_2$ be the last timestep $s \in I$ such that $1>y^i_s>0$ and every interval in $\overline{V}$ that contains $t_1$ also contains $s$ (if any such intervals exist; if not the sequence just has one element). Given $t_k,t_{k-1}$, we define $t_{k+1}$ to be the last timestep $s \in I$ such that $1>y^i_s>0$ and every interval in $\overline{V}$ that contains $t_k$ but not $t_{k-1}$ also contains $t_{k+1}$ (if no such interval exists, the sequence terminates at $t_k$).

We now construct the $\delta$ vector. In this construction, $\tau$ is an extremely small number. Every constraint (other than the colorwise rejection limits and item order cost limit constraints) in the ILP contains at most $3|D|NT$ variables, each with a coefficient in $\{0,1,-1\}$. We choose $\tau = \frac{f}{9(|D|NT)^2}$ where $f$ is the smallest non-zero slack in any inequality constraint at the solution $(x,y,r)$
\begin{itemize}
    \item[0.] Initially, $\delta$ is just the 0 vector.
    \item[1.] For all $s \in \{t_1,t_3,t_5...\}$ the $\delta^i_s$ is changed to $+\tau$ and for all $s \in \{t_2,t_4,t_6...\}$ the $\delta^i_s$ is changed to $-\tau$. For all $s \in \{t_1,t_2,t_3...\}$ and demand points $(i,t)$, if $x^{it}_s=y^i_s$ then $\delta^{it}_s$ will be the same as $\delta^i_s$. 
    \item[2.] If $(i,t) \in D^i_{\semitight}$, then $\delta^{it}_{s(it)} := -\sum_{s \in [s(it)+1,t]} \delta^{it}_s$. Notice that $\delta^{it}_s=\delta^i_s$ for $s>s(it)$ because the solution is lean. 
    \item[3.] For all demand points $(i,t)$ such that for some $s \in I$, $x^{it}_s>0$ and $(i,t)$ is neither tight nor semitight, we note that $1>r_{it}>0$. Therefore, we choose the $\delta_{it}$ corresponding to $r_{it}$ to be $-\sum_{s \in [s(it),t]} \delta^{it}_s$.       
\end{itemize}
Now we must prove that the constructed $\delta$ satisfies the above properties.

\textbf{Property 4:} By construction $\delta^i_{s_1} \neq 0$

\textbf{Property 2:} Notice that $t_1,t_2...$ are all members of the multibatch and the only $s$ with $\delta \neq 0$ corresponding to $y^i_s$ are in $t_1...t_k$. Therefore, if $s \notin$ $\{t_1,t_2...\} \subset [s_1,s_2]$, then the $\delta^i_s=0$.

\textbf{Property 1:} We prove this for each type of variable $(x,y,r)$ separately.

By construction, if $s \notin \{t_1,t_2,t_3...\}$, $\delta^i_s=0$. Because $t_1,t_2... \in [s_1,s_2]$, $0<y^i_{t_k}<1$. Therefore, if $y^i_s$ is integral, $\delta^i_s=0$.

The only way $\delta_{it} \neq 0$ if there exists some $s \in I$ such that $x^{it}_s > 0$ and $(i,t) \notin D^i_{\tight} \cup D^i_{\semitight}$. However, if our solution is lean, these imply that $0<\sum_{s \leq t} x^{it}_s<1$ and that $r_{it}=1-\sum_{s \leq t} x^{it}_s$ is non-integral. Therefore, if $r_{it}$ is integral, $\delta_{it}=0$

If $\delta^{it}_s \neq 0$ then by construction, $x^{it}_s > 0$. If $x^{it}_s = 1$ then $s \notin [s_1,s_2]$ and by the leanness property, since $\sum_{s' \leq t} x^{it}_{s'} = 1, \sum_{s' \in [s_1,s_2]} x^{it}_{s'} = 0$. This implies that $\delta^{it}_s$ could not have been changed from its initial value of $0$ in any of steps 1-3 of the construction of $\delta$. Therefore, if $x^{it}_s$ is integral, $\delta^{it}_s=0$.

\textbf{Property 3:} The only constraints we need to check are $(x,y,r)+\delta \geq 0, \sum_{s \leq t} (x^{it}_s + \delta^{it}_s) + r_{it}+\delta_{it} \geq 1$ and $x^{it}_s +\delta^{it}_s \leq y^i_s+\delta^{i}_s$ and the extra interval constraints. If a variable is strictly positive or $x^{it}_s<y^i_s$, we choose $\tau$ to be small enough that the corresponding non-negativity or $x^{it}_s+\delta^{it}_s \leq y^i_s+\delta^i_s$ constraint must be satisfied.

Therefore, we turn our attention to tight constraints. Previous properties already take care of tight non-negativity constraints.

\textbf{Demands cannot be served without an item order:} If $x^{it}_s=y^i_s$, by definition, $\delta^{it}_s=\delta^i_s$ and therefore, $x^{it}_s + \delta^{it}_s \leq y^i_s+\delta^i_s$.

\textbf{Extra interval constraints:} If there exists an interval $I$ such that the constraint $\sum_{s \in I} y^i_s \geq 1$ is tight, then notice that this interval must have been in $\overline{V}$. 

We now claim that every interval $I_v=[s_1,s_2]$ in $\overline{V}$ has either zero or two successive elements of $t_1,t_2...$ in it. Clearly, if $I_v$ has $t_k$ but no element of $\{t_1,t_2,t_3...\}$ preceding $t_k$, by construction of the sequence it must include $t_{k+1}$. Let $I'_v=[s_3,s_4] \in \overline{V}$ be the interval that induces the placement of $t_{k+2},$; the interval containing $t_{k+1}$ but not $t_k$ that had the earliest right endpoint. By definition of $I_v$ and $I'_{v}$ we know that $s_1 \leq t_k < s_3 \leq t_{k+1} \leq s_2$. 

Since $y^i_{t_k}>0$, $\sum_{s \in [s_3,s_2]} y^i_s < 1 = \sum_{s \in [s_1,s_2]} y^i_s$ we can conclude that $s_4>s_2$ and that $\sum_{s \in [s_2+1,s_4]} y^i_s>0$. Therefore, the last timestep $s$ in $[s_3,s_4]$ such that $y^i_s>0$ which is the timestep $t_{k+2}$ must be after $s_2$ and therefore, can't be in the interval $[s_1,s_2]$.

Remember though that the $\delta^i_s$ for $t_1,t_3,t_5...$ was $\tau$ and for $t_2,t_4,t_6...$ was $-\tau$. Therefore, if any interval $I \in \overline{V}$ contained two consecutive members of the sequence, $\sum_{s \in I} \delta^i_s=0$. Moreover, if $I$ didn't contain any members of the sequence, $\sum_{s \in I} \delta^i_s=0$. As we know that every $I \in \overline{V}$ contains either $0$ or $2$ members of the sequence, $\sum_{s \in I} (y^i_s+\delta^i_s)=\sum_{s \in I} y^i_s = 1 $.

This implies that $(x,y,r)+\delta$ satisfies the interval constraints. 

\textbf{Remark:} We can use a very similar argument (coupled with the fact that $\delta^{it}_s = \delta^i_s$ when $x^{it}_s=y^i_s$) to conclude that for tight demand points $(i,t)$, $\sum_{s \leq t} x^{it}_s +\delta^{it}_s= \sum_{s \leq t} x^{it}_s = 1$

\textbf{Demands must be served or rejected:} The previous remark explains how we can prove the result for tight demand points. For semitight demand points, we defined $\delta^{it}_{s(it)}=-\sum_{s\in [s(it)+1,t]} \delta^{it}_{s} \implies \sum_{s \in [s(it),t] }\delta^{it}_s = 0$ ensuring that $\sum_{s \leq t} x^{it}_s+ \delta^{it}_s= \sum_{s \leq t} x^{it}_s + \sum_{s \leq t} \delta^{it}_s=\sum_{s \leq t} x^{it}_s = 1$ implying that the constraint is satisfied (recall $r_{it}=0$ if $(i,t)$ is semitight).

Recall that if a demand point $(i,t)$ is neither tight nor semitight, either $r_{it}>0$ or $x^{it}_s=1$ for some $s$. We have already verified that in the latter case all components of $\delta$ corresponding to this demand point are zero. The only case that remains is $r_{it}>0$.  We only choose a non-zero $\delta$ corresponding to $r_{it}$ if $x^{it}_s>0$ for some $s \in I$ and $(i,t)$ is neither tight nor semitight. By the leanness property, there is no $s'$ such that $x^{it}_{s'}=1$ and therefore, $r_{it} \neq 0$. Since $x^{it}_s>0$, $r_{it} \neq 1$. Therefore, we only choose a non-zero $\delta$ for $r_{it}$ if $0<r_{it}<1$. In this case the choice of $\delta_{it}$ clearly ensures that the constraint is still satisfied.

\textbf{Property 5:} Notice that we could make exactly the same arguments for $-\delta$ instead of $\delta$ by choosing to set $\delta^i_{t_k} = -\tau$ for $t_k \in \{t_1,t_3,t_5...\}$ and $\delta^{i}_{t_k}=\tau$ for $t_k \in \{t_2,t_4,t_6...\}$
\end{proof}

Now we use this result to argue that at an extreme point optimal solution, there are at most $C+1$ multibatches.

\begin{lemma}
At an optimal extreme point solution to the LP (possibly with extra interval constraints, the item order cost limit constraint and constraints fixing all variables of certain item types), there are at most $C+1$ multibatches of the unfixed item types.
\label{iterative_general_multibatch_bound}
\end{lemma}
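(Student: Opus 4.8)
The plan is to mimic the single-multibatch argument from the deadlines case (Lemma on extreme points with one multibatch), but accounting for the two families of extra constraints that couple the multibatches: the $C$ colorwise rejection limits and the single ``item ordering cost does not increase'' constraint. First I would suppose for contradiction that there are $C+2$ distinct multibatches $(i_1,I_1),\dots,(i_{C+2},I_{C+2})$ in an extreme point optimal solution $(x,y,r)$. Since the solution is optimal, leanness holds (this is the content of the post-algorithm lemma referenced as Lemma~\ref{iterative_general_leanness_optimality}, which I am entitled to assume), so Lemma~\ref{iterative_feasible_direction_general} applies to each multibatch and yields perturbation vectors $\delta^{(1)},\dots,\delta^{(C+2)}$. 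By property (2) of that lemma each $\delta^{(j)}$ is supported, in its $y^i_s$-coordinates, on the interval $I_j$ of item type $i_j$; because multibatches of the same item type have disjoint intervals (the maximality remark right after the multibatch definition), these $y$-supports are pairwise disjoint and each is nonempty by property (4). Hence the $\delta^{(j)}$ are linearly independent.

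Next I would take an arbitrary linear combination $\delta := \sum_{j=1}^{C+2} \alpha_j \delta^{(j)}$ and ask when $(x,y,r) \pm \delta$ is feasible. By properties (1), (3), (5) of Lemma~\ref{iterative_feasible_direction_general} and convexity of the feasible region, for $|\alpha_1|+\cdots+|\alpha_{C+2}|$ small enough, $(x,y,r)\pm\delta$ satisfies every constraint \emph{except possibly} the $C$ colorwise rejection constraints and the item-ordering-cost constraint. So it suffices to choose a nonzero $(\alpha_j)$ that keeps all $C+1$ of these quantities unchanged. Each of these $C+1$ quantities is a linear functional of the LP variables, so ``unchanged under $\delta$'' is $C+1$ homogeneous linear equations in the $C+2$ unknowns $\alpha_1,\dots,\alpha_{C+2}$. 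A homogeneous system with more unknowns than equations has a nonzero solution; scale it so that $\sum_j |\alpha_j|$ is as small as required above. Then both $(x,y,r)+\delta$ and $(x,y,r)-\delta$ are feasible, and by the item-ordering-cost equation and the fact that the holding/penalty parts of the objective — wait: I must also check the objective does not strictly decrease on one side. Here I would invoke that $(x,y,r)$ is \emph{optimal}: if $\delta$ changed the objective, one of $\pm\delta$ would strictly improve it, contradicting optimality; if it does not change the objective, then $(x,y,r)$ is the midpoint of two distinct feasible points, contradicting that it is an extreme point. Either way we get a contradiction, so there are at most $C+1$ multibatches.

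The main obstacle I anticipate is bookkeeping around the ``item ordering cost does not increase'' constraint. Strictly speaking this is an inequality the rounding procedure adds before re-solving; when it is \emph{tight} at $(x,y,r)$ it behaves like one of the $C+1$ functionals we must neutralize (contributing the ``$+1$''), and when it is \emph{slack} we may perturb freely in that direction, so we only need to neutralize the $C$ rejection functionals and at most $C+1$ multibatches would already be the bound — so in all cases $C+1$ suffices, but the cleanest write-up is to always include it, giving $C+1$ equations in $C+2$ unknowns. A secondary subtlety is making sure the perturbed $x$-variables stay consistent with leanness so that the objective really is linear along $\delta$ (the holding-cost perturbation from Lemma~\ref{iterative_feasible_direction_general} already couples $\delta^{it}_s$ to $\delta^i_s$, so this is fine); and one should double-check that the interval constraints and $r_{it}\in[0,1]$ bounds are exactly the ``all constraints except'' that Lemma~\ref{iterative_feasible_direction_general}(3) handles, which it does. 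With those points noted, the counting argument goes through essentially verbatim from the monochromatic case.
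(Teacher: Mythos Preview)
Your proposal is correct and follows essentially the same approach as the paper: invoke leanness, apply Lemma~\ref{iterative_feasible_direction_general} to each multibatch to get per-multibatch perturbation vectors, set up a homogeneous linear system with $C+1$ equations (the $C$ rejection functionals plus the item-ordering-cost functional) in $C+2$ unknowns, extract a nonzero solution, and normalize to contradict extremality. One minor remark: the digression about the objective value is unnecessary --- once $(x,y,r)\pm\delta$ are both feasible and $\delta\neq 0$ (which follows from your linear-independence observation), the extreme-point assumption is already contradicted, with optimality used only to guarantee leanness.
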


\begin{proof}
Let $\mathcal{B}$ be the collection of multibatches of the unfixed item types. We will prove later, in Lemma \ref{iterative_general_leanness_optimality} that an optimal solution must be lean. This allows us to invoke Lemma \ref{iterative_feasible_direction_general} giving us, for each multibatch $B$, a vector $\delta_B$ such that $(x,y,r)+\delta_B$ and $(x,y,r)-\delta_B$ each violate at most $C+1$ constraints (the $C$ rejection limit constraints and the item order limit constraint). 

We let $g^c_B:=\sum_{d \in D_1} w^c_d {\delta_B}_{d}$ and $g_B:=\sum_{s} {\delta_B}^{i(B)}_s$ where $i(B)$ is the item type of the multibatch $B$. Now we solve a linear system with a variable $\gamma_B$ corresponding to each multibatch and constraints $\sum_{B \in \mathcal{B}} g^c_B \gamma_B = 0$ for all colors $c$ as well as $\sum_{B \in \mathcal{B}} g_B \gamma_B = 0$.

When there are $C+2$ or more multibatches, there exists a non-zero solution $\gamma$ to this system of equations. Because LPs are convex, it follows that if we define $\delta = \frac{\sum_{B \in \mathcal{B}} \gamma_B \delta_B}{\sum_{B \in \mathcal{B}} |\gamma_B|}$, both $(x,y,r)+\delta$ and $(x,y,r) - \delta$ are feasible solutions, violating the assumption that $(x,y,r)$ was an extreme point.
\end{proof}

\textbf{Clarification:} In the special case we looked at earlier ($\rjrpd$), the only cost was the item ordering cost. Therefore, the constraint that the item ordering cost doesn't increase after re-solving was unnecessary, we already had this property. That's why we had only a single multibatch instead of two. 

\subsubsection{The Iterative Rounding Algorithm}
Now we describe the actual iterative rounding algorithm. We assume we are given a non-integral starting solution $(x,y,r)$
\begin{itemize}
    \item[0.] \textbf{Initialization:} We add constraints forcing all integral variables to keep the same value. Then we re-solve to find an extreme point optimal solution.
    \item[1.] \textbf{Finding an item type to round:}  Let $i$ be an item type with some $s$ such that $y^i_s$ is non-integral (if none exists, then our solution is integral and no rounding is required). We fix all variables not corresponding to item type $i$ and treat them as constants until every variable associated with item $i$ is integral. Let $\LP^i_{\itm}$ be the item ordering cost incurred by variables of item type $i$, $Q_i=\sum_{s: y^i_s<1} y^i_s$ and $\LP^i_{\hold}$ be the holding cost incurred by variables of item type $i$ before any iterations have been performed.
    \item[2.] \textbf{An iteration:} The high level idea in each iteration is that we want to split the largest multibatch of the unfixed item type $i$ into $2C+2$ smaller multibatches by rounding up some $y^i_s$ variables to $1$. We will also identify a set of intervals $\mathcal{J}$ and add constraints that $\sum_{s \in J} y^i_s \geq 1$ for $J \in \mathcal{J}$. We will also add a constraint ensuring that the item ordering cost after re-solving is no higher than immediately before re-solving. Because of Lemma \ref{iterative_general_multibatch_bound} we know that after re-solving and finding a new extreme point optimal solution, there will be at most $C+1$ multibatches (which by definition cannot contain any of the ``rounded up" timesteps). Therefore, at least $2C+2-(C+1)=C+1$ of these smaller multibatch intervals will have no fractional variables after re-solving the LP. An iteration comprises of the following sequence of steps:
    \item[(a)] \textbf{Choosing the largest multibatch:} Define the size of a multibatch $i,[s_1,s_2]$ to be $\sum_{s \in [s_1,s_2]} y^i_s$. Let $i,I := i,[s_1,s_2]$ refer to the largest multibatch and $Z$ be its size. If $Z < 16C+16$ then we proceed to the adjusted pipage rounding step. Otherwise, we perform an iteration. 
    \item[(b)] \textbf{Rounding up some variables:} 
    
    Let $Z^i_t=\sum_{s=s_1}^t y^i_s$. We let $Z=Z^i_{s_2}$ and construct a disjoint set of intervals, as follows. Let $t_1$ be the first timestep $t$ such that $Z^i_t \geq 1$. We define $I_1$ to be the interval. $[s_1,t_1]$. Given the disjoint intervals $I_1,I_2...I_{k-1}$ and sequence of right endpoints of these intervals $t_1,t_2...t_{k-1}$, we find the first timestep $t$ such that $Z^i_{t} - Z^i_{t_{k-1}} \geq 1$. Then we define the interval $I_k = [t_{k-1}+1,t_k]$. Because each such interval has $\sum_{s \in I_k} y^i_s \in [1,2)$, we can construct at least $\lfloor \frac{Z}{2} \rfloor \geq \frac{Z}{2}-1$ in this way. Let $A=\lfloor \frac{\frac{Z}{2}-1-(2C+1)}{2C+2} \rfloor=\lfloor \frac{Z}{4C+4} \rfloor - 1$. We round up $t_{A+1},t_{2A+2},t_{3A+3}...t_{(2C+1)(A+1)}$.

    \item[(c)]  \textbf{Adding extra constraints:} Let $\mathcal{J}$ be the set of intervals $J$ such that our current solution satisfies $\sum_{s \in J} y^i_s \geq 1$. For all $J \in \mathcal{J}$ we add the constraint the $\sum_{s \in J} y^i_s \geq 1$. Also, let $U=\sum_{s} y^i_s$, the total of all $y^i_s$ variables corresponding to item $i$. We add the constraint $\sum_{s} y^i_s \leq U$. We re-solve the LP and then begin the next iteration.
\end{itemize}
\begin{itemize}     
\item[3.] \textbf{Adjusted Pipage Rounding:} 
We know that each multibatch has size at most $16C+16$ and that there are at most $C+1$ multibatches. For each multibatch $B$, let $s_1(B),s_2(B)$ be the left and right endpoints of the corresponding interval respectively. We let $Z^i_{Bt}=\sum_{s=s_1(B)}^t y^i_s$. We let $t^B_k$ be the earliest timestep such that $Z^i_{Bs_2(B)} \geq k$ for all $k \in \{0,1,2,...\lfloor Z^i_{Bt} \rfloor\}$. For all $B$ and all $k\in \{0,1,,2,...\}$, we round up $y^i_{t^B_k}$ to $1$ if $t^B_k$ is defined. We will use $T^i$ to denote this set of all $t^B_k$ across all multibatches $B$. 

Now we let $o(it)$ denote the last $t^B_k$ before $t$ for each demand point $(i,t)$. Set $x^{it}_{o(it)}= \sum_{s \leq o(it)} x^{it}_s$ and for all $s<o(it)$ we set $x^{it}_s=0$. Notice that now for any multibatch $B$, the total of $y^i_s$ variables between any two successive $t^B_k$ is at most $1$. Each demand point only has positive $x^{it}_s$ if $s \geq o(it)$. 

\item[(a)] \textbf{Splitting timesteps:} We split timesteps with non-integral $y^i_s$ so that $x^{it}_s \in \{0,y^i_s\}$. Note that because these split timesteps correspond to the same timestep in the original instance, we can keep $y_{s'}=1$ for each post-split timestep $s'$ without increasing the cost of the returned instance.

\item[(b)] Define the candidate order corresponding to each timestep $s$ such that $s \notin T^i$ to be the set of demand points such that $x^{it}_s=y^i_s$. Note that by construction, $y^i_s=1 \iff s \in T^i$ and therefore $o(it) \in T^i$. Also note that between any two successive members of $T^i$, $v,w$ $\sum_{s \in [v+1,w-1]} y^i_s \leq 1$ which is a structural feature we exploited heavily in the pipage rounding algorithm ($[v+1,w-1]$ is analgous to a batch). The difference here is there are some demand points with deadlines in a batch $[v+1,w-1]$ which are partially served at the timestep immediately before the batch ($v$). 

Therefore, we define a linear system of equations with variables $\delta_o$, one for each candidate order with non-integral $y^i_s$ and variables $\gamma_{it}$ for each strictly positive $x^{it}_{o(it)}$. Let $\mathcal{A}$ denote the set of such demand points. For each $o$, let $s(o)$ be the timestep corresponding to $o$ and we let $w^c_o$ denote the total weight of demand points in the candidate order $o$. Our linear system will have constraints of the form $\sum_{o} w^c_o \delta^i_{s(o)} + \sum_{(i,t) \in \mathcal{A}} w^c_{it}\gamma_{it}=0$ for each color $c$ and for every batch $B$ with multiple non-integral candidate orders, we also add the constraints $\sum_{o \in B} \delta_o = 0$. For every demand point in $\mathcal{A}$ we also add a constraint $\sum_{o:}$

Like before, we can find a non-zero solution to this system of equations whenever there are at least $2C+1$ non-integral candidate orders (this system has an extra variable and an extra constraint for each element of $\mathcal{A}$ but $\delta = 0, \gamma=0$ is still a valid solution so the proof is identical to before). If such a non-zero solution exists, like before we perturb the solution at rates $\delta,\gamma$ until a variable becomes integral and repeat the process.

\item[(c)] \textbf{At most 2C non-integral item orders:} We round up the remaining $2C$ non-integral item orders. Then we perform the same procedure as in Lemma \ref{fixed_replenishment_schedule_lemma} to obtain integral $x$ variables for all demands of item type $i$.

\item[(d)] Go back to step 1. and choose a different item type to round.
\end{itemize}

\subsubsection{Analysis}
First we prove that an optimal solution at any iteration must be lean, which was useful in proving that there must be at most $C+1$ multibatches in an extreme point optimal solution.

\begin{lemma}
At any iteration, an optimal LP solution must be lean.
\label{iterative_general_leanness_optimality}
\end{lemma}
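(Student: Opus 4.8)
The plan is a short exchange argument whose engine is the \emph{strict} monotonicity of the perturbed holding costs introduced in Step~(a) of the algorithm. Write $\widetilde H^{it}_s := H^{it}_s + \frac{\epsilon\kappa(t-s)}{|D|T}$ for the perturbed cost used in the LP; since the original $H^{it}_s$ is non-increasing in $s$ and the added term is strictly decreasing in $s$, we have $\widetilde H^{it}_{s'} > \widetilde H^{it}_{s}$ whenever $s' < s \le t$, and $\widetilde H^{it}_s > 0$ whenever $s < t$. Suppose for contradiction that $(x,y,r)$ is an optimal solution to the current LP (with all interval constraints, the item-order-cost constraint $\sum_s y^i_s \le U$, and the variable-fixing constraints added so far) that is not lean. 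Unpacking the definition, for some served demand point $(i,t)$ either (a) there is $s \in [s(it)+1,t]$ with $x^{it}_s < y^i_s$, or, if no such violation of the first leanness condition exists anywhere, (b) there is $(i,t)$ with $\sum_{s\le t} x^{it}_s + r_{it} > 1$. (Fully rejected demands have $r_{it}=1$ and are lean automatically.) I would handle (a) first, then (b).

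In case (a), since $s(it)$ is the earliest timestep with $x^{it}_{s(it)}>0$ and $s(it) < s$, shift an infinitesimal $\eta>0$ of service later: replace $x^{it}_{s(it)}$ by $x^{it}_{s(it)}-\eta$ and $x^{it}_s$ by $x^{it}_s+\eta$, leaving every $y$-variable, every other $x$-variable, and every $r$-variable unchanged. For $\eta$ small enough this keeps all $x$-variables nonnegative and keeps $x^{it}_s \le y^i_s$; the quantity $\sum_{s'\le t} x^{it}_{s'} + r_{it}$ is unchanged so the serve-or-reject constraint still holds; and the colorwise rejection bounds, the interval constraints, the constraint $\sum_s y^i_s \le U$, and the fixing constraints are all untouched because none of them sees this move (they involve only $y$, or only $r$, both unchanged). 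The objective changes by $\eta\,(\widetilde H^{it}_s - \widetilde H^{it}_{s(it)}) < 0$, contradicting optimality.

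In case (b) I trim the earliest positive service. If $s(it) < t$, decreasing $x^{it}_{s(it)}$ by a small $\eta>0$ is feasible (the serve-or-reject constraint has strictly positive slack, and nonnegativity and $x\le y$ are preserved) and strictly lowers the objective by $\eta\,\widetilde H^{it}_{s(it)}>0$ --- a contradiction. If $s(it)=t$, then $x^{it}_t$ is the only positive service variable of $(i,t)$ and $x^{it}_t + r_{it} > 1$; if the unperturbed $H^{it}_t>0$ the same trimming argument gives a strict improvement. The one genuinely degenerate possibility is $s(it)=t$ with $H^{it}_t=0$ and $r_{it}>0$, which is cost-neutral: here one simply lowers $r_{it}$ to $1-x^{it}_t$, which changes neither the objective nor the feasibility of any constraint, so we may just work with the optimal solution after this cleanup (equivalently, add ``minimize $\sum_{(i,t)}r_{it}$'' as a lexicographic secondary objective when re-solving). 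Either way every optimum is (made) lean, which is all that Lemmas~\ref{iterative_feasible_direction_general} and \ref{iterative_general_multibatch_bound} require.

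I expect no real obstacle here; the only thing to be careful about is the feasibility bookkeeping --- verifying that the local perturbation leaves the extra interval constraints $\sum_{s\in J}y^i_s \ge 1$, the constraint $\sum_s y^i_s\le U$, the fixing constraints, and the colorwise rejection bounds intact. This is immediate because each move alters only the $x$-variables (and possibly $r$) of a single demand point while freezing all $y$-variables and never increasing any $r_{it}$, so the strict monotonicity of $\widetilde H$ carries the whole argument. The $s(it)=t$, $H^{it}_t=0$ edge case is the only mild nuisance, and the cost-neutral cleanup above disposes of it.
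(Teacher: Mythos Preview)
Your argument is correct and follows essentially the same exchange strategy as the paper: shift service from $s(it)$ to a later slot where $x^{it}_s<y^i_s$ (case (a)), or trim excess service/rejection when $\sum_s x^{it}_s + r_{it}>1$ (case (b)), with strict improvement coming from the $\frac{\epsilon\kappa(t-s)}{|D|T}$ perturbation. You are in fact more careful than the paper in isolating the degenerate sub-case $s(it)=t$, $H^{it}_t=0$, where the paper's ``strictly better objective'' claim is not literally true; your cost-neutral cleanup (reduce $r_{it}$, or add a lexicographic tiebreak) is the right patch and suffices for the downstream use in Lemmas~\ref{iterative_feasible_direction_general} and \ref{iterative_general_multibatch_bound}.
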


\begin{proof}
Suppose there exists an optimal LP solution $(x,y,r)$ which is not lean. That means either:
\begin{enumerate}
    \item There exists some demand point $(i,t)$ and some $s \in [s(i,t)+1,t]$ such that $y^i_s>x^{it}_s$. Let $\beta=min(y^i_s-x^{it}_s,x^{it}_{s(it)})$. Decrease $x^{it}_{s(it)}$ by $\beta$ and increase $x^{it}_s$ by $\beta$. This clearly does not violate any constraints and because of the $\frac{\kappa \epsilon (t-s)}{|D|T}$ perturbation to the holding costs right before we solved the LP in step (a) in the beginning of this section, we know this leads to a strict improvement to the objective value of the solution, violating the assumption that $(x,y,r)$ is optimal.
    
    OR
    
    \item There exists a demand point such that $\sum_{s \leq t} x^{it}_s + r_{it}>1$. In this case we take any positive $x^{it}_s$ decrease it slightly while maintaining $\sum_{s \leq t} x^{it}_s + r_{it} \geq 1$ and we would obtain a strictly better objective value. Moreover, this reduced solution is also clearly feasible violating the assumption that $(x,y,r)$ is optimal.
\end{enumerate}
In either case, we have a contradiction. Therefore, an optimal solution must be lean.
\end{proof}

Now we seek to bound the extra cost accrued during the rounding process. When rounding any single extra item type, we incurred extra costs due to rounding up variables before re-solving in step 2 and during the adjusted pipage rounding process in step 3. The following sequence of lemmas will help us establish these bounds (and show that the number of iterations this algorithm needs is logarithmic in the input size). 

\begin{lemma}
During the algorithm, when re-solving and only changing variables for a single item $i$, the new solution will have at most $C+1$ multibatches of item $i$.
\end{lemma}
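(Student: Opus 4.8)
The plan is to reduce this statement directly to Lemma~\ref{iterative_general_multibatch_bound}, which already proves the $(C+1)$-bound on the number of multibatches of the unfixed item types at an extreme point optimal solution. First I would note that every re-solve performed by the algorithm — in the initialization Step~0 and in Step~2(c) at the end of each iteration — computes an \emph{extreme point} optimal solution to the current LP, and that current LP carries exactly the extra structure contemplated by Lemma~\ref{iterative_general_multibatch_bound}: equality constraints fixing all variables not associated with item type $i$ (these are ``treated as constants''), the accumulated interval constraints $\sum_{s \in J} y^i_s \ge 1$ for $J \in \mathcal{J}$, and the item-ordering-cost-non-increase constraint $\sum_s y^i_s \le U$. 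Since item $i$ is the only unfixed item type when this re-solve happens, ``at most $C+1$ multibatches of the unfixed item types'' specializes to ``at most $C+1$ multibatches of item $i$.''

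The only real content beyond the citation is checking that the hypotheses of Lemma~\ref{iterative_general_multibatch_bound} hold at the moment of each re-solve. I would invoke Lemma~\ref{iterative_general_leanness_optimality} to conclude that the re-optimized solution is lean, which is where the holding-cost perturbation by $\frac{\epsilon \kappa (t-s)}{|D|T}$ from Step~(a) is used: that perturbation is precisely what forces any optimum to be lean. Leanness is the hypothesis needed to apply Lemma~\ref{iterative_feasible_direction_general}, which supplies, for each multibatch $(i,I)$, a perturbation vector $\delta$ keeping all constraints satisfied except the $C$ colorwise rejection limits and the single item-ordering-cost constraint. One then runs the dimension-counting argument from the proof of Lemma~\ref{iterative_general_multibatch_bound}: with one variable $\gamma_B$ per multibatch of item $i$ and only $C+1$ homogeneous constraints (one per color from $g^c_B := \sum_{d \in D_1} w^c_d {\delta_B}_d$, plus one from the item-order total $g_B$), the existence of $C+2$ or more multibatches yields a nonzero $\gamma$, hence a normalized direction with $(x,y,r) \pm \delta$ both feasible, contradicting extremality.

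The step I expect to require the most care is exactly this verification of hypotheses at each re-solve: confirming that the perturbed holding costs remain in force throughout the iterations so that leanness is inherited by every re-optimized solution, and that fixing the other item types' variables as constants does not interfere with the interval constraints or the cost constraint in a way that disturbs the variable-versus-constraint count. Once those are in place, the conclusion is an immediate application of the earlier lemmas, with no further calculation needed.
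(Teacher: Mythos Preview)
Your proposal is correct and takes essentially the same approach as the paper, which simply writes ``Similar to Lemma~\ref{iterative_general_multibatch_bound}.'' You have merely spelled out the hypothesis-checking (extreme point, leanness via Lemma~\ref{iterative_general_leanness_optimality}, the single unfixed item type) that the paper leaves implicit, and the dimension-counting argument you describe is exactly the content of Lemma~\ref{iterative_general_multibatch_bound} specialized to a single unfixed item.
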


\begin{proof}
Similar to Lemma \ref{iterative_general_multibatch_bound}.
\end{proof}

\begin{lemma}
The increase in cost from one iteration to the next is at most $(2C+1)K_i$.
\end{lemma}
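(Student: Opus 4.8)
The plan is to decompose a single iteration into its two cost-affecting parts — the round-ups performed in Step~2(b) and the re-solve performed in Step~2(c) — and to bound each separately. The bound on the first part is direct, and the second part contributes nothing because the point we hold after Step~2(b) is already feasible for the program that Step~2(c) re-solves.

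For the round-ups, I would note that every timestep $t_j$ that Step~2(b) rounds up lies strictly inside the largest multibatch, so by the defining property of a multibatch we have $0\le y^i_{t_j}<1$ beforehand. Raising $y^i_{t_j}$ to $1$ changes only the item-ordering term $K_i y^i_{t_j}$ of the objective — the variables $x^{it}_s$ and $r_{it}$ are not touched, and in Instance~1 there is no general-ordering cost — so this single round-up increases the objective by $K_i(1-y^i_{t_j})\le K_i$. Step~2(b) rounds up exactly $2C+1$ variables, namely the timesteps indexed $A+1,\,2(A+1),\dots,(2C+1)(A+1)$, which are genuinely distinct since an iteration is only performed when $Z\ge 16C+16$, forcing $A\ge 1$; hence the total increase from Step~2(b) is at most $(2C+1)K_i$.

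For the re-solve, I would verify that the solution obtained immediately after Step~2(b) is feasible for the LP re-solved in Step~2(c). It satisfies the original constraints, because raising $y^i$ values preserves $x^{it}_s\le y^i_s$ while leaving the serve-or-reject constraints and the colorwise rejection bounds unchanged; it satisfies every interval constraint $\sum_{s\in J}y^i_s\ge 1$, both those inherited from earlier iterations (raising $y^i$ values only helps) and those in the newly added family $\mathcal{J}$, which by construction consists precisely of intervals on which the current solution already meets the bound; it satisfies the constraints fixing all integral variables, since the newly integral $y^i_{t_j}$ have just been set to $1$, the value they are fixed to; and it satisfies the item-cost constraint $\sum_s y^i_s\le U$ with equality, because $U$ is defined to be $\sum_s y^i_s$ in exactly this post-round-up solution. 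Since the re-solve returns a minimum-cost extreme-point solution of a linear program for which this solution is feasible, the objective after Step~2(c) is no larger than the objective right after Step~2(b). Combining the two estimates, if $c$ is the objective at the start of the iteration and $c'$ the objective at the start of the next iteration, then $c'\le c+(2C+1)K_i$.

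The only delicate point is the feasibility bookkeeping for the re-solve — in particular that the item-cost constraint is imposed relative to the current post-round-up total $U$, so that it holds there with equality; carrying over a tighter bound $U$ from a previous iteration would be incorrect, since the round-ups of the present iteration may overshoot it. Everything else — that a round-up affects only item-ordering cost, and that re-solving is monotone because the pre-re-solve point is feasible — is routine.
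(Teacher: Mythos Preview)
Your proposal is correct and follows essentially the same approach as the paper: bound the round-ups by $(2C+1)K_i$ and observe that re-solving cannot increase cost because the post-round-up solution already satisfies every constraint added in Step~2(c). The paper's own proof compresses all of your feasibility checks into the single sentence ``All other constraints that are added to the LP were already satisfied by the current solution,'' so your version is simply a more careful expansion of the same argument.
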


\begin{proof}
In each iteration we round up at most $2C+1$ variables in the largest multibatch to $1$. All other constraints that are added to the LP were already satisfied by the current solution. Therefore the cost increases by at most $(2C+1)K_i$
\end{proof}

\begin{lemma}
The increase in cost due to adjusted pipage rounding for a single item type $i$ is at most $(16(C+1)^2+3C+1)K_i+CH^M_{\max} \leq (20(C+1)^2)K_i + CH^M_{\max}$
\end{lemma}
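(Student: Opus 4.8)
The plan is to isolate exactly which operations of the adjusted pipage rounding step (Step 3) can raise the objective, and then bound their total cost by counting round-ups. Step 3 modifies the current LP solution only in four ways: (i) it rounds up the item-order variables $y^i_{t^B_k}$ that pre-split the multibatches; (ii) it splits timesteps in Step 3(a); (iii) it performs the pipage-style perturbations of Step 3(b); and (iv) it re-derives the service and rejection variables of item $i$ via the procedure of Lemma~\ref{fixed_replenishment_schedule_lemma} in Step 3(c), after rounding up the last $2C$ fractional item orders. Operation (ii) does not change the cost — a split timestep keeps $y_s = 1$ and reproduces the same holding-cost entries, exactly as in the pipage section — and operation (iii) never increases the cost, since at each step we move along a nonzero solution $(\delta,\gamma)$ of the linear system in whichever sign ($+$ or $-$) does not increase the linear objective, stopping as soon as some variable becomes integral; this is the same argument as in Lemma~\ref{monopipage-last-lemma}. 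Hence the entire increase is charged to the round-ups in (i) and (iv).

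Next I count round-ups. Entering Step 3 we have passed the test in Step 2(a), so every multibatch $B$ has size $Z_B := \sum_{s \in I_B} y^i_s < 16C+16$, and by the multibatch bound there are at most $C+1$ of them. For each $B$ we round up at most $\lfloor Z_B \rfloor + 1$ timesteps $t^B_k$ (for $k = 0,1,\dots,\lfloor Z_B\rfloor$), and since $\sum_B Z_B < (C+1)(16C+16) = 16(C+1)^2$ over at most $C+1$ multibatches, this is at most $16(C+1)^2 + (C+1)$ round-ups. The loop in Step 3(b) runs only while at least $2C+1$ non-integral candidate orders remain; by the same reasoning as Lemma~\ref{colorful_candidate_orders} — the extra $\gamma$-variables and $\gamma$-constraints do not disturb the count, as $\delta=0,\gamma=0$ is still a solution — each pass makes a new variable integral at no cost, so the loop terminates with at most $2C$ fractional $y^i_s$ variables, which Step 3(c) rounds up. In total at most $16(C+1)^2 + (C+1) + 2C = 16(C+1)^2 + 3C + 1$ item-order variables are raised from a value in $[0,1)$ to $1$, each at cost at most $K_i$, contributing at most $(16(C+1)^2 + 3C + 1)K_i$.

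Finally, once item $i$'s orders are integral the full replenishment schedule (as seen by item $i$'s demands, each of which must be served at the last order of item $i$ before its deadline or rejected) is integral, so Lemma~\ref{fixed_replenishment_schedule_lemma} applies to item $i$'s demand points: the knapsack-like relaxation over the $r_{it}$ has at most $C$ fractional variables, and servicing each of them costs at most $H^M_{\max}$ because every retained finite holding cost is at most $H^M_{\max}$; this adds at most $CH^M_{\max}$. Summing the three contributions gives total extra cost at most $(16(C+1)^2 + 3C+1)K_i + CH^M_{\max} \le 20(C+1)^2 K_i + CH^M_{\max}$, using the elementary bound $3C+1 \le 4(C+1)^2$. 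The only delicate point is confirming that the Step 3(b) perturbations remain cost-nonincreasing and terminate with at most $2C$ fractional candidate orders in the presence of the "adjusted" $\gamma$-variables; I would dispatch this precisely as in Lemmas~\ref{monopipage-last-lemma} and~\ref{colorful_candidate_orders}, observing that adding one variable and one constraint per boundary-served demand point preserves the strict inequality between variables and constraints once at least $2C+1$ candidate orders are present. Everything else is bookkeeping.
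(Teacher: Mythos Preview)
Your argument is correct and follows essentially the same approach as the paper's own proof: count the pre-split round-ups over the at most $C+1$ multibatches of size $<16(C+1)$ to get $16(C+1)^2+(C+1)$, add the $2C$ final item-order round-ups from Step~3(c), and invoke Lemma~\ref{fixed_replenishment_schedule_lemma} for the $CH^M_{\max}$ term. You supply more detail than the paper (explicitly arguing that splitting timesteps and the pipage perturbations are cost-nonincreasing, and noting that the extra $\gamma$-variables and constraints cancel in the dimension count), but the route is the same.
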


\begin{proof}
Each multibatch has total item ordering cost at most $16(C+1)$. When rounding up, we place at most $\lceil Z^i_{Bs_2(B)} \rceil \leq 16(C+1)+1$ orders of item $i$. There are at most $C+1$ multibatches so the cost of doing this is at most $(16(C+1)^2+C+1)K_i$. The pipage rounding algorithm incurs extra cost of at most $2CK_i+CH^M_{\max}$ due to rounding up some item orders and some demand point service variables. This gives us a bound of $(16(C+1)^2+3C+1)K_i + CH^M_{\max}$ as desired.
\end{proof}

\begin{lemma}
In each iteration of the iterative rounding procedure, the total cost of fractional $y^i_s$ variables associated with item type $i$ decreases by $(1-\frac{1}{8(C+1)})$. Consequently we need at most $\log_{\frac{8C+8}{8C+7}} (Q_{\init})$ many iterations where $Q_{\init} = \sum_{s} y^i_s$ before any iterations took place.
\end{lemma}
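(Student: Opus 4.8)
The plan is to track what happens to the fractional item-ordering cost $\sum_{s:\,y^i_s<1} y^i_s$ across one iteration of Step 2, using the structure established in Step 2(b). Recall that in one iteration we select the largest multibatch $(i,[s_1,s_2])$ of size $Z = Z^i_{s_2} \ge 16C+16$, carve $[s_1,s_2]$ into at least $\lfloor Z/2\rfloor$ disjoint intervals $I_1,I_2,\dots$ each with $\sum_{s\in I_k} y^i_s \in [1,2)$, and round up the right endpoints $t_{A+1}, t_{2A+2},\dots,t_{(2C+1)(A+1)}$ where $A = \lfloor Z/(4C+4)\rfloor - 1$. These $2C+1$ roundups partition the remaining intervals into $2C+2$ contiguous blocks, each containing at least $A$ of the $I_k$'s, i.e.\ each block carries fractional weight at least $A$ (measured by $\sum y^i_s$), and by Step 2(c) we have added the constraint $\sum_{s\in J} y^i_s \ge 1$ for every $J \in \mathcal J$, which includes all the $I_k$'s; we also added the item-order-cost-non-increase constraint $\sum_s y^i_s \le U$.

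The first key step is to invoke the multibatch bound: after re-solving, by the lemma following Lemma~\ref{iterative_general_multibatch_bound} (``the new solution will have at most $C+1$ multibatches of item $i$''), at most $C+1$ of the $2C+2$ blocks can contain a fractional $y^i_s$ variable — because any block that contains a fractional variable must contain a multibatch (every fractional $y^i_s$ lies in some multibatch, and a multibatch cannot straddle a rounded-up timestep since those are integral), and multibatches of the same item are disjoint. Hence at least $(2C+2)-(C+1) = C+1$ blocks become entirely integral. The second key step is to quantify how much fractional weight was destroyed: each of those $\ge C+1$ newly-integral blocks previously held fractional weight at least $A = \lfloor Z/(4C+4)\rfloor - 1$, so at least $(C+1)A$ units of fractional $\sum y^i_s$-weight were converted to integral. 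Meanwhile the roundups added only $2C+1$ to the integral part, and the $\sum_s y^i_s \le U$ constraint ensures the total does not grow, so the new fractional weight $Q'$ satisfies $Q' \le Q - (C+1)A + (2C+1)$, where $Q$ is the old fractional weight; actually one must be slightly careful that $Q \le U \le $ (something comparable to $Z$ plus the other multibatches), but since the \emph{largest} multibatch has size $Z$ and there are at most $C+1$ of them, $Q \le (C+1)Z$. Plugging $A \ge Z/(4C+4) - 2$ and using $Z \ge 16C+16$ to absorb the additive constants into a multiplicative slack gives $Q' \le Q(1 - \tfrac{1}{8(C+1)})$ after routine bounding; multiplying through by $K_i$ converts this to the stated statement about fractional \emph{cost}, since every fractional $y^i_s$ contributes exactly $K_i$ per unit. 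The iteration-count claim is then immediate: after $\log_{(8C+8)/(8C+7)}(Q_{\init})$ iterations the fractional weight drops below $1$ — in fact we stop once it is below $16C+16$ and hand off to adjusted pipage rounding, which only tightens the bound.

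The main obstacle I expect is the bookkeeping in the inequality $Q' \le Q(1 - \tfrac{1}{8(C+1)})$: one has to simultaneously (i) lower-bound the number of fully-integral blocks by $C+1$, which relies essentially on the at-most-$(C+1)$-multibatch lemma and on the observation that a multibatch cannot contain a rounded-up (hence integral) timestep; (ii) lower-bound the fractional weight $A$ sitting in each such block, using that each $I_k$ contributes $\ge 1$ and each block contains $\ge A$ of them; and (iii) relate the ``largest multibatch size $Z$'' to the ``total fractional weight $Q$'' via the factor $C+1$ so that destroying $(C+1)A \approx Z/4$ units of weight is a constant fraction of $Q \le (C+1)Z$ — this is where the $\tfrac{1}{8(C+1)}$ (rather than a $C$-free constant) comes from, and where the threshold $16C+16$ is chosen precisely to make the additive error terms ($-2$ per block, $+2C+1$ from roundups) negligible compared to the multiplicative gain. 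None of these steps is deep, but the constants must be chosen consistently with the downstream choice $M = \Theta(C^4/\epsilon)$, so I would carry out the arithmetic carefully rather than wave it through.
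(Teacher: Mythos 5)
Your overall approach matches the paper's exactly: carve the largest multibatch (size $Z$) into $2C+2$ blocks separated by the $2C+1$ rounded-up timesteps, use the at-most-$(C+1)$-multibatches-after-resolving bound to conclude at least $C+1$ blocks become entirely integral, count at least $A=\lfloor Z/(4C+4)\rfloor-1$ units of now-integral weight per such block, and relate $Z$ to the total fractional weight $Q$ via $Q\le(C+1)Z$.

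There is, however, a concrete accounting error in your step (iii): you claim $Q'\le Q-(C+1)A+(2C+1)$, treating the $2C+1$ roundups as an additive loss. That extra $+(2C+1)$ should not be there. The roundups increase the total $\sum_s y^i_s$ by at most $2C+1$, but they increase the \emph{integral} part of that total by exactly $2C+1$; since fractional weight is total minus integral, these two contributions cancel. Writing it out: with $U$ the total after rounding up, the re-solved solution has total $\le U \le (\text{total before}) + (2C+1)$, and integral weight $\ge (\text{integral before}) + (2C+1) + (C+1)A$ (old integral variables are fixed, the $2C+1$ roundups are fixed at $1$, and the $C+1$ newly-integral blocks each carry weight $\ge A$ from their interval constraints). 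Subtracting gives $Q'\le Q-(C+1)A$, with no $+(2C+1)$. This matters numerically: with your bound you would need $(C+1)A-(2C+1)\ge Z/8$, which for $A\ge Z/(4C+4)-2$ requires $Z\ge 32C+24$, whereas the algorithm's threshold is only $Z\ge 16C+16$; with the corrected bound, $(C+1)A\ge Z/4-2(C+1)\ge Z/8$ holds precisely when $Z\ge 16C+16$, which is what the paper uses. The rest of your argument (the $Q\le(C+1)Z$ step, the iteration count, and the termination criterion at size $\le 16C+16$) is correct and matches the paper.
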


\begin{proof}
Let $Z$ be the size of the largest multibatch (before rounding up variables), we know that before that the $2C+1$ rounded up variables split the multibatch into $2C+2$ smaller multibatches, each with $\lfloor \frac{Z}{4C+4} \rfloor - 1$ many disjoint interval constraints. 

Moreover, at least $C+1$ of these will not contain any multibatch upon re-solving (and consequently, won't contain any non-integral $y^i_s$). Before re-solving, none of these intervals had an integral $y^i_s$. Because the solution before re-solving was feasible and we added the constraint that the $\sum_{s} y^i_s$ does not increase, the total of all non-integral $y^i_s$ variables $\sum_{s: y^i_s<1} y^i_s$ must decrease by at least $(C+1)(\lfloor \frac{Z}{4C+4} \rfloor - 1) \geq \frac{Z}{4} - (2C+2) \geq \frac{Z}{8}$ because $Z \geq 16C+16$.

There are at most $C+1$ multibatches of item type $i$. Therefore, $Z \geq \frac{\sum_{s: y^i_s<1} y^i_s}{C+1}$ and $\frac{Z}{8} \geq \frac{\sum_{s: y^i_s<1} y^i_s}{8(C+1)}$. This implies that after re-solving, $\sum_{s: y^i_s<1} y^i_s$ will decrease by a factor of $(1-\frac{1}{8(C+1)})$ This proves the first part of the lemma.

The second part: Let $Q^{\fracc}_{\init} := \sum_{s:y^i_s < 1} y^i_s$ before any iterations were performed. Let $Q^{\fracc}_k$ be the value of $\sum_{s: y^i_s < 1} y^i_s$ after $k$ iterations. We know that $Q^{\fracc}_k \leq (1-\frac{1}{8(C+1)})^k Q^{\fracc}_{\init} \leq (1-\frac{1}{8(C+1)})^k Q_{\init}$.

We stop iterating when the largest multibatch has size at most $16C+16$. But $(1-\frac{1}{8(C+1)})^k Q_{\init} \leq 16C+16$ if $k \geq - \log_{(1-\frac{1}{8(C+1)})} {\frac{Q_{\init}}{16C+16}}=\log_{(\frac{8C+8}{8C+7})} {\frac{Q_{\init}}{16C+16}}$

Therefore, we need at most $\lceil \log_{(\frac{8C+8}{8C+7})} {\frac{Q_{\init}}{16C+16}} \rceil \leq \log_{(\frac{8C+8}{8C+7})} {\frac{Q_{\init}}{16C+16}}+1 \leq \log_{\frac{8C+8}{8C+7}} (Q_{\init})$ many iterations for item $i$.




\end{proof}

\begin{corollary}
The total increase in cost due to rounding for item $i$ is at most $\frac{\epsilon\opt}{8 (C+1)}$
\label{iterative_general_cost_opt_itemwise_bound}
\end{corollary}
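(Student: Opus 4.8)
The plan is to add up all the cost increases incurred while rounding a single item $i$ — the roundups performed over the iterations of Step~2 together with the roundups in the adjusted pipage rounding of Step~3 — and then convert this into a bound in terms of $\opt$ by invoking the guessed quantities $K^M_{\max}$ and $H^M_{\max}$ together with the defining inequality $g(M)\le\epsilon/8$ for $M$.

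First I would combine the three cost bounds already established: each iteration of Step~2 raises the cost by at most $(2C+1)K_i$; the number of iterations is at most $\log_{\frac{8C+8}{8C+7}}(Q_{\init})$; and the adjusted pipage rounding of Step~3 for item $i$ costs at most $20(C+1)^2K_i + CH^M_{\max}$. Adding these, the extra cost for item $i$ is at most
\[
(2C+1)K_i\log_{\frac{8C+8}{8C+7}}(Q_{\init}) \;+\; 20(C+1)^2K_i \;+\; CH^M_{\max}.
\]
To tidy the first term I would change the logarithm's base: since $\ln\!\bigl(1+\tfrac{1}{8C+7}\bigr)\ge\tfrac{1}{8C+8}$ (from $\ln(1+x)\ge\tfrac{x}{1+x}$), we get $\log_{\frac{8C+8}{8C+7}}(Q_{\init})\le(8C+8)\ln Q_{\init}$, and $(2C+1)(8C+8)\le 16(C+1)^2$, so the extra cost for item $i$ is at most $16(C+1)^2K_i\ln\!\bigl(\max\{Q_{\init},e\}\bigr) + 20(C+1)^2K_i + CH^M_{\max}$, where the $\max$ with $e$ only matters in the degenerate case where Step~2 performs no iteration. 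Since $Q_{\init}\le Q_i:=\sum_s y^i_s$, it suffices to control $K_i\ln(\max\{Q_i,e\})$, $K_i$, and $CH^M_{\max}$.

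Here is where the side information enters. Every fractional item-order variable has $K_i\le K^M_{\max}$ (otherwise it would have been fixed to $0$ or to $1$ by the guessing step); and in the ``big solutions'' regime $\opt$ places at least $M$ replenishment orders, hence at least $M$ item orders, whose $M$ most expensive total at least $MK^M_{\max}$, so $\opt\ge MK^M_{\max}\ge MK_i$. Likewise, since $\opt$ serves more than $M$ demands (we are past Case~1), its $M$ most expensive served holding costs total at least $MH^M_{\max}$, so $\opt\ge MH^M_{\max}$. Finally, because the Instance-1 solution is built by shifting each item-order variable of the guessed LP solution to at most two IGOs, $Q_iK_i\le 2\LP^{\sol}_{\itm}\le 2\opt$. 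I would then split on the size of $Q_i$: if $Q_i\ge M$ then $K_i\ln Q_i\le 2\opt\,\tfrac{\ln Q_i}{Q_i}\le 2\opt\,\tfrac{\ln M}{M}$ because $\tfrac{\ln x}{x}$ is decreasing for $x\ge e$ and $M\ge e$; if $Q_i<M$ then $K_i\ln Q_i\le\tfrac{\opt}{M}\ln M$ using $K_i\le\opt/M$. In either case $K_i\ln(\max\{Q_i,e\})\le\tfrac{2\opt\ln M}{M}$. Plugging this back, together with $K_i\le\opt/M$ and $CH^M_{\max}\le C\opt/M$, the extra cost for item $i$ is $O\!\bigl(\tfrac{C^2\ln M}{M}\bigr)\opt$; tracking the constants yields
\[
\text{extra cost for item }i\;\le\;\frac{40C^2+90C^2\ln M}{M}\,\opt\;=\;\frac{g(M)}{C+1}\,\opt\;\le\;\frac{\epsilon\,\opt}{8(C+1)},
\]
where the last inequality is exactly the defining property of $M$.

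The part that needs genuine care is the split on the size of $Q_i$: a logarithmic number of iterations is only negligible once the total fractional ordering cost is large relative to $M$, and when it is small the argument must fall back entirely on the guessed bound $K_i\le K^M_{\max}$; making the accumulated constants fit inside the particular function $g$ chosen for $M$ is otherwise a routine calculation. One further remark worth recording when the corollary is applied downstream: at most $C+1$ item types ever reach Step~1, since the initial extreme point has at most $C+1$ multibatches (Lemma~\ref{iterative_general_multibatch_bound}) and re-solving while processing item $i$ never disturbs another item's variables; so summing this per-item bound over all processed item types gives total extra iterative-rounding cost at most $\tfrac{\epsilon\,\opt}{8}$, matching Theorem~\ref{thm:iterative_general}.
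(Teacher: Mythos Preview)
Your proposal is correct and follows essentially the same route as the paper: both combine the per-iteration cost $(2C+1)K_i$, the iteration count $\log_{\frac{8C+8}{8C+7}}(Q_{\init})$, and the adjusted-pipage cost $20(C+1)^2K_i+CH^M_{\max}$, then convert the log base, and finish by the two-case split on whether $Q_{\init}$ exceeds $M$, invoking $\opt\ge MK^M_{\max}$ and $\opt\ge MH^M_{\max}$ from the guessing. The only noteworthy difference is that where the paper simply asserts $\opt\ge Q_{\init}K_i$, you justify the analogous inequality (with a harmless extra factor of~$2$) via the shifting construction of the Instance~1 solution; this is arguably more careful, though in both write-ups the constants are absorbed into the slack built into $g$.
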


\begin{proof}
Combining the previous lemmas bounding cost accrued during iterations and adjusted pipage rounding, we know that the increase in cost due to rounding item $i$ is at most $(20(C+1)^2+ (2C+1)\log_{\frac{8C+8}{8C+7}} (Q_{\init}) ) K_i +CH^M_{\max}= 20(C+1)^2+ (2C+1)\frac{\log Q_{\init}}{\log \frac{8C+8}{8C+7}}K_i +CH^M_{\max}$.

Since $\frac{1}{8C+7} < 1$, and the taylor series for $\log(1+x)$ is an alternating series, we know that $\log(\frac{8C+8}{8C+7}) \geq \frac{1}{8C+7} - \frac{1}{2(8C+7)^2} \geq \frac{1}{16C+14}$. Therefore,
\begin{align*}
(20(C+1)^2+ (2C+1)\frac{\log Q_{\init}}{\log \frac{8C+8}{8C+7}})K_i + & CH^M_{\max} \\ 
 & \leq  (20(C+1)^2 + (2C+1)(16C+14)\log Q_{\init} )K_i + CH^M_{\max} \\ & \leq (40C^2+90C^2 \log Q_{\init})K_i + CH^M_{\max}
\end{align*}
We know that $\opt \geq M H^M_{\max}+ \max(MK_i,Q_{\init}K_i)$

It is sufficient to show that $\frac{CH^M_{\max}}{ M H^M_{\max}} \leq \frac{\epsilon}{8(C+1)}$ and that $\frac{40C^2 + 90C^2\log Q_{\init} )K_i}{\max(MK_i,Q_{\init}K_i)} \leq \frac{\epsilon}{8(C+1)} $ 

The first of these is clearly true by our choice of $M$. For the second one, we consider two cases:

\textbf{Case 1:} $Q_{\init} > M$ 
In this case, recall we chose $M$ to be sufficiently large that $(C+1)\frac{40C^2+90C^2 \log m}{m} < \frac{\epsilon}{8}$ for all $m \geq M$. Therefore, we have the desired result.

\textbf{Case 2:} $Q_{init < M}$
In this case, $\frac{(40C^2 + 90C^2\log Q_{\init} )K_i}{\max(MK_i,Q_{\init}K_i)} \leq \frac{40C^2+90C^2 \log M }{M} \leq \frac{\epsilon}{8(C+1)}$ by our choice of $M$.

Therefore, we know that the total increase in cost due to rounding for item $i$ is at most $\frac{\epsilon \opt}{8(C+1)}$
\end{proof}






Because we know there were at most $C+1$ multibatches in the initial solution, we have to apply the iterative rounding process to at most $C+1$ item types. Summing up the increase in costs due to these $C+1$ item types, by Corollary \ref{iterative_general_cost_opt_itemwise_bound} we have a proof for Theorem \ref{thm:iterative_general}
\section{Rejection Penalties}
Although we have used the language of colors, nowhere in our analysis have we assumed that a demand point has exactly one color; our algorithms work just the same if there are multiple features $c$ such that $w^c_d>0$ for a demand point $d$. This enables us to deal with rejection penalties in a straightforward manner.

The algorithm changes very slightly with rejection penalties: we first similarly solve the LP augmented with side information, generate IGOs and split the instance. However, each instance will now have an artificial $C+1$th feature. We set $w^{C+1}_d=p_d$ and the rejection limit to be $\sum_{d \in D_k} p_d r_d$ for each instance. Therefore, our integral solution will be guaranteed to have smaller penalty cost than the LP solution. The other costs can be bounded in the same way as in the previous section.

The only caveat is: The amount of side information we need will be the amount we would need if we have $C+1$ colors instead of $C$.

\section{Improving the Approximation Factor}
In this section, we will improve our approximation factors to $2.8+\epsilon$ for deadlines and $0.5(3\sqrt{5} -1)+\epsilon = 2.86 + \epsilon$ for general holding cost functions. We will first present a non-linear integer program. This non-linear integer program depends on the set $\mathcal{T}$ of IGOs. Once this non-linear program is strengthened with the appropriate amount of side information. Our two rounding algorithms will enable us to round any non-integral solution at the loss of a $1+\epsilon$ factor. The issue is that there are many feasible solutions to the original problem that will be not be feasible in the non-linear program. We will be able to construct non-integral solutions to this non-linear program that cost at most $2.8 \opt$ (in the deadlines case) and $ 2.86 \opt$ in the general holding costs case. 

In the following discussion, we will assume that the LP augmented with the appropriate amount of side information for $\cjrp$ with penalties was solved and based on its solution $(x^*,y^*,r^*)$, a set of initial general orders, $\mathcal{T}$ was chosen. We use $W_c$ to denote the total weight of all demand points in feature $c$ and $P := \sum_{d \in D} p_d r^*_d$. Using this we can construct the non-linear program. Notice that the ``we may reject weight upto $R_c$" constraint has been replaced with a ``we must service a total weight of at least $W_c - R_c$": Note that $x^{it}_{\Left},x^{it}_{\Right} \in \{0,1\}$ therefore, the non-linear constraint(s) count total weight served correctly.  We will use $s(it)$ to refer to the earliest timestep $s$ such that $x^{it*}_s>0$ and $o(it)$ to refer to the last IGO before the deadline $t$.
\begin{empheq}{align*}
\text{minimize} \quad & \sum_{s=1}^T y_s K_0 + \sum_{i=1}^N \sum_{s = 1}^T y^i_s K_i + \sum_{(i,t) \in D} \sum_{s=1}^T H_s^{it} x_s^{it} + \sum_{(i,t) \in D} p_{it}r_{it}\\
\text{subject to }
\quad & x^{it}_{\Left} \leq \sum_{s \in \mathcal{T}: s \leq o(d)} x^{it}_s \quad \text{for } (i,t) \in D\\
\quad & x^{it}_{\Right} \leq \sum_{s: o(d) < s  \leq t} x^{it}_s \quad \text{for } (i,t) \in D\\
\quad & y^i_s \leq y_s \quad \text{for all } i \in [N], s,t \in [T] \\
\quad & x_s^{it} \leq y_s^i \quad \text{for all } i \in [N], s,t \in [T], s\leq t \\
\quad & \sum_{s \in [s_1+1,s_2-1]}y_s \leq 1 \quad \text{for all pairs of consecutive IGOs } s_1,s_2 \in \mathcal{T}\\
\quad & \sum_{(i,t) \in D} (x^{it}_{\Right}+(1-x^{it}_{\Right})x^{it}_{\Left})w^c_d \geq W_c-R_c \quad \text{for all } c \in [C] \\
\quad & \sum_{(i,t) \in D} (x^{it}_{\Right}+(1-x^{it}_{\Right})x^{it}_{\Left})p_d \geq \sum_{d \in D} p_d-P \\
\quad & y_s = 1 \quad \text{for all } s \in \mathcal{T}\\
\quad & y_s^i,y_s,x^{it}_s, x^{it}_{\Right}, x^{it}_{\Left} \in \{0,1\} \quad \text{for all } (i,t) \in D, s \leq t,s \in [T]
\end{empheq}

\textbf{Fact:} Any integral solution to the above non-linear program that satisfies all constraints (except possibly ``$\sum_{s \in [s_1+1,s_2-1]}y_s \leq 1 \quad \text{for all pairs of consecutive IGOs } s_1,s_2 \in \mathcal{T}$", trivially corresponds to a feasible solution to $\cjrp$ of the same cost. 

\textbf{Note:} Unlike before, the continuous relaxation here will include the constraints $x^{it}_{\Right} \leq 1$ and $x^{it}_{\Left} \leq 1$ to ensure that $x^{it}_{\Right}+(1-x^{it}_{\Right})x^{it}_{\Left} \leq 1$ preventing a non-integral solution from ``cheating" the minimum weight served by serving the same demand more than once and counting its weight multiple times.

One can think of $x^{it}_{\Right} = 1$ as corresponding to the demand being serviced in the batch containing its deadline and $x^{it}_{\Left} = 1$ as corresponding to the demand being serviced at some IGO before its deadline.

\begin{lemma}
Given any non-integral solution of cost $\nlp$ to the above non-linear program for an instance of $\cjrp$ with $M$ chosen appropriately, we can round this solution to an integral solution of cost $(1+\epsilon) \nlp$ that satisfies all constraints with the exception of the ``$\sum_{s \in [s_1+1,s_2-1]} y_s \leq 1 \quad \text{for all pairs of consecutive IGOs } s_1,s_2 \in \mathcal{T}$"
\end{lemma}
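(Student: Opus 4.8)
The plan is to round in two stages, exploiting the fact that once one family of ``summary'' variables ($x^{it}_{\Right}$ or $x^{it}_{\Left}$) is frozen, the non-linear covering constraints become \emph{linear} in the other family, so each half of the instance becomes a bona fide input to one of the two rounding theorems already established. Recall also (Section on rejection penalties) that the penalty covering constraint $\sum_d(x^{it}_{\Right}+(1-x^{it}_{\Right})x^{it}_{\Left})p_d\ge \sum_d p_d-P$ is handled exactly like a $(C{+}1)$-st feature, so below ``colorwise covering constraints'' includes it.

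\emph{Stage 1 (round $x^{it}_{\Left}$ by iterative rounding).} Freeze every Instance-2 variable at its current fractional value: $y_s$ for $s\notin\mathcal{T}$, $y^i_s$ and $x^{it}_s$ for $s\notin\mathcal{T}$, and all $x^{it}_{\Right}$. With $x^{it}_{\Right}$ a constant, $x^{it}_{\Right}+(1-x^{it}_{\Right})x^{it}_{\Left}$ is affine in $x^{it}_{\Left}$ with nonnegative slope $(1-x^{it}_{\Right})\ge 0$ (using $x^{it}_{\Right}\le1$ from the relaxation), so each covering constraint is linear in $x^{it}_{\Left}$. What remains, in the variables $\{y^i_s:s\in\mathcal{T}\}$, $\{x^{it}_s:s\in\mathcal{T}\}$ and $r^d_{\Left}:=1-x^{it}_{\Left}$, is precisely an Instance-1 LP: there is a general order at every timestep of $\mathcal{T}$ since $y_s=1$ there, each feature's weight is rescaled to $(1-x^{it}_{\Right})w^c_d$, and its rejection limit is shifted accordingly; moreover this LP is augmented with the side information carried over from the up-front augmented LP, as required by Theorem~\ref{thm:iterative_general}. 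The current fractional solution is feasible for it, and its cost is at most $\nlp$, so Theorem~\ref{thm:iterative_general} yields integral $x^{it}_{\Left}$ with item-ordering cost increased by at most $\tfrac{\epsilon}{8}\nlp$, holding and penalty costs not increased, and the rescaled ``served weight'' $\sum_d(1-x^{it}_{\Right})w^c_d\,x^{it}_{\Left}$ not decreased for any $c$ --- which, by the affine-nonnegative-slope identity, is exactly the statement that the original non-linear covering constraints are preserved.

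\emph{Stage 2 (round $x^{it}_{\Right}$ by pipage).} Now freeze the integral $x^{it}_{\Left}$ produced above together with the $\{y^i_s,x^{it}_s:s\in\mathcal{T}\}$. Since $x^{it}_{\Left}$ is an integer constant, $x^{it}_{\Right}+(1-x^{it}_{\Right})x^{it}_{\Left}=x^{it}_{\Left}+(1-x^{it}_{\Left})x^{it}_{\Right}$ is affine in $x^{it}_{\Right}$ with nonnegative slope $(1-x^{it}_{\Left})$, so the covering constraints are again linear. The residual instance in $\{y_s,y^i_s,x^{it}_s:s\notin\mathcal{T}\}$ with feature weights rescaled to $(1-x^{it}_{\Left})w^c_d$ is exactly an Instance-2 LP, and the non-linear program's constraints $\sum_{s\in[s_1+1,s_2-1]}y_s\le1$ give precisely the batch partition (one batch per pair of consecutive IGOs) with each demand's service variables confined to the batch containing its deadline, as Theorem~\ref{thm: pipage_general} requires. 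Applying that theorem rounds $x^{it}_{\Right}$ at cost increase at most $\tfrac{\epsilon}{8}\nlp$; here up to $2C$ general orders may be rounded up, which is exactly why the output is permitted to violate $\sum_{s\in[s_1+1,s_2-1]}y_s\le1$. Merge the two integral partial solutions: a replenishment/item order is kept if it appears in either stage, and each $d$ takes the $x^{it}_s$ assigned by whichever stage served it. The merged solution is integral; it satisfies $y^i_s\le y_s$ and $x^{it}_s\le y^i_s$ since both theorems maintain these; and it satisfies every covering constraint because each stage preserved the corresponding non-linear quantity and, at termination, $x^{it}_{\Right}+(1-x^{it}_{\Right})x^{it}_{\Left}$ equals the 0/1 indicator that $d$ is served. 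Finally, the two rounding steps act on disjoint cost components (item orders at $\mathcal{T}$ vs.\ at $[T]\setminus\mathcal{T}$; holding costs of the two demand groups; general orders at $[T]\setminus\mathcal{T}$), so summing the two guarantees gives total cost at most $\nlp+\tfrac{\epsilon}{8}\nlp+\tfrac{\epsilon}{8}\nlp\le(1+\epsilon)\nlp$ once $M$ is chosen as in the hypotheses of the two theorems.

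\emph{Main obstacle.} The delicate point is not a computation but the bookkeeping that makes Theorems~\ref{thm:iterative_general} and \ref{thm: pipage_general} literally applicable: one must check that freezing $x^{it}_{\Right}$ (resp.\ $x^{it}_{\Left}$) produces a genuine Instance-1 (resp.\ Instance-2) LP --- in particular that the ``side information correctly guessed'' hypothesis is inherited from the up-front augmented relaxation --- and, most importantly, that rounding one half does not break the non-linear covering constraints governing the other half. This is exactly where nonnegativity of the slopes $(1-x^{it}_{\Right})$ and $(1-x^{it}_{\Left})$ --- itself relying on the relaxation including $x^{it}_{\Right},x^{it}_{\Left}\le1$, which also prevents a fractional solution from double-counting a demand's weight --- is indispensable: the two theorems only promise ``do not decrease the (rescaled) served weight'' for their rejection constraints, and the monotonicity of the affine maps is what turns this into ``do not decrease the true served weight.''
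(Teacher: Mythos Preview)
Your proposal is correct and follows essentially the same two-stage approach as the paper: freeze $x^{it}_{\Right}$ and apply iterative rounding (Theorem~\ref{thm:iterative_general}) with rescaled weights $(1-x^{it}_{\Right})w^c_d$ to integralize the Instance-1 side, then freeze the now-integral $x^{it}_{\Left}$ and apply pipage rounding (Theorem~\ref{thm: pipage_general}) on the batched Instance-2 side. One small imprecision: you assert that after Stage~1 ``holding and penalty costs [are] not increased,'' but Theorem~\ref{thm:iterative_general} only guarantees the \emph{total} Instance-1 cost is at most $(1+\tfrac{\epsilon}{8})$ times its LP value (the adjusted-pipage step can raise holding cost by up to $CH^M_{\max}$); this does not affect your final bound since the disjoint-cost-component argument still yields $(1+\tfrac{\epsilon}{8})\nlp$ overall.
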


\begin{proof}
Fix $x^{it}_{\Right}$ as constants and treat all variables corresponding to non-IGO timesteps as constants. The only variables now correspond to timesteps with IGOs and the non-trivial constraints are:
\begin{align*}
& x^{it}_{\Left} \leq \sum_{s \in \mathcal{T}: s \leq o(d)} x^{it}_s \quad \text{for } (i,t) \in D\\
& x_s^{it} \leq y_s^i \quad \text{for all } i \in [N], s,t \in \mathcal{T}, s \leq t\\
& \sum_{(i,t) \in D} (1-x^{it}_{\Right})w^c_{it} x^{it}_{\Left} \geq W_c-R_c-\sum_{(i,t) \in D} w^c_{it} x^{it}_{\Right} \quad \text{for all } c \in [C] \\
& \sum_{(i,t) \in D} (1-x^{it}_{\Right})p_{it}x^{it}_{\Left} \geq \sum_{d \in D} p_d-P - \sum_{(i,t) \in D} w^c_{it}x^{it}_{\Right}\\
& y_s^i,y_s,x^{it}_s, x^{it}_{\Left} \in \{0,1\} \quad \text{for all } (i,t) \in D, s \leq t,s \in \mathcal{T}
\end{align*}
This is equivalent to constructing a valid input for the iterative rounding algorithm where each demand point has weight $(1-x^{it}_{\Right})w^c_d$ in feature $c$ and the IGO timesteps are $\mathcal{T}$ with rejection limits $\overline{W}_c+R_c+ \sum_{d \in D} w^c_d x^d_{\Right} - W=R_c$ (where $\overline{W}_c = \sum_{d \in D} (1-x^d_{\Right})w^c_d$) and a similar adjustment for the penalty color.

This instance has a feasible LP solution $r_{it}=1-x^{it}_{\Left}$ and for all $s \in \mathcal{T}$, $\overline{y}^i_s = y^i_s, \overline{x}^{it}_s=x^{it}_s$. (remember that due to our constraints $y_s=1$ for all $s \in \mathcal{T}$). Therefore, by Theorem \ref{thm:iterative_general} we can find an integral $\overline{x},\overline{y},\overline{r}$ with total item ordering and holding cost $(1+\frac{\epsilon}{8})(\sum_{s \in \mathcal{T}} \sum_{i=1}^N K_iy^i_s + \sum_{s \in \mathcal{T}} \sum_{d \in D} H^{it}_s x^{it}_s)$.

Now, we substitute $x^{it}_s= \overline{x}^{it}_s, y^i_s = \overline{y}^i_s, x^{it}_{\Left} = \sum_{s \in \mathcal{T}, s \leq t} \overline{x}^{it}_s$.

(Note that by construction, the iterative rounding algorithm will always return a solution with $\sum_{s \in \mathcal{T}, s \leq t} \overline{x}^{it}_s \leq 1$ so $x^{it}_{\Left}$ is well defined. )

Next, we treat all variables corresponding to IGO timesteps as constants and all $x^{it}_{\Left}$ as constant. The only variables are now corresponding to non-IGO timesteps. We know that between any two successive IGOs $s_1,s_2$, $\sum_{s \in [s_1+1,s_2-1]} y_s \leq 1$. The non-trivial remaining constraints are:
\begin{align*}
& x^{it}_{\Right} \leq \sum_{s: o(d) < s  \leq t} x^{it}_s \quad \text{for } (i,t) \in D\\
& y^i_s \leq y_s \quad \text{for all } i \in [N], s,t \in [T] \\
& x_s^{it} \leq y_s^i \quad \text{for all } i \in [N], s,t \in [T], s\leq t \\
& \sum_{(i,t) \in D} (x^{it}_{\Right})w^c_d \geq W_c-R_c \quad \text{for all } c \in [C] \\
& \sum_{(i,t) \in D} (x^{it}_{\Right})p_d \geq \sum_{d \in D} p_d-P \\
& y_s^i,y_s,x^{it}_s, x^{it}_{\Right}, x^{it}_{\Left} \in \{0,1\} \quad \text{for all } (i,t) \in D, s \leq t,s \in [T]
\end{align*}
This is analogous to a valid input for the pipage rounding algorithm. The time horizon will be $\{s: s \notin \mathcal{T}, s \in [T] \}$. Notice that every demand point with $x^{it}_{\Left}=1$ is already serviced and no matter what $x^{it}_{\Right}$ is $(x^{it}_{\Right}+(1-x^{it}_{\Right})x^{it}_{\Left})=1$ and so, the set of demand points given to the pipage rounding algorithm is the set of demand points with $x^{it}_{\Left} = 0$. Each demand point will have weight $w^c_d$ and the rejection limit is $R_c$.

The starting solution will be $\overline{x} := x^{it}_s, \overline{y}_s=y_s, \overline{y}^i_s=y^i_s$ for all timesteps $s \in [T], s \notin \mathcal{T}$ and $r_{it}=1-x^{it}_{\Right}$.

We know the pipage rounding algorithm will return an integral solution $\overline{x},\overline{y},\overline{r}$ of cost $(1+\frac{\epsilon}{8})(\sum_{s \notin \mathcal{T}} K_0 y_s + \sum_{s \notin \mathcal{T}} \sum_{i=1}^N K_i y^i_s + \sum_{s \notin \mathcal{T}} \sum_{(i,t): x^{it}_{\Left}=0} H^{it}_s x^{it}_s)$

Now, we substitute $x^{it}_s = \overline{x}^{it}_s, y^i_s = \overline{y}^i_s$ and $y_s = \overline{y}_s, x^{it}_{\Right} = \sum_{ o(it)<s \leq t } x^{it}_s$.

Note that by construction, the pipage rounding algorithm will always return a solution with $\sum_{ o(it)<s \leq t } x^{it}_s \leq 1$

The penalty cost didn't increase at all because it was included as a color with a strict limit. The cost associated with opening general orders at $\mathcal{T}$ also didn't increase because of the rounding. This means we now have an integral solution which satisfies all constraints and has cost at most $(1+\frac{\epsilon}{8}) \nlp$
\end{proof}

\subsection{Deadlines: From 3 to 2.8}
The next two subsections will go over a few different ways to use the LP solution $(x^*,y^*,r^*)$ to construct a solution to the non-linear program. To determine the values of $y^i_s,x^{it}_s$ variables at IGO timesteps, we will seek to ``shift" $y^i_s$ and $x^{it}_s$ variables between two IGOs. We will often use $s_k$ to refer to an IGO and $s_{k-1}$ and $s_{k+1}$ as its' predecessor and successor IGO respectively. We will often want to look at the values of $y^i_s$ and $x^{it}_s$ on the intervals $[s_{k-1}+1,s_k]$ and $[s_k+1,s_{k+1}]$. If $s_1$ is the first and $s_L$ is the last IGO, think of $s_0=0$ and $s_{L+1}=T+1$ when defining these intervals. We have stated this purely for notational clarity; notice that $\sum_{s < s_1} y_s=0$ and $\sum_{s>s_{L}} y_s =0$ if $s_1$ is the first and $s_L$ is the last IGO.

Some of the methods in this and the next subsection will shift to both the left and the right, some will only need to scale up and shift one only to the right.

Now we show how one can construct a solution to the non-linear program of cost $2.8 \lp^{\sol}+2K_0<(2.8+\frac{\epsilon}{10}) \lp^{\sol}$

\begin{lemma}
If our holding costs are $0-\infty$, then if instead of placing IGOs when $Z_t$ reached $0,1,2.... \lfloor Z_T \rfloor, Z_T$ we placed IGOs when $Z_t$ reaches $0, \beta, 2 \beta, 3 \beta ... \lfloor \frac{Z_T}{\beta} \rfloor \beta, Z_T$ and for any IGO $s_k$ with previous IGO $s_{k-1},$ we define $\overline{y}^i_{s_k}=\frac{1}{1-\beta} \sum_{s \in [s_{k-1}+1,s_k]} y^{i*}_s$ then for $s \notin \mathcal{T}$. $y_s=y^*_s$ $y^i_s=y^{i*}_s$ and for $s \in \mathcal{T}, y_s=1, y^i_s=\overline{y}^i_s$ induces a feasible solution for the non-linear program. 
\end{lemma}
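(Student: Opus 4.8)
The plan is to construct the variables of the non-linear program that the lemma does not specify — the $x^{it}_s$, the split variables $x^{it}_{\Left},x^{it}_{\Right}$, and the $r_{it}$ — and then verify each constraint; the only substantive constraints are the two ``serve enough total weight / penalty'' rows, and I would reduce these to the per‑demand statement that in the constructed solution every $d=(i,t)$ is served to extent $\nu_d:=x^{it}_{\Right}+(1-x^{it}_{\Right})x^{it}_{\Left}\ge 1-r^*_d$. Granting that, since the starting LP solution obeys $\sum_d w^c_d r^*_d\le R_c$ and $\sum_d p_d r^*_d=P$, we get $\sum_d\nu_d w^c_d\ge\sum_d(1-r^*_d)w^c_d\ge W_c-R_c$ for every $c$, and $\sum_d\nu_d p_d\ge\sum_d p_d-P$; we then set $r_{it}=1-\nu_d\le r^*_d$. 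All remaining constraints ($y_s=1$ on $\mathcal T$, $x^{it}_s\le y^i_s$, $x^{it}_{\Left},x^{it}_{\Right}\in[0,1]$, non‑negativity, and the two constraints defining $x^{it}_{\Left},x^{it}_{\Right}$) will fall out of the construction.

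Before anything else I would normalize by splitting timesteps of the LP solution so that the running sum $Z$ attains each multiple $k\beta$ (and $Z_T$) exactly at some timestep, which we take as the IGO $s_k$; this preserves the LP cost ($\sum_s y_sK_0$, $\sum_{i,s}y^i_sK_i$, and the holding cost are all unchanged when a general/item order is split into two consecutive pieces summing to the same value) and is harmless. With this normalization, consecutive IGOs satisfy $\sum_{s\in[s_{k-1}+1,s_k]}y^*_s=Z_{s_k}-Z_{s_{k-1}}=\beta$, so $\overline{y}^i_{s_k}\le\frac{1}{1-\beta}\sum_{s\in[s_{k-1}+1,s_k]}y^*_s=\frac{\beta}{1-\beta}\le 1$ (using $\beta\le\tfrac12$, e.g.\ $\beta=\tfrac13$); combined with $y^i_s=y^{i*}_s\le y^*_s=y_s$ off $\mathcal T$ this gives $y^i_s\le y_s$ everywhere, and the same estimate on the interior gives $\sum_{s\in[s_k+1,s_{k+1}-1]}y_s<\beta\le 1$. (Without this splitting step $\overline{y}^i_{s_k}$ can exceed $1$; splitting — equivalently, capping $\overline{y}^i_{s_k}$ at $1$ — is what makes the definition legal, and I would flag this as the implicit hypothesis behind the formula $[s_{k-1}+1,s_k]$.)

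For the service bound, fix $d=(i,t)$ with zero‑cost window $I(d)=[s(i,t),t]$ and let $o(i,t)$ be the last IGO $\le t$. I would take $x^{it}_s:=y^i_s$ for $s\in I(d)$ and $0$ otherwise, $x^{it}_{\Right}:=\sum_{o(i,t)<s\le t}x^{it}_s$, and $x^{it}_{\Left}:=\min\{1,\sum_{s\in\mathcal T,\,s\le o(i,t)}x^{it}_s\}$. If $I(d)$ contains no IGO, then $I(d)$ lies strictly between two consecutive IGOs, so $\sum_{s\in I(d)}y^{i*}_s<\beta<1$, all of $I(d)$ is ``on the right'', and $\nu_d=x^{it}_{\Right}=\sum_{s\in I(d)}y^{i*}_s\ge\sum_{s\in I(d)}x^{d*}_s\ge 1-r^*_d$. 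If $I(d)$ meets IGOs $s_a\le\cdots\le s_b=o(i,t)$, set $R:=x^{it}_{\Right}=\sum_{o(i,t)<s\le t}y^{i*}_s$, which is $<\beta$ because $(o(i,t),t]$ sits inside one batch; the windows $[s_{j-1}+1,s_j]$ for $j=a,\dots,b$ tile $[s(i,t),s_b]$ (using $s_{a-1}<s(i,t)\le s_a$, so $[s_{a-1}+1,s_a]\supseteq[s(i,t),s_a]$), hence $L:=\sum_{j=a}^b\overline{y}^i_{s_j}\ge\frac{1}{1-\beta}\sum_{s\in[s(i,t),s_b]}y^{i*}_s=\frac{1}{1-\beta}\big(\textstyle\sum_{s\in I(d)}y^{i*}_s-R\big)\ge\frac{1}{1-\beta}\big((1-r^*_d)-R\big)$ (if this is negative then $R\ge 1-r^*_d$ and we are done). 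If $L\ge 1$ then $\nu_d=1$; otherwise $x^{it}_{\Left}=L$ and $\nu_d=R+(1-R)L\ge R+(1-R)\frac{(1-r^*_d)-R}{1-\beta}$, which is $\ge 1-r^*_d$ precisely because $R\le\beta$ forces $\frac{1-R}{1-\beta}\ge 1$. This last inequality is the heart of the proof: it is exactly where the scale‑up factor $\frac{1}{1-\beta}$ in $\overline{y}^i$ and the bound $\beta$ on a batch's general‑order mass must be in balance, and where the index bookkeeping (the shifted windows tiling $[s(i,t),s_b]$ with the trailing IGO included) has to be done carefully — I expect that to be the only real obstacle, the rest being routine. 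Tracking the $\frac{1}{1-\beta}$ factor and $|\mathcal T|\le Z_T/\beta+2$ then gives the cost $(1+\tfrac{1}{1-\beta})\LP^{\sol}_{\itm}+(1+\tfrac1\beta)\LP^{\sol}_{\gen}+2K_0$, i.e.\ $2.5\,\LP^{\sol}_{\itm}+4\,\LP^{\sol}_{\gen}+2K_0$ at $\beta=\tfrac13$, which combined with the earlier $2\LP^{\sol}_{\gen}+3\LP^{\sol}_{\itm}$ guarantee is the quoted $2.8+\epsilon$.
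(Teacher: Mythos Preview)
Your proof is correct and follows essentially the same route as the paper: both reduce feasibility of the nonlinear program to the per-demand bound $x^{it}_{\Right}+(1-x^{it}_{\Right})x^{it}_{\Left}\ge 1-r^*_{it}$ and establish it by the identical calculation (the right-of-$o(it)$ mass is at most $\beta$, so the $\tfrac{1}{1-\beta}$ scale-up on the left exactly compensates). Your write-up is in fact more careful than the paper's on a few technicalities --- the timestep-splitting to guarantee $\overline{y}^i_{s_k}\le 1$, the explicit restriction of all sums to the zero-cost window $I(d)$, and the separate case where $I(d)$ contains no IGO --- but these are refinements of the same argument, not a different approach.
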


\begin{proof}
The only constraint that is not trivially satisfied is the rejection weight limit constraint. In the same way as earlier, we will define $x^{it}_s = y^i_s$ if $H^{it}_s=0$ and $0$ otherwise.
In this solution, $x^{it}_{\Left}=\min(1, \frac{1}{1-\beta} \sum_{s \leq o(it)} y^{i*}_s)$ and $x^{it}_{\Right}=\min(1, \sum_{s > o(it)} y^i_s)$.
We claim that $x^{it}_{\Right}+(1-x^{it}_{\Right})x^{it}_{\Left} \geq (1-r^{*}_{it})$. Notice that if the claim is true then the rejection limit/minimum weight serviced constraint must be satisfied.

If $x^{it}_{\Right}=1$ or $x^{it}_{\Left}$ then $x^{it}_{\Right}+(1-x^{it}_{\Right})x^{it}_{\Left} = 1 \geq 1-r^*_{it}$.

Therefore, for any demand point, 
\begin{align*}
x^{it}_{\Right}+(1-x^{it}_{\Right})x^{it}_{\Left}= & \sum_{s \leq o(it)} y^i_s+ \sum_{s > o(it)} y^i_s - (\sum_{s \leq o(it)} y^i_s)(\sum_{s > o(it)} y^i_s) \\ = & \frac{1}{1-\beta}\sum_{s \leq o(it)} y^{i*}_s + \sum_{s > o(it)} y^{i*}_s - \frac{1}{1-\beta}(\sum_{s \leq o(it)} y^{i*}_s)(\sum_{s > o(it)} y^{i*}_s) \\ \geq & \frac{1}{1-\beta}\sum_{s \leq o(it)} y^{i*}_s + \sum_{s > o(it)} y^{i*}_s -\frac{1}{1-\beta}\beta(\sum_{s \leq o(it)} y^{i*}_s)  \\
= & \sum_{s \leq o(it)} y^{i*}_s)+ \sum_{s > o(it)} y^{i*}_s \\
\geq & 1-r^*_{it}
\end{align*}

where the first inequality arises because $\sum_{s > o(it)} y^{i}_s \leq \sum_{o(it)<s \leq t} y_s \leq \beta$ (because IGOs were placed in $\beta-$sized increments of $Z_t$).

\end{proof}

Notice that the cost of the resultant solution due to this approach will be at most $(1+\frac{\epsilon}{6})((\frac{1}{\beta}+1) \LP^{\sol}_{\gen} + (\frac{1}{1-\beta}+1)\LP^{\sol}_{\itm})$ (it is $(1+\frac{\epsilon}{6})$ instead of $1+\frac{\epsilon}{8}$ because the cost of the general orders in $\mathcal{T}$ is at most $LP^{\sol}_{\gen}+2K_0$).

For similar reasons as in the monochromatic case, the algorithm described in section 5 for multiple colors gives us a solution of cost $(2+\epsilon)\LP^{\sol}_{\gen}+(3+\epsilon)\LP^{\sol}_{\itm} $

Choosing $\beta=\frac{1}{3}$ (yielding a $(4+\epsilon) \LP^{\sol}_{\gen}+(2.5+\epsilon) \LP^{\sol}_{\itm}$ and taking the best of these two algorithms gives us $2.8+\epsilon$ approximation.

\subsection{General Holding Costs: Two ways to generate a good non-integral solution}
Now we turn our attention to general holding costs. To simplify notation somewhat, we make the assumption that $\sum_{s \leq } x^{it*}_s = 1-r^*_{it}$ because otherwise we could decrease some $x^{it*}_s$ without increasing cost. We will present two slightly different methods to generate non-integral solutions, each with their own bounds on different components of the cost. Each will be parameterized by $\beta$. The optimal choice of $\beta$ leads to the desired factor.

The first approach is quite straightforward, like in the deadlines case, instead of shifting to the previous and the next IGO, we scale by $\frac{1}{1-\beta}$ and shift to the next IGO.

\begin{lemma}
If we have general holding costs, then if instead of placing IGOs when $Z_t$ reached $0,1,2.... \lfloor Z_T \rfloor, Z_T$ we placed IGOs when $Z_t$ reaches $0, \beta, 2 \beta, 3 \beta ... \lfloor \frac{Z_T}{\beta} \rfloor \beta, Z_T$ and for any IGO $s_k$ with previous IGO $s_{k-1},$ we define $\overline{y}^i_{s_k}=\frac{1}{1-\beta} \sum_{s \in [s_{k-1}+1,s_k]} y^{i*}_s$ and we similarly define $\overline{x}^{it}_{s_k} = \frac{1}{1-\beta} \sum_{s \in [s_{k-1}+1,s_k]} x^{it*}_s$ and for $s \notin \mathcal{T}$. $y_s=y^*_s$ $y^i_s=y^{i*}_s$ $x^{it}_s=x^{it*}_s$ and for $s \in \mathcal{T}, y_s=1, y^i_s=\overline{y}^i_s, x^{it}_s = \overline{x}^{it*}_s$ induces a feasible solution for the non-linear program. This solution will have cost $(\frac{1}{\beta}+1) \LP^{\sol}_{\gen} + (\frac{1}{1-\beta}+1) \LP^{\sol}_{\itm} + \frac{1}{1-\beta} \LP^{\sol}_{\hold} + 2K_0$
\end{lemma}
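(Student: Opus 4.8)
The plan is to write down the claimed assignment explicitly, check every constraint of the non-linear program, and then bound the four objective terms one at a time. Let $\mathcal{T}=\{s_1<\dots<s_L\}$ be the IGO timesteps obtained from $Z_t=\sum_{s\le t}y^{*}_s$ (set $s_0=0$), and call the half-open intervals $(s_{k-1},s_k]$ the \emph{batches}. The assignment: $y_s=1$ for $s\in\mathcal{T}$ and $y_s=y^{*}_s$, $y^i_s=y^{i*}_s$ for $s\notin\mathcal{T}$; at each IGO, $y^i_{s_k}=\min\{1,\tfrac{1}{1-\beta}\sum_{s\in(s_{k-1},s_k]}y^{i*}_s\}$. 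For a demand $(i,t)$ with last IGO $o(it)\le t$, I would \emph{retain} the original service inside the batch containing $t$, i.e.\ $x^{it}_s=x^{it*}_s$ for $s\in(o(it),t]$ (these are non-IGO timesteps, since the next IGO is $>t$), \emph{discard} $x^{it}_s$ at all other non-IGO timesteps, and \emph{relocate} the earlier service onto the IGOs via $x^{it}_{s_k}=\min\{1,\tfrac{1}{1-\beta}\sum_{s\in(s_{k-1},s_k]}x^{it*}_s\}$ for $s_k\le o(it)$; finally $x^{it}_{\Right}=\sum_{o(it)<s\le t}x^{it}_s$ and $x^{it}_{\Left}=\sum_{s\in\mathcal{T},\,s\le o(it)}x^{it}_s$. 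The cap at $1$ is there because the rescaling by $1/(1-\beta)$ can push a variable above $1$; it only lowers cost, and only helps feasibility, as noted below.

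For feasibility, $y^i_s\le y_s$ and $x^{it}_s\le y^i_s$ hold because $x^{it*}_s\le y^{i*}_s$ and this survives the common rescaling and the common cap. For consecutive IGOs $s_1,s_2$ we get $\sum_{s\in(s_1,s_2)}y_s=Z_{s_2-1}-Z_{s_1}<\beta\le 1$, since $s_2$ is placed at the first time $Z$ reaches the slot just above the largest slot attained at $s_1$; this gives the per-batch bound (and is exactly what the rounding lemma needs). The box constraints $x^{it}_{\Left},x^{it}_{\Right}\le 1$ hold by the definition of $x^{it}_{\Left}$ as a min with $1$ and because $x^{it}_{\Right}\le\sum_{o(it)<s\le t}y^{*}_s<\beta<1$. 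The only substantive constraint is the weighted-service one: writing $A=\sum_{s\le o(it)}x^{it*}_s$, $B=\sum_{o(it)<s\le t}x^{it*}_s$, we have $A+B=1-r^{*}_{it}$ by the simplifying (leanness) assumption and $B<\beta$ by the IGO spacing. I claim $x^{it}_{\Right}+(1-x^{it}_{\Right})x^{it}_{\Left}\ge 1-r^{*}_{it}$: if some relocated IGO variable hit the cap then $x^{it}_{\Left}=1$ and the left side is $1$; otherwise $x^{it}_{\Left}=A/(1-\beta)$, $x^{it}_{\Right}=B$, and the inequality reduces to $\tfrac{A(1-B)}{1-\beta}\ge A$, which holds since $B\le\beta$ — the same one-line computation as in the deadlines case. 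Multiplying by $w^c_d$ (resp.\ $p_d$), summing over demands, and using $\sum_d w^c_d r^{*}_d\le R_c$ (resp.\ $P=\sum_d p_d r^{*}_d$) yields the color constraints (resp.\ the penalty constraint); the rest ($y_s=1$ on $\mathcal{T}$, nonnegativity) is immediate.

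For the cost I would bound the four terms. General ordering: $\sum_s y_sK_0=|\mathcal{T}|K_0+\sum_{s\notin\mathcal{T}}y^{*}_sK_0\le(\lfloor Z_T/\beta\rfloor+2)K_0+Z_TK_0\le(\tfrac{1}{\beta}+1)\LP^{\sol}_{\gen}+2K_0$, using $\LP^{\sol}_{\gen}=Z_TK_0$ and that the number of distinct IGOs is at most the number of slots. Item ordering: the batches $(s_{k-1},s_k]$ partition $[1,s_L]$ and $y^{i*}_s=0$ for $s>s_L$, so the IGO part is $\le\tfrac{1}{1-\beta}\sum_iK_i\sum_sy^{i*}_s=\tfrac{1}{1-\beta}\LP^{\sol}_{\itm}$ and the non-IGO part is $\le\LP^{\sol}_{\itm}$, giving $(\tfrac{1}{1-\beta}+1)\LP^{\sol}_{\itm}$. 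Holding: for each demand the retained part contributes $\sum_{o(it)<s\le t}H^{it}_sx^{it*}_s$, and each relocated IGO $s_k\le o(it)$ contributes $H^{it}_{s_k}x^{it}_{s_k}\le\tfrac{1}{1-\beta}\sum_{s\in(s_{k-1},s_k]}H^{it}_sx^{it*}_s$ since $H^{it}_{\cdot}$ is nonincreasing and every $s$ in that batch is $\le s_k$; summing these over $k$ telescopes to $\le\tfrac{1}{1-\beta}\sum_{s\le o(it)}H^{it}_sx^{it*}_s$, so the total for the demand is $\le\tfrac{1}{1-\beta}\sum_{s\le t}H^{it}_sx^{it*}_s$ (using $\tfrac{1}{1-\beta}\ge 1$ on the retained part), and summing over demands gives $\tfrac{1}{1-\beta}\LP^{\sol}_{\hold}$. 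Adding the pieces gives the stated bound; the penalty term does not increase since we only shrink $r^{*}$.

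I expect the delicate point to be the holding-cost accounting: the construction must retain the ``right'' piece of each demand's service and relocate only the ``left'' piece — not keep both — otherwise the holding factor degrades to $\tfrac{1}{1-\beta}+1$ instead of $\tfrac{1}{1-\beta}$; and one must check that capping variables at $1$ after the $1/(1-\beta)$ rescaling breaks neither $x^{it}_s\le y^i_s$ nor the weighted-service constraint (it does not, because a capped relocated variable already certifies $x^{it}_{\Left}=1$). The weighted-service inequality and all other constraints are routine.
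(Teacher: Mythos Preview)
Your proof is correct and follows essentially the same route as the paper: the key feasibility step is the same one-line inequality $B+(1-B)\tfrac{A}{1-\beta}\ge A+B$ (using $B\le\beta$ from the IGO spacing), and the holding-cost bound is obtained the same way, by pushing each batch's relocated service to its right endpoint and using monotonicity of $H^{it}_\cdot$. If anything, you are slightly more careful than the paper: your explicit discarding of the non-IGO $x^{it}_s$ values for $s\le o(it)$ is exactly what is needed for the $\tfrac{1}{1-\beta}$ (rather than $1+\tfrac{1}{1-\beta}$) holding-cost factor, whereas the lemma statement keeps $x^{it}_s=x^{it*}_s$ on all non-IGO timesteps but the paper's own proof then silently drops those contributions when bounding $H^{it}_{\Left}$.
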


\begin{proof}
Once again, the only constraint which is not trivially satisfied is the rejection weight limit constraint. In the same manner as the previous lemma, we define $x^{it}_{\Right}= \min(1, \sum_{o(it)<s\leq t} x^{it}_s)$ and $x^{it}_{\Left}= \min(1,\frac{1}{1-\beta} \sum_{s \leq o(it). s \in \mathcal{T}} x^{it*}_s$. We will seek to show that $x^{it}_{\Right}+(1-x^{it}_{\Right})x^{it}_{\Left} \geq 1- r^{*}_{it}$. If either $x^{it}_{\Right}=1$ or $x^{it}_{\Left}=1$, then the desired inequality is true because the LHS is 1.

Otherwise: 

\begin{align*}
x^{it}_{\Right}+(1-x^{it}_{\Right})x^{it}_{\Left}= & \sum_{s \leq o(it)} x^{it}_s+ \sum_{s > o(it)} x^{it}_s - (\sum_{s \leq o(it)} x^{it}_s)(\sum_{s > o(it)} x^{it}_s) \\ = & \frac{1}{1-\beta}\sum_{s \leq o(it)} x^{it*}_s + \sum_{s > o(it)} x^{it*}_s - \frac{1}{1-\beta}(\sum_{s \leq o(it)} x^{it*}_s)(\sum_{s > o(it)} x^{it*}_s) \\ \geq & \frac{1}{1-\beta}\sum_{s \leq o(it)} x^{it*}_s + \sum_{s > o(it)} x^{it*}_s -\frac{1}{1-\beta}\beta (\sum_{s \leq o(it)} x^{it*}_s) \\
= & \sum_{s \leq o(it)} x^{it*}_s+ \sum_{s > o(it)} x^{it*}_s \\
\geq & 1-r^*_{it}
\end{align*}

where the first inequality arises because $\sum_{s > o(it)} x^{it}_s \leq \sum_{o(it)<s \leq t} y_s \leq \beta$ (because IGOs were placed in $\beta-$sized increments of $Z_t$).

Now we try to bound the costs: The general ordering costs and item ordering costs can bounded in the same way as in the previous subsection. To bound the holding costs, we can split the holding cost associated with any demand point into $H^{it}_{\Left}=\sum_{s \leq o(it)} H^{it}_s x^{it}_s$ and  $H^{it}_{\Right}=\sum_{s > o(it)} H^{it}_s x^{it}_s$. We can define $H^{it*}_{\Left}$ and $H^{it*}_{\Right}$ analgously using $x^*$.

Notice that for any demand point, $H^{it*}_{\Right}= H^{it}_{\Right}$. Moreover, $H^{it}_{\Left} = \sum_{s \leq o(it): s \in \mathcal{T}} H^{it}_s x^{it*}_s \leq \frac{1}{1-\beta}\sum_{s_k \leq o(it): s_k \in \mathcal{T}} H^{it}_{s_k} \sum_{s \in [s_{k-1}+1,s_k]} x^{it*}_{s} \leq \frac{1}{1-\beta} \sum_{s_k \leq o(it): s_k \in \mathcal{T}} \sum_{s \in [s_{k-1}+1,s_k]} H^{it}_s x^{it*}_{s} \leq \frac{1}{1-\beta} H^{it*}_{\Left}$

Therefore, $\sum_{(i,t) \in D} H^{it}_{\Right}+H^{it}_{\Left} \leq \frac{1}{1-\beta}\sum_{(i,t) \in D} H^{it*}_{\Right} + H^{it*}_{\Left} = \frac{1}{1-\beta} \LP^{\sol}_{\hold}$ giving us the desired result.

\end{proof}

The second approach is very similar to the one outlined in section 5. We shift item orders to both the previous and the next IGOs. However, we only shift $x$ variables forward to the next IGO. This time, we do not scale by $\frac{1}{1-\beta}$. Instead, we also shift $x$ variables backwards, but only from the batch containing the deadline of the demand point. Moreover, we shift back the necessary amount to ensure that $x^{it}_{\Right}+(1-x^{it}_{\Right})x^{it}_{\Left} \geq 1-r^*_{it}$

\begin{lemma}
If we have general holding costs, then if instead of placing IGOs when $Z_t$ reached $0,1,2.... \lfloor Z_T \rfloor, Z_T$ we placed IGOs when $Z_t$ reaches $0, \beta, 2 \beta, 3 \beta ... \lfloor \frac{Z_T}{\beta} \rfloor \beta, Z_T$ and for any IGO $s_k$ with previous IGO $s_{k-1}$ and next IGO $s_{k+1}$ we define $\overline{y}^i_{s_k}= \sum_{s \in [s_{k-1}+1,s_{k+1}]} y^{i*}_s$ and define $\overline{x}^{it}_{s_k} = \sum_{s \in [s_{k-1}+1,s_k]} x^{it*}_s$ with one exception; let $s_k=o(it)$ and then we define $\overline{x}^{it}_{o(it)} = \sum_{s \in [s_{k-1}+1,o(it)]} x^{it*}_s + \min(\sum_{s \in [o(it)+1,t]} x^{it*}_s, \frac{\beta}{1-\beta} \sum_{s \leq o(it)} x^{it*}_s)$. For $s \notin \mathcal{T}$. $y_s=y^*_s$ $y^i_s=y^{i*}_s$ $x^{it}_s=x^{it*}_s$ and for $s \in \mathcal{T}, y_s=1, y^i_s=\overline{y}^i_s, x^{it}_s = \overline{x}^{it*}_s$ induces a feasible solution for the non-linear program. This solution will have cost $(\frac{1}{\beta}+1) \LP^{\sol}_{\gen} + 3\LP^{\sol}_{\itm} + \frac{1}{1-\beta} \LP^{\sol}_{\hold} + 2K_0$
\end{lemma}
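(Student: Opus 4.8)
The plan is to follow the template of the two preceding lemmas: exhibit the constructed $(x,y,r)$ together with a suitable choice of the auxiliary variables $x^{it}_{\Left},x^{it}_{\Right}$, check that every constraint of the non-linear program holds — the only non-routine one being the colorwise ``minimum weight served'' constraint and its penalty analogue — and then bound the general-ordering, item-ordering, and holding cost separately. For a demand point $(i,t)$ put $L^*:=\sum_{s\le o(it)}x^{it*}_s$ and $R^*:=\sum_{o(it)<s\le t}x^{it*}_s$, so $L^*+R^*=1-r^*_{it}$ by the normalization assumed at the start of the subsection. Since no IGO lies in $(o(it),t]$, in the new solution the ``right'' sum is untouched, and telescoping the definition of $\overline{x}^{it}_{s_k}$ gives $\sum_{s\in\mathcal{T},\,s\le o(it)}\overline{x}^{it}_s=L^*+\min\!\bigl(R^*,\tfrac{\beta}{1-\beta}L^*\bigr)$. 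I would then take $x^{it}_{\Right}=R^*$ and $x^{it}_{\Left}=\min\!\bigl(1,\,L^*+\min(R^*,\tfrac{\beta}{1-\beta}L^*)\bigr)$ — both respecting the program's upper bounds, the cap at $1$ being exactly the ``Note'' above — and it then suffices to prove, for every demand point, the single inequality $x^{it}_{\Right}+(1-x^{it}_{\Right})x^{it}_{\Left}\ge 1-r^*_{it}$; weighting it by $w^c_d$ and by $p_d$ and summing over $D$ yields both the color constraints and the penalty constraint.

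For that inequality I would case-split. If $x^{it}_{\Left}=1$ the left side is $1\ge 1-r^*_{it}$. If $R^*\le\tfrac{\beta}{1-\beta}L^*$ then $x^{it}_{\Left}=L^*+R^*=1-r^*_{it}$ and $x^{it}_{\Right}+(1-x^{it}_{\Right})x^{it}_{\Left}=x^{it}_{\Left}+x^{it}_{\Right}(1-x^{it}_{\Left})\ge x^{it}_{\Left}$ finishes it. The remaining case — $x^{it}_{\Left}=\tfrac{L^*}{1-\beta}<1$ and $R^*>\tfrac{\beta}{1-\beta}L^*$ — is the crux: the left side is $R^*+(1-R^*)\tfrac{L^*}{1-\beta}$, and the claim reduces (trivially when $L^*=0$; otherwise dividing by $L^*$) to $(1-R^*)/(1-\beta)\ge 1$, i.e. $R^*\le\beta$. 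This is precisely where the $\beta$-spacing of the IGOs is used: because $o(it)$ is the last IGO before $t$, the partial sum $Z_t$ has not yet reached the next multiple of $\beta$ beyond $Z_{o(it)}$, hence $R^*\le\sum_{o(it)<s\le t}y^*_s=Z_t-Z_{o(it)}<\beta$ (using $x^{it*}_s\le y^*_s$).

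The cost bounds are bookkeeping. The new IGOs number at most $\lfloor Z_T/\beta\rfloor+2$, costing at most $\tfrac1\beta\LP^{\sol}_{\gen}+2K_0$; adding the retained $y^*_s$ off $\mathcal{T}$ gives $(\tfrac1\beta+1)\LP^{\sol}_{\gen}+2K_0$. For item orders, shifting each $y^{i*}_s$ to both the previous and the next IGO counts it at most twice, so $\sum_{s_k\in\mathcal{T}}\overline{y}^i_{s_k}K_i\le 2\LP^{\sol}_{\itm}$; with the retained $y^{i*}_s$ this totals $3\LP^{\sol}_{\itm}$. For holding cost I would argue per demand point, splitting the contribution at $o(it)$: the part in $(o(it),t]$ is unchanged and equals $H^{it*}_{\Right}$, while for $s_k<o(it)$ monotonicity of the holding cost gives $H^{it}_{s_k}\overline{x}^{it}_{s_k}\le\sum_{s\in[s_{k-1}+1,s_k]}H^{it}_sx^{it*}_s$, and at $s_k=o(it)$ the extra shifted-back mass contributes at most $H^{it}_{o(it)}\cdot\tfrac{\beta}{1-\beta}\sum_{s\le o(it)}x^{it*}_s\le\tfrac{\beta}{1-\beta}H^{it*}_{\Left}$; summing, the ``left'' part is at most $\tfrac{1}{1-\beta}H^{it*}_{\Left}$. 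Over all demands the holding cost is at most $\tfrac{1}{1-\beta}\LP^{\sol}_{\hold}$, and the three estimates combine to the stated total.

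The main obstacle is calibrating the shift-back amount $\min(R^*,\tfrac{\beta}{1-\beta}L^*)$ so that it is simultaneously large enough that $x^{it}_{\Right}+(1-x^{it}_{\Right})x^{it}_{\Left}$ clears $1-r^*_{it}$ (which forces the coefficient $\tfrac{\beta}{1-\beta}$, in concert with the batch-size bound $R^*<\beta$) and small enough that the induced holding-cost increase at $o(it)$ stays inside the $\tfrac{1}{1-\beta}$ budget (which the ``$\min$'' with $R^*$ enforces). Making these two requirements align is the one genuinely delicate point; everything else is essentially the same computation as in the previous two lemmas.
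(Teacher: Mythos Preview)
Your approach is essentially the same as the paper's: the same two-case split on the $\min$ for feasibility, the same use of the batch bound $R^*\le\beta$ in the hard case, and the same per-term accounting for the cost. Your holding-cost estimate is in fact cleaner than the paper's --- it correctly isolates the extra shift-back mass at $o(it)$ and bounds it by $\tfrac{\beta}{1-\beta}H^{it*}_{\Left}$ via monotonicity, whereas the paper's corresponding paragraph appears to have been recycled from the previous lemma and does not literally match this construction, though it lands on the same bound.
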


\begin{proof}
First, notice that $\min(\sum_{s \in [o(it)+1,t]} x^{it*}_s, \frac{\beta}{1-\beta} \sum_{s \leq o(it)} x^{it*}_s) \leq \sum_{s \in [o(it)+1,t]} x^{it*}_s$ and therefore $\sum_{s \leq t} x^{it}_s \leq 1$

As before, the only constraint which is not trivially satisfied is the rejection weight limit constraint. In the same manner as the previous lemma, we define $x^{it}_{\Right}= \sum_{o(it)<s\leq t} x^{it}_s$ and $x^{it}_{\Left}= \sum_{s \leq o(it). s \in \mathcal{T}} x^{it}_s$. We will seek to show that $x^{it}_{\Right}+(1-x^{it}_{\Right})x^{it}_{\Left} \geq 1- r^{*}_{it}$. 

We must consider two cases:

\textbf{Case 1:} $\sum_{s \in [o(it)+1,t]} x^{it*}_s \leq \frac{\beta}{1-\beta} \sum_{s \leq o(it)} x^{it*}_s$. In this case: $x^{it}_{\Left}= \sum_{s \leq t} x^{it*}_s=1-r_{it}^*$ which trivially implies that $x^{it}_{\Right} + (1-x^{it}_{\Right}) x^{it}_{\Left} \geq 1-r_{it}^*$.

\textbf{Case 2:} $\sum_{s \in [o(it)+1,t]} x^{it*}_s > \frac{\beta}{1-\beta} \sum_{s \leq o(it)} x^{it*}_s$. In this case, $x^{it}_{\Left} = \frac{1}{1-\beta} \sum_{s \leq o(it)} x^{it}_s$. In a manner very similarly to the previous lemma, we can conclude that:

\begin{align*}
x^{it}_{\Right}+(1-x^{it}_{\Right})x^{it}_{\Left}= & \sum_{s \leq o(it)} x^{it}_s+ \sum_{s > o(it)} x^{it}_s - (\sum_{s \leq o(it)} x^{it}_s)(\sum_{s > o(it)} x^{it}_s) \\ = & \frac{1}{1-\beta}\sum_{s \leq o(it)} x^{it*}_s + \sum_{s > o(it)} x^{it*}_s - \frac{1}{1-\beta}(\sum_{s \leq o(it)} x^{it*}_s)(\sum_{s > o(it)} x^{it*}_s) \\ \geq & \frac{1}{1-\beta}\sum_{s \leq o(it)} x^{it*}_s + \sum_{s > o(it)} x^{it*}_s -\frac{1}{1-\beta}\beta (\sum_{s \leq o(it)} x^{it*}_s) \\
= & \sum_{s \leq o(it)} x^{it*}_s+ \sum_{s > o(it)} x^{it*}_s \\
\geq & 1-r^*_{it}
\end{align*}

where the first inequality arises because $\sum_{s > o(it)} x^{it}_s \leq \sum_{o(it)<s \leq t} y_s \leq \beta$ (because IGOs were placed in $\beta-$sized increments of $Z_t$).

Therefore, the solution outlined in the lemma is feasible for the non-linear program.

Now we try to bound the costs: The general ordering costs can bounded in the same way as in the previous subsection. The item ordering costs are trivially at most thrice that in $(x^*,y^*,r^*)$ because all we did was shift extra copies of item orders between two IGOs to both the left and right.

To bound the holding costs, we can split the holding cost associated with any demand point into $H^{it}_{\Left}=\sum_{s \leq o(it)} H^{it}_s x^{it}_s$ and  $H^{it}_{\Right}=\sum_{s > o(it)} H^{it}_s x^{it}_s$. We can define $H^{it*}_{\Left}$ and $H^{it*}_{\Right}$ analgously using $x^*$.

Notice that for any demand point, $H^{it*}_{\Right}= H^{it}_{\Right}$. Moreover, $H^{it}_{\Left} = \sum_{s \leq o(it): s \in \mathcal{T}} H^{it}_s x^{it*}_s \leq \frac{1}{1-\beta}\sum_{s_k \leq o(it): s_k \in \mathcal{T}} H^{it}_{s_k} \sum_{s \in [s_{k-1}+1,s_k]} x^{it*}_{s} \leq \frac{1}{1-\beta} \sum_{s_k \leq o(it): s_k \in \mathcal{T}} \sum_{s \in [s_{k-1}+1,s_k]} H^{it}_s x^{it*}_{s} \leq \frac{1}{1-\beta} H^{it*}_{\Left}$

Therefore, $\sum_{(i,t) \in D} H^{it}_{\Right}+H^{it}_{\Left} \leq \frac{1}{1-\beta}\sum_{(i,t) \in D} H^{it*}_{\Right} + H^{it*}_{\Left} = \frac{1}{1-\beta} \LP^{\sol}_{\hold}$ giving us the desired result.

\end{proof}

We find the solution in the former of the last two lemmas with $\beta=\frac{\sqrt{\frac{f}{1-f}}}{1+\sqrt{\frac{f}{1-f}}}$ where $f=\frac{\LP^{\sol}_{\gen}}{\LP^{\sol}}$.

We then find the solution from the latter of the last two lemmas with $\beta = \frac{\sqrt{\frac{f}{1-f}}}{1+\sqrt{\frac{f}{1-f}}}$ where $f=\frac{\LP^{\sol}_{\gen}}{\LP^{\sol}_{\gen}+\LP^{\sol}_{\hold}}$

The best of these two solutions will have cost at most $\frac{1}{2}(3\sqrt{5}-1) \LP^{\sol} + 2K_0$.

\subsection{General Holding Costs: Obtaining the Improved Constant}
To simplify calculations, we will neglect the $2K_0$ term. We will use $a,b,c$ to refer to $\frac{\LP^{\sol}_{\gen}}{\LP^{\sol}},\frac{\LP^{\sol}_{\itm}}{\LP^{\sol}},\frac{\LP^{\sol}_{\hold}}{\LP^{\sol}}$. We will also call the generated solution an $(\lambda,\mu,\omega)-$approximation if it costs $\lambda \LP^{\sol}_{\gen}+ \mu \LP^{\sol}_{\itm} + \omega \LP^{\sol}_{\hold} (+2K_0)$.

The first of the two previous lemmas gives us a $(\frac{1}{\beta}+1,\frac{1}{1-\beta}+1,\frac{1}{1-\beta})-$approximate solution and the second gives us a $(\frac{1}{\beta}+1,3,\frac{1}{1-\beta})-$approximate solution

The first of these, with the optimal choice of $\beta$, $\beta=\frac{\sqrt{\frac{a}{1-a}}}{1+\sqrt{\frac{a}{1-a}}}$ leads to a $f_1(a,c):=2+2\sqrt{a(1-a)} -c$ approximation factor.

The second of these, with the optimal choice of $\beta, \beta=\frac{\sqrt{\frac{a}{c}}}{1+\sqrt{\frac{a}{c}}}$ leads to an approximation factor of $2 + 2\sqrt{ac}+b-c=3+ 2\sqrt{ac} -a-2c := f_2(a,c)$.

Notice that both of these are continuous functions of $a,c$. The best approximation factor we can obtain is $\max_{a,c} \min(f_1(a,c),f_2(a,c))$. The maximizer is either at an extreme point of the region of possibilities ($a,c \geq 0, a+c \leq 1$) or at a point where the gradient of the smaller function is $0$ or at some $a,c$ where $f_1(a,c)=f_2(a,c)$. 

Some calculation reveals that the maximizer is when $f_1(a,c)=f_2(a,c)$. This implies that $2\sqrt{a(1-a)}-2\sqrt{ac}=1-a-c$. Solving for $c$, besides the degenerate solutions $a=1,c=0$ and $c=1,a=0$ (both of which yield an approximation factor of $2$ or less), we get another solution; when $0.2 \leq a \leq 0.5)$, $c=3a-4\sqrt{a(1-a)}+1$. Substituting this in and finding the maximum value gives us $\frac{1}{2}(3\sqrt{5}-1)$

\bibliographystyle{splncs04}
\bibliography{base}
\end{document}